%% file: main.tex
\documentclass[11pt,letterpaper]{article}

\usepackage[T1]{fontenc}
\usepackage{paper_style}
\usepackage{macros}

\newif\ifanonymous
\anonymousfalse

\title{Tight Lower Bounds for State Tomography\\
with Limited Entanglement}

\ifanonymous
  \author{}
\else
  \author{Ufuk Keskin\\
MIT\\
\texttt{
ufukkesk@mit.edu}
\and
Jason Luo\\
MIT\\
\texttt{luojason@mit.edu}
\and
Mahbod Majid\thanks{Supported by the 2026 Apple Scholars in AI/ML PhD fellowship.}\\
MIT\\
\texttt{mahbod@mit.edu}
\and
Matthew Radzihovsky\\
MIT\\
\texttt{mattradz@mit.edu}}
\fi
\date{}

\hypersetup{
  pdftitle={Tight Lower Bounds for State Tomography with Limited Entanglement},
}

\begin{document}

\hypersetup{pageanchor=false}
\maketitle

\begin{abstract}
We study state tomography when each measurement acts on at most $k$
fresh copies and no quantum memory is retained between blocks.  We prove a
lower bound matching the upper bound in \cite{PSW26}.  Thus the copy complexity
of estimating an arbitrary $d$-dimensional state to trace distance $\eps$ is,
up to absolute constant factors,
$\max\{d^3/(\sqrt{k}\eps^2),d^2/\eps^2\}$ for every $k$ and all sufficiently
small $\eps$.  This removes
the earlier restriction that $k$ be small as a function of the accuracy.  The
lower bound applies to arbitrary measurements within each block and adaptive
choices between blocks.  The lower bound already applies in a small neighborhood
of any state whose smallest eigenvalue is of order $1/d$, even when the center is
known.

The main ingredient is a uniform Fisher information bound for one measurement
block that depends only on the smallest eigenvalue of the state.  The proof
avoids the perturbative expansion responsible for the restriction in
\cite{CLL24}.  Fano's inequality for metric balls and a log-Sobolev comparison
between mutual and Fisher information then reduce the adaptive protocol to
this block bound \cite{XuRaginsky17,ALPC19}.
\end{abstract}

\thispagestyle{empty}
\clearpage

\microtypesetup{protrusion=false}
\thispagestyle{empty}
\tableofcontents
\thispagestyle{empty}
\microtypesetup{protrusion=true}
\clearpage

\setcounter{page}{1}
\hypersetup{pageanchor=true}

\input{sections/introduction}
\input{sections/technical_overview}
\input{sections/related_work}
\input{sections/preliminaries}
\input{sections/main_lower_bound}
\input{sections/anti_concentration}
\input{sections/global_information}
\input{sections/block_fisher_information}

\noindent\textbf{Statement on AI use.}
The main ideas and proof strategies were developed with assistance from OpenAI
Codex using the GPT-5.6 Sol model.  The authors studied, refined, simplified,
and verified the resulting material and take responsibility for every claim,
proof, and citation.

\clearpage
\begingroup
\small
\bibliographystyle{alpha}
\bibliography{references}
\endgroup

\end{document}

%% file: sections/introduction.tex
\section{Introduction}
\label{sec:introduction}

State tomography is the task of learning a
density operator describing an unknown quantum state from identically prepared copies of the quantum state. The copy complexity of state tomography
depends not only on the number of state parameters, but also on how many copies may
be measured jointly.  A quantum measurement acting jointly on several copies is
called a collective measurement.  In the unrestricted version of this model,
one joint POVM may act on all available copies, and the copy complexity is of
order $d^2/\eps^2$ for learning an arbitrary $d$-dimensional mixed state to
trace distance $\eps$~\cite{HHJWWY16,OW16}.  At the other extreme, a procedure
restricted to measuring one copy at a time requires order $d^3/\eps^2$
copies, even if its measurements are chosen
adaptively~\cite{GKKT20,CHHLLS23}.

These endpoints motivate the intermediate model in which each measurement
acts on at most $k$ copies and the protocol retains only classical information
between blocks of measurements.  The problem is to determine how the copy complexity interpolates between
single-copy and unrestricted joint measurements.

In \cite{CLL24}, Chen, Li, and Liu established the first tradeoff between the
number of copies measured jointly and the copy complexity of tomography.  They
proved a lower bound of order
$d^3/(\sqrt{k}\eps^2)$ when $k$ is at most a sufficiently small absolute
power of $1/\eps$, together with a matching upper bound up to logarithmic
factors over the corresponding range.  More recently, in
\cite{PSW26}, Pelecanos, Spilecki, and Wright gave, for every $k$, the following
upper bound without logarithmic factors:
\[
  n\lesssim
    \max\Set{
      \frac{d^3}{\sqrt{k}\eps^2},
      \frac{d^2}{\eps^2}
    }.\footnotemark
\]
\footnotetext{For nonnegative quantities $X$ and $Y$, we write
$X\lesssim Y$ if $X\leq CY$ for an absolute constant $C>0$.  We write
$X\gtrsim Y$ if $Y\lesssim X$, and $X\asymp Y$ if both relations hold.}
They conjectured that this rate is optimal for every $k$ and left the matching
lower bound as an open problem.  This leads us to the
following central question:

\begin{quote}
\begin{center}
  \emph{What is the optimal copy complexity when each measurement acts on at
  most $k$ copies?}
\end{center}
\end{quote}

\subsection{Our Results}

We determine the optimal copy complexity and fully resolve the above question.
For every $k$, we prove a lower bound matching the algorithm in \cite{PSW26},
without imposing any relation between $k$ and $\eps$.  Each measurement may be
arbitrary on its block, and later measurements may depend on every earlier
classical outcome.

We begin with the statistical and measurement models.  A state on
$\C^d$ is a positive semidefinite matrix $\rho\in\C^{d\times d}$ with
$\Tr(\rho)=1$, and $I_d$ denotes the identity on $\C^d$.  The trace distance
between two states is
\[
  \Dtr(\rho,\sigma)=\frac12\norm{\rho-\sigma}_1.
\]
For a Hermitian matrix $H$, $\norm H_{\op}$ is its largest absolute
eigenvalue.
Any quantum measurement performed on $t$ copies with a finite outcome set $\mathcal Z$ is described by a positive operator-valued measure (POVM)
on $t$ copies. This is a collection of positive semidefinite operators
$\{M_z:z\in\mathcal Z\}$ on $(\C^d)^{\otimes t}$ such that
$\sum_zM_z=I_d^{\otimes t}$.  When such a quantum measurement is performed on $\rho^{\otimes t}$, the
outcome $z$ is obtained with probability $\Tr(M_z\rho^{\otimes t})$.
We use finite-outcome notation for brevity. For a general standard Borel outcome space, the formal argument
applies the finite-outcome result to every measurable function of the outcome
with finite range; see \cref{sec:prelim-fisher}.

\begin{definition}[Protocols with measurements on at most $k$ copies]
\label{def:block-protocol-intro}
For $d,k,n\in\N$, a protocol in this model receives $n$ independent copies
of an unknown state $\rho\in\C^{d\times d}$.  In round $j$, as a function of
its internal randomness and the previous classical outcomes, it chooses an
integer $1\leq t_j\leq k$ and applies an arbitrary joint quantum measurement to $t_j$ fresh copies
of $\rho$.  The protocol may stop at a
random time, but must satisfy
$\sum_{j=1}^{T} t_j\leq n$ almost surely, where $T$ is the number of rounds. 
The protocol may adaptively choose the POVM describing the quantum measurement performed on the fresh copies depending on the previous measurement outcomes.
It may retain unlimited classical
information between rounds, but it retains no quantum memory from
one round to the next.
\end{definition}

\begin{problem}[Tomography with limited entanglement]
\label{prob:tomography-intro}
Given the transcript of a protocol in \cref{def:block-protocol-intro}, output
a state $\widehat\rho$ such that
\[
  \Dtr(\rho,\widehat\rho)\leq\eps
\]
with probability at least $2/3$, for every input state $\rho$.
\end{problem}

Our main result is the following.

\begin{restatable}[Tight lower bound for tomography with limited entanglement]
  {theorem}{mainlowerbound}
\label{thm:main-lower-bound}
There is an absolute constant $\eps_0>0$ such that the following holds.
For every $d\geq2$, every integer $k\geq1$, and every
$0<\eps\leq\eps_0$, any protocol in \cref{def:block-protocol-intro} that solves
\cref{prob:tomography-intro} using at most $n$ copies satisfies
\begin{equation}
\label{eq:main-lower-bound}
  n
  \gtrsim
  \frac{d^3}{\eps^2\sqrt{\min\{k,d^2\}}}
  =
  \max\Set{
    \frac{d^3}{\sqrt{k}\eps^2},
    \frac{d^2}{\eps^2}
  }.
\end{equation}
The bound holds for every protocol in \cref{def:block-protocol-intro},
including adaptive measurements and nonuniform or random block sizes.
\end{restatable}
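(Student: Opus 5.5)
We plan to follow the route sketched in the abstract, with three ingredients.

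\textbf{Hard family.} Fix the center $\rho_0=I_d/d$, or any state whose smallest eigenvalue is $\gtrsim 1/d$. Take a packing $\{\rho_U\}$ of a small trace-distance ball around it, with $\rho_U=\rho_0+\frac{\eps}{d}\Delta_U$. Here $\Delta_U$ ranges over traceless Hermitian matrices with $\norm{\Delta_U}_{\op}\lesssim 1$, for example Haar-random rotations $U\,\mathrm{diag}(\pm1)\,U^\dagger$. We need $\norm{\rho_U-\rho_{U'}}_1\gtrsim\eps$ pairwise. The log-packing number is then of order $d^2$, which is the Fano numerator. To avoid needing a uniformly separated finite packing, we use the ball version of Fano: draw $\Delta$ from a suitable continuous prior and note that a trace-distance ball of radius $\eps/2$ has small prior mass, of order $e^{-cd^2}$. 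This gives $I(\Delta;\text{transcript})\gtrsim d^2$. The anti-concentration of the prior on such balls is the content of the section on anti-concentration.

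\textbf{Reduction from mutual to Fisher information.} Parametrize the family smoothly as $\theta\mapsto\rho_\theta$, with $\theta$ ranging over a Gaussian-like prior on $\approx d^2$ real coordinates at scale $\eps/d$. Use the log-Sobolev or van Trees style comparison of \cite{XuRaginsky17,ALPC19}. For an adaptive transcript this bounds $I(\theta;Y)$ by $\frac{\text{scale}^2}{2}\,\E\sum_j \Tr J_j(\theta)$, up to constants. Here $J_j$ is the Fisher information matrix of round $j$'s outcome given the past. By the chain rule, the adaptive protocol then only requires a bound on $\Tr J$ that is uniform over all single-block POVMs on $t\le k$ copies and over all $\theta$ in the neighborhood. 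Continuous outcome spaces are handled by the finite-range approximation mentioned in the preliminaries.

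\textbf{Block Fisher bound (main obstacle).} We need to show that for any POVM on $\rho^{\otimes t}$ with $\lambda_{\min}(\rho)\gtrsim 1/d$, the Fisher information summed over an orthonormal basis of traceless directions $H$ with $\norm{H}_2=1$ satisfies
\[\textstyle\sum_H F_H\lesssim d\,\big(t\sqrt{\min\{t,d^2\}}\big)\cdot d\cdot\tfrac1d ,\]
that is, $\sum_H F_H\lesssim d\,t\sqrt{\min\{t,d^2\}}$ in the normalization where unrestricted joint measurements achieve $d^2 t$. Combining this with the previous step gives
\[d^2\lesssim \frac{\eps^2}{d^2}\cdot\frac{n}{t}\cdot d\,t\sqrt{\min\{k,d^2\}},\]
which rearranges to the theorem.

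To prove the block bound we plan to write $F_H=\sum_z \Tr(M_z D_H)^2/\Tr(M_z\rho^{\otimes t})$, where $D_H=\sum_i \rho^{\otimes(i-1)}\otimes H\otimes\rho^{\otimes(t-i)}$. We conjugate by $\rho^{-1/2}$ in every tensor factor, which costs only $\lambda_{\min}^{-1}\lesssim d$. This reduces the problem to bounding $\sum_H\Tr(\widetilde M_z \widetilde D_H)^2$ against $\Tr\widetilde M_z$ for a positive $\widetilde M_z$ and the maximally mixed reference. In that setting $\sum_H \widetilde D_H\otimes\widetilde D_H$ is a sum of swap-type operators over pairs of positions. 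Its operator norm on the relevant symmetric structure is what yields $\sqrt{t}$ (via Schur--Weyl or a Cauchy--Schwarz over the permutation action) rather than $t$. The $\min\{t,d^2\}$ cap comes from the trivial bound of $d^2$ total parameters.

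The crux is making this non-perturbative, i.e.\ uniform in $\rho$ rather than only at $I/d$, which is exactly where \cite{CLL24} needed $k$ small. We plan to handle it through operator monotonicity in the conjugation step. We expect this block bound to be the hardest part of the proof.
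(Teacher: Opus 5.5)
Your outer architecture matches the paper's. It has Fano's inequality for metric balls under a continuous prior, the log-Sobolev comparison of \cite{ALPC19}, the adaptive Fisher information chain rule, and a per-block bound. In a Frobenius-orthonormal basis that bound reads $\Tr J_M(\rho)\lesssim d^2t\sqrt{\min\{t,d^2\}}$. Your conjugation step is also the paper's change of variables. The gap is in the block bound, which is the paper's only new ingredient, and two steps of your sketch fail as stated.

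First, conjugation does not produce a maximally mixed reference. With $R=\rho^{\otimes t}$ and $\widetilde M_z=R^{1/2}M_zR^{1/2}$ you have $\sum_z\widetilde M_z=\rho^{\otimes t}$. The factor $\lambda_{\min}(\rho)^{-2}$ is paid only through the one-copy direction $\rho^{-1/2}H\rho^{-1/2}$, and the reference stays $\rho^{\otimes t}$. Passing to $(I_d/d)^{\otimes t}$ by operator monotonicity instead loses a factor exponential in $t$, such as $(d\lambda_{\max}(\rho))^t$. On the prior's support this is $e^{O(\eps t)}$, so it recreates a restriction $t\lesssim1/\eps$ of exactly the kind you are trying to remove.

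Second, no operator-norm bound on $\sum_H\widetilde D_H\otimes\widetilde D_H$ can give the $\sqrt t$ saving. Let $e$ be a unit eigenvector of $\rho$ for $\lambda_{\min}(\rho)$, and put the effect $w\proj{e^{\otimes t}}$ in the POVM. Its conjugated vector is $u=e^{\otimes t}$, so $G_1(u)=t\proj e$. The squared score of this single outcome is then $(1-1/d)\,t^2/\lambda_{\min}(\rho)^2$. The $t^{3/2}$ rate holds only on average over outcomes, and only after the mean is removed.

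The missing idea is how to do this averaging at the true state. After splitting into rank-one effects $w_z\proj{v_z}$, set $u_z=R^{1/2}v_z/\norm{R^{1/2}v_z}_2$ and $p_z=w_z\norm{R^{1/2}v_z}_2^2$. The score is then $s_b(z)=\Tr(\rho^{-1/2}F_b\rho^{-1/2}G_1(u_z))$. POVM completeness gives $\sum_zp_z\proj{u_z}=\rho^{\otimes t}$, and hence $\sum_zp_zG_1(u_z)=t\rho$. Because $\Tr(\rho^{-1/2}F_b\rho^{-1/2}\rho)=\Tr F_b=0$, you may replace $G_1(u_z)$ by $G_1(u_z)-t\rho$ in the score. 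Only then should you pay $\lambda_{\min}(\rho)^{-2}$ through Parseval, which gives $\Tr J_M(\rho)\le\lambda_{\min}(\rho)^{-2}\sum_zp_z\norm{G_1(u_z)-t\rho}_F^2$. If you pay it before centering, the $t^2\Tr(\rho^2)$ term no longer cancels once $\rho$ is far from $I_d/d$.

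Next, bound the quadratic quantity $\norm{G_1(u)}_F^2$ by the linear functional $\sum_\lambda\norm{\Pi_\lambda u}_2^2\sum_i\lambda_i^2$ from \cite[Lemma~6.1]{CLL24}. Averaging over the ensemble turns this into the Schur--Weyl moment at $\operatorname{spec}\rho$, not at the uniform spectrum, and that moment is at most $t(t-1)\Tr(\rho^2)+2t^{3/2}$. Subtracting the squared mean $t^2\Tr(\rho^2)$ leaves $2t^{3/2}$. This holds uniformly over full-rank $\rho$, with no relation between $t$ and $\eps$. So Schur--Weyl enters as a linearization inside an average, not as an operator-norm estimate.

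Two smaller points. The prior must be truncated so that its support satisfies $\rho\succeq I_d/(4d)$, while keeping a log-Sobolev constant of order $\eps^2/d^3$. The paper does this by conditioning the Gaussian on the convex set $\{\norm\Delta_{\op}\le\Cprior\tau/d\}$ and invoking Caffarelli's contraction theorem \cite{Caffarelli00,FGP20}. A Haar-orbit family $U\operatorname{diag}(\pm1)U^\dagger$ does not supply such an inequality. Also, the Gaussian small-ball estimate holds only for large $d$, so the finitely many remaining dimensions need the unrestricted bound of \cite{HHJWWY16}.
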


The same rate is already necessary in a small neighborhood of any state whose
smallest eigenvalue is of order $1/d$.  The center may be known to the
protocol.

\begin{corollary}[Local lower bound]
\label{cor:local-lower-bound}
There are absolute constants $L\geq4$ and $\eps_1>0$ such that the following
holds.  Let $d\geq2$, let $k\geq1$, and let $\rho_\star$ be a state satisfying
$\rho_\star\succeq I_d/(2d)$.  For every $0<\eps\leq\eps_1$, any protocol that,
even knowing $\rho_\star$, estimates every state $\rho$ satisfying
\[
  \norm{\rho-\rho_\star}_{\op}\leq\frac{L\eps}{d}
\]
to trace distance $\eps$ with probability at least $2/3$ must use
\[
  n
  \gtrsim
  \frac{d^3}{\eps^2\sqrt{\min\{k,d^2\}}}
\]
copies.  Measurements and adaptivity are as in
\cref{def:block-protocol-intro}.
\end{corollary}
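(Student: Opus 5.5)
We will prove the local lower bound directly, by a Fano-plus-Fisher-information argument on a local packing around $\rho_\star$. The global statement in \cref{thm:main-lower-bound} then follows by taking $\rho_\star=I_d/d$.

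\textbf{Construction.} Fix $\rho_\star\succeq I_d/(2d)$ and consider the local family
\[
\rho_\theta=\rho_\star+\frac{\eps}{d}\sum_a\theta_a H_a ,
\]
where $\{H_a\}$ is an orthonormal basis (Hilbert--Schmidt, scaled so that $\norm{H_a}_{2}\asymp 1$) of traceless Hermitian matrices. The parameter $\theta$ ranges over a ball in $\R^{m}$ with $m\asymp d^2$. For a Haar-type random direction, the operator norm of $\sum_a\theta_aH_a$ is $O(\sqrt{d})\cdot|\theta|/\sqrt{d^2}$ times the relevant normalization. It is cleanest to parametrize by Gaussian-like random Hermitian $G$ with $\norm{G}_{\op}\lesssim 1$ with high probability and $\norm{G}_1\gtrsim d$. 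This gives $\norm{\rho_\theta-\rho_\star}_{\op}\leq L\eps/d$ and $\Dtr(\rho_\theta,\rho_{\theta'})\gtrsim \eps$ for well-separated parameters. This anti-concentration is handled by the separation step.

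\textbf{Step 1 (Fano).} Using Fano's inequality for metric balls, any protocol succeeding with probability $2/3$ forces
\[
I(\theta;\text{transcript})\gtrsim d^2 .
\]
The reason is that a random $\theta$ from the prior lands in an $\eps$-trace-distance ball around any fixed estimate with probability $e^{-\Omega(d^2)}$. This is exactly the small-ball anti-concentration step.

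\textbf{Step 2 (Fisher information).} Bound the mutual information by Fisher information via the log-Sobolev comparison \cite{XuRaginsky17,ALPC19}. For a prior on $\theta$ with log-Sobolev constant $\asymp r^2$ (radius $r\asymp1$ in $\theta$-units), this gives
\[
I(\theta;Y)\lesssim r^2\,\mathbb E_\theta\,\Tr\, J_Y(\theta).
\]
By the chain rule, adaptivity makes the trace of the total Fisher information at most the sum over rounds of the worst-case trace of single-block Fisher information. Since $\sum_j t_j\leq n$, it suffices to show a block bound: for any POVM on $t\leq k$ copies of $\rho_\theta$ with $\rho_\theta\succeq I/(4d)$,
\[
\Tr J(\theta)\lesssim t\,\frac{\eps^2}{d^2}\cdot d\,\sqrt{\min\{t,d^2\}}\,\Big/\,d\cdot d .
\]
Put more transparently, in the natural scaling the per-copy information in all $d^2$ directions together is at most $(\eps^2/d^2)\cdot d\sqrt{\min\{t,d^2\}}$ per copy. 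Combining the three steps gives
\[
d^2\lesssim n\,\eps^2\sqrt{\min\{k,d^2\}}/d ,
\]
which is the claimed bound.

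\textbf{Main obstacle: the block Fisher bound.} We must show that for any POVM $\{M_z\}$ on $t$ copies, the quantity $\sum_a\sum_z (\partial_a p_z)^2/p_z$ is at most $\lesssim t\,d\sqrt{\min\{t,d^2\}}\,\lambda_{\min}^{-1}$-type quantities, uniformly in $t$ and with dependence only on $\lambda_{\min}(\rho)$. We would avoid expansion around the maximally mixed state, which is what restricted \cite{CLL24}. Instead, we would write
\[
\partial_a p_z=\sum_{i=1}^t\Tr\bigl(M_z\,\rho^{\otimes(i-1)}\otimes H_a\otimes\rho^{\otimes(t-i)}\bigr).
\]
Then we would whiten by $\rho^{-1/2}$: set $\tilde M_z=\rho^{\otimes t/2}M_z\rho^{\otimes t/2}$ and $X_a=\sum_i(\rho^{-1/2}H_a\rho^{-1/2})_i$. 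This gives $\partial_ap_z=\Tr(\tilde M_z X_a)$ with $\sum_z\tilde M_z=\rho^{\otimes t}$. By Cauchy--Schwarz in the form $\Tr(\tilde M X)^2\le \Tr(\tilde M)\Tr(\tilde M X^2)$ after symmetrizing, the sum over $z$ is bounded by $\Tr(\rho^{\otimes t}\sum_aX_a^2)$. This crude bound gives $t d^2\cdot d$, the single-copy rate, which is not enough.

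To get the $\sqrt t$ gain, we would first reduce to rank-one $M_z$. We would then bound $\sum_a \Tr(\tilde M_zX_a)^2/\Tr\tilde M_z$ for a fixed unit vector $\psi$ by $\langle\psi|\rho^{\otimes t}|\psi\rangle$ times $t d+t^{3/2}d$-type terms. This uses the fact that $\sum_a (H_a)_i\otimes(H_a)_j$ is a swap-like operator (for $i\neq j$), whose expectation on any state is bounded in operator norm rather than in trace. The cross terms $i\ne j$ would be controlled via the operator norm of the sum over pairs $\sum_{i<j}\mathrm{SWAP}_{ij}$ restricted appropriately, using Schur--Weyl (a Young-diagram eigenvalue bound giving $\lesssim t\sqrt{\min\{t,d^2\}}\cdot$ per-dimension factors). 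The condition $\lambda_{\min}\ge 1/(4d)$ absorbs the whitening. This is the step we expect to be hardest; the rest is bookkeeping.
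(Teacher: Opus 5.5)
\textbf{Verdict: genuine gap.} Your outer architecture matches the paper's large-$d$ argument: metric Fano on a prior around $\rho_\star$, the log-Sobolev comparison, the Fisher chain rule, and a block bound. Your whitening identity $\sum_z\tilde M_z=\rho^{\otimes t}$ is also exactly the ensemble identity the paper uses. But the block bound, which is the one new ingredient, is not established, and the mechanism you sketch cannot work.

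\textbf{Why the per-outcome bound fails.} For a rank-one effect $w_z\proj{\psi}$, your ratio $\sum_a\Tr(\tilde M_zX_a)^2/\Tr\tilde M_z$ equals $p_z\sum_a\bra{u}X_a\ket{u}^2$ with $u=R^{1/2}\psi/\norm{R^{1/2}\psi}_2$. That is, it is $p_z$ times the squared Frobenius norm of the traceless part of $\rho^{-1/2}G_1(u)\rho^{-1/2}$. No bound of the form $p_z\cdot O(t^{3/2}/\lambda_{\min}(\rho)^2)$ holds outcome by outcome. At $\rho=I_d/d$ and $\psi=e_1^{\otimes t}$ the ratio is $p_zd^2t^2(1-1/d)$, a factor $\sqrt t$ too large. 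An operator-norm bound on $\sum_{i<j}\mathrm{SWAP}_{ij}$ cannot rescue this, since that norm is $\binom t2$ on the symmetric subspace, which contains $\psi$. Moreover, $\sum_a\bra uX_a\ket u^2$ is quadratic in $\proj u$. Its cross terms are swaps between two replicas of $u$, not the within-block swaps appearing in $X_a^2$; the latter have vanishing $\rho^{\otimes t}$-expectation anyway.

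\textbf{What is missing.} The $\sqrt t$ gain exists only on average over outcomes, and three steps are needed.
\begin{enumerate}
\item Use the inequality $\norm{G_1(u)}_F^2\le\sum_\lambda\norm{\Pi_\lambda u}_2^2\sum_i\lambda_i^2$ (\cref{fact:cll-unfolded-bound}), which replaces the quadratic quantity by one linear in $\proj u$.
\item Average it using $\sum_zp_z\proj{u_z}=\rho^{\otimes t}$ to get the Schur--Weyl moment $t(t-1)\Tr(\rho^2)+2t^{3/2}$.
\item Center at the mean $\sum_zp_zG_1(u_z)=t\rho$ before applying $\norm{\rho^{-1/2}B\rho^{-1/2}}_F\le\norm B_F/\lambda_{\min}(\rho)$. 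This cancels the $t(t-1)\Tr(\rho^2)$ term.
\end{enumerate}
Centering is indispensable for the local statement. The hypothesis $\rho_\star\succeq I_d/(2d)$ permits $\Tr(\rho_\star^2)$ of constant order, and then the uncentered term can be of order $t^2d^2$, no better than the pointwise bound. Your swap intuition is not far off: on the Schur subspace indexed by $\lambda$, $\sum_{i<j}\mathrm{SWAP}_{ij}$ acts as $\frac12\sum_i(\lambda_i^2-(2i-1)\lambda_i)$. But it must enter through its $\rho^{\otimes t}$-expectation $\binom t2\Tr(\rho^2)$, which is exactly what centering removes, not through its operator norm.

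\textbf{The prior is also not specified correctly.} It must be supported inside $\norm{\rho-\rho_\star}_{\op}\le L\eps/d$, and inside $\rho\succeq I_d/(4d)$ so the block bound applies to every draw. Holding ``with high probability'' is not enough. A Euclidean ball in $\theta$ fails either way:
\begin{itemize}
\item radius $\asymp1$ gives $\norm G_1\lesssim\sqrt d$, so the whole support has trace diameter $\lesssim\eps/\sqrt d$ and Fano yields nothing;
\item radius $\asymp\sqrt d$ contains $G$ with $\norm G_{\op}\asymp\sqrt d$, which leaves the neighborhood.
\end{itemize}
The paper conditions a Gaussian on the convex operator-norm ball. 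It gets the log-Sobolev inequality from Caffarelli's contraction theorem, and transfers the small-ball bound via Anderson's inequality and $\bar\mu(K)\ge1/2$.

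Your scalings are also off in two places that happen to cancel. With $\norm G_{\op}\lesssim1$ one has $\norm\theta_2\asymp\sqrt d$, so the log-Sobolev constant in $\theta$-units is of order $1/d$, not $1$. Meanwhile the per-copy Fisher information is $\eps^2\sqrt{\min\{t,d^2\}}$, not $\eps^2\sqrt{\min\{t,d^2\}}/d$.

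\textbf{Small dimensions are not handled.} Fano gives $q\log(1/\beta)-\log2$, and $\log(1/\beta)\gtrsim d^2$ holds only beyond a dimension threshold, whereas the corollary is claimed for all $d\ge2$. The non-local bound of \cite{HHJWWY16} does not apply to a local problem. The paper treats the finitely many remaining dimensions with a two-point test between the commuting states $\rho_\star\pm4\eps H$ in an eigenbasis of $\rho_\star$. It bounds relative entropy by chi-squared divergence and then applies Pinsker's inequality.
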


The almost sure copy budget in the theorem can also be replaced by a uniform
bound in expectation.

\begin{corollary}[Expected copy budget]
\label{cor:expected-copy-budget}
The conclusions of \cref{thm:main-lower-bound,cor:local-lower-bound} remain
valid if the protocol has a random total number $N$ of copies and satisfies
\[
  \sup_\rho\E_\rho N\leq n,
\]
where $\E_\rho$ denotes expectation over the protocol's internal randomness
and measurement outcomes when the input state is $\rho$.  The supremum ranges
over the states in the corresponding estimation problem.
\end{corollary}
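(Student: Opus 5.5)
The plan is a truncation argument: convert a protocol with bounded expected budget into one with an almost sure budget of order $n$ and slightly smaller success probability. Then I check that the lower bounds of \cref{thm:main-lower-bound,cor:local-lower-bound} still apply at that success probability.

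\emph{Truncation.} Let $\mathcal P$ satisfy $\sup_\rho\E_\rho N\leq n$ and succeed with probability at least $2/3$ on the relevant family of states. Fix a constant $c$, say $c=12$. Define $\mathcal P'$ to run $\mathcal P$, but before each round $j$ check whether $\sum_{i\leq j}t_i\leq cn$. Since $t_j$ is chosen from the past transcript and internal randomness, this check is a legal classical decision. If the check fails, $\mathcal P'$ halts and outputs an arbitrary fixed state. Then $\mathcal P'$ is a protocol in the sense of \cref{def:block-protocol-intro} with budget $cn$ almost surely. On the event $\{N\leq cn\}$ its transcript and output coincide with those of $\mathcal P$. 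By Markov's inequality, $\Pr_\rho[N>cn]\leq 1/c$ for every $\rho$ in the family. Hence $\mathcal P'$ succeeds with probability at least $2/3-1/c=7/12$, uniformly over the family.

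\emph{Success probability.} Two routes remain, and I would take the first.

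(i) Inspect the proofs of \cref{thm:main-lower-bound,cor:local-lower-bound}. The abstract says they go through Fano's inequality over a packing of a metric ball together with the block Fisher information bound. Fano with error probability $\delta<1$ gives $I\geq(1-\delta)\log M-\log 2$. With $\delta=5/12$ instead of $1/3$ this is still $\gtrsim\log M$ once the packing is large. I expect the packing to be of size $\exp(\Omega(d^2))$, so the constant $2/3$ only affects absolute constants. This yields the bound for $\mathcal P'$ with $cn$ in place of $n$, hence for $n$ after absorbing $c$. I expect this route to handle both the global and local statements with no change in the neighborhood radius $L\eps/d$. Checking that the packing is large enough uniformly in $d\geq2$, including small $d$, is a point to watch. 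For very small $d$ one can fall back on a two-point or constant-size packing argument, or on route (ii).

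(ii) For \cref{thm:main-lower-bound}, one can alternatively amplify. Run $\mathcal P'$ independently $r=O(1)$ times to get estimates $\widehat\rho_1,\dots,\widehat\rho_r$. Output any $\widehat\rho_i$ that lies within trace distance $2\eps$ of more than $r/2$ of the others. Since $7/12>1/2$, a Chernoff bound shows that with probability at least $2/3$ a majority are $\eps$-accurate. Any selected estimate then shares a good neighbor, so it is $3\eps$-accurate. This gives an almost sure budget $rcn$ and accuracy $3\eps$, and applying the theorem with $\eps_0/3$ finishes.

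\emph{Main obstacle.} The delicate step is \cref{cor:local-lower-bound}. Amplification degrades accuracy from $\eps$ to $3\eps$ while the ball radius stays $L\eps/d$, which is smaller than the radius $L(3\eps)/d$ the corollary would require at accuracy $3\eps$. So route (ii) does not directly apply there, and that is why I rely on route (i). The verification needed is that the Fano step in the local proof is stated, or can be restated, for any constant success probability above some threshold below $7/12$. If the proof fixes $2/3$ rigidly, I would increase $c$ further so that the success probability $2/3-1/c$ is as close to $2/3$ as the Fano step requires, at the cost of a larger absolute constant.
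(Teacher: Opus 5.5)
Your route (i) is essentially the paper's proof. The paper likewise truncates at a constant multiple of $n$ using Markov's inequality, then uses the fact that \cref{cor:information-lower-bound} (a small-ball form of Fano rather than a packing) holds for any success probability $q\geq1/2$, while the information upper bound only needs the almost sure budget $cn$. It handles the finitely many small dimensions by \cref{fact:hhj-collective-lower-bound} with $\eta=1/2$ for the main theorem, and by the two-state test truncated at $12n$ (success $7/12$, total variation at least $1/6$) for the local statement; your route (ii) is not needed, and you correctly saw why it would not suffice for \cref{cor:local-lower-bound}.
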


The proof of the main lower bound gives the stronger requirement in
\cref{eq:weighted-copy-requirement}, expressed in terms of the sizes of the
blocks used by the protocol; it therefore applies directly to nonuniform and
random block sizes.

The upper-bound estimator of \cite{PSW26} is unbiased, but its
Hermitian, trace-one output may have negative eigenvalues.  No estimator can
always output a state and remain unbiased for every input: for a pure input,
unbiasedness would force the output to equal that state almost surely.
Averaging the unbiased block estimates centers the final average at the true
state.  To satisfy \cref{prob:tomography-intro}, replace only that final average
by a closest state in trace norm, as observed in \cite[Remark~2.10]{PSW26}.
Since the true state is feasible, the triangle inequality increases the error
by at most a factor of two.  Running their estimator with error target
$\eps/2$ therefore changes the copy bound only by an absolute constant.
Combining this upper bound with \cref{thm:main-lower-bound} determines the
optimal copy complexity up to absolute constants.

\begin{corollary}[Optimal copy complexity for measurements on at most $k$ copies]
\label{cor:complete-tradeoff}
For every $d\geq2$, every integer $k\geq1$, and every
$0<\eps\leq\eps_0$, the minimax number of copies required by protocols in
\cref{def:block-protocol-intro} to learn every $d$-dimensional state to trace
distance $\eps$ with success probability at least $2/3$, denoted by
$n^\star(d,k,\eps)$, satisfies
\[
  n^\star(d,k,\eps)
  \asymp
    \max\Set{
      \frac{d^3}{\sqrt{k}\eps^2},
      \frac{d^2}{\eps^2}
    }.
\]
\end{corollary}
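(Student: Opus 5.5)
This corollary follows by combining the lower bound of \cref{thm:main-lower-bound} with the upper bound of \cite{PSW26}, after a projection step. I would prove the two inequalities separately.

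\emph{Lower bound.} Fix $d\geq2$, $k\geq1$ and $0<\eps\leq\eps_0$. Any protocol certifying $n^\star(d,k,\eps)$ is a protocol in the sense of \cref{def:block-protocol-intro}, and it solves \cref{prob:tomography-intro} with at most $n^\star$ copies. \Cref{thm:main-lower-bound} therefore gives $n^\star\gtrsim d^3/(\eps^2\sqrt{\min\{k,d^2\}})$. This equals the stated maximum because
\[
\frac{1}{\sqrt{\min\{k,d^2\}}}=\max\{k^{-1/2},d^{-1}\}.
\]

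\emph{Upper bound.} I would run the estimator of \cite{PSW26} with accuracy $\eps/2$. It uses $O(\max\{d^3/(\sqrt{k}\eps^2),d^2/\eps^2\})$ copies. It measures blocks of at most $k$ fresh copies and keeps only classical data, so it fits \cref{def:block-protocol-intro}. Its output $\widetilde\rho$ is Hermitian with trace one and satisfies $\tfrac12\norm{\widetilde\rho-\rho}_1\leq\eps/2$ with probability at least $2/3$.

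Let $\widehat\rho$ be a state minimizing $\norm{\widetilde\rho-\sigma}_1$ over states $\sigma$. A minimizer exists by compactness of the state space. Since $\rho$ is itself feasible, $\norm{\widetilde\rho-\widehat\rho}_1\leq\norm{\widetilde\rho-\rho}_1$. The triangle inequality then gives
\[
\Dtr(\rho,\widehat\rho)\leq\tfrac12\bigl(\norm{\rho-\widetilde\rho}_1+\norm{\widetilde\rho-\widehat\rho}_1\bigr)\leq\norm{\rho-\widetilde\rho}_1\leq\eps.
\]
Halving $\eps$ changes the copy count only by a factor $4$, which gives the matching upper bound.

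\emph{Main obstacle.} The substantive content is entirely in \cref{thm:main-lower-bound}. Here the only care needed is in how the \cite{PSW26} guarantee is stated. If it controls expected error, or Frobenius-type error, rather than trace-distance error with probability $2/3$, I would convert it via Markov's inequality, paying only a constant factor. I would also check that their guarantee holds for every $k$ and every small $\eps$. This matters because the range $k\geq d^2$ must give $d^2/\eps^2$; if needed, that case is handled by using blocks of size $\min\{k,d^2\}$ copies.
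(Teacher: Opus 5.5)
Your proposal is correct and matches the paper's argument. The lower bound is \cref{thm:main-lower-bound} together with $1/\sqrt{\min\{k,d^2\}}=\max\{k^{-1/2},d^{-1}\}$. The upper bound runs the estimator of \cite{PSW26} at accuracy $\eps/2$ and replaces its Hermitian, trace-one (possibly non-positive) output by a closest state in trace norm, which at most doubles the error by the triangle inequality and changes the copy count only by a constant factor.
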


At $k=1$, the first term matches both the single-copy upper bound and the
adaptive lower bound~\cite{GKKT20,CHHLLS23}.  For
$1<k<d^2$, it captures the $\sqrt{k}$ gain from joint measurements.  Once
$k\geq d^2$, the second term matches the lower bound for unrestricted
measurements~\cite{HHJWWY16}.  Thus the new content of
\cref{thm:main-lower-bound} is the intermediate regime, where it removes the
restriction relating $k$ to the accuracy in the lower bound of \cite{CLL24}.

\paragraph{Organization.}
\Cref{sec:technical-overview} gives an overview of the proof.  \Cref{sec:related-work} reviews related
results.  We then restate the known
facts we need from statistics, analysis, and representation theory and connect
them to our setting in \cref{sec:preliminaries}.
\Cref{sec:main-lower-bound-proof} derives the main
and local lower bounds
(\cref{thm:main-lower-bound,cor:local-lower-bound}) from the two information
bounds.
\Cref{sec:anti-concentration} constructs the prior and proves the mutual
information lower bound required for tomography
(\cref{cor:information-lower-bound}).
\Cref{sec:global-information} proves the mutual information upper bound for
adaptive protocols (\cref{prop:information-upper-bound}), and
\cref{sec:block-fi} proves the Fisher information bound for one measurement
block (\cref{thm:block-fi}).

%% file: sections/technical_overview.tex
\section{Technical Overview}
\label{sec:technical-overview}
Theorem~\ref{thm:main-lower-bound} is a lower bound on the number of copies.
We prove it by comparing two bounds on the same mutual information.
Draw a perturbation $\Delta$ of a known reference state from a truncated
Gaussian prior, and let $Y$ be the full classical transcript of the protocol.
Successful tomography gives the mutual information lower bound
\[
  I(\Delta;Y)\gtrsim d^2
\]
through a metric form of Fano's inequality.  In the other direction, a
log-Sobolev comparison reduces the mutual information $I(\Delta;Y)$ to the
Fisher information of the transcript, and the Fisher information chain rule
reduces the latter to the contributions of the individual measurement
blocks.  An upper bound on the Fisher information of one block then bounds
$I(\Delta;Y)$ in terms of the number of copies $n$ and the maximum block size
$k$.  Comparing these two bounds on $I(\Delta;Y)$ gives the copy complexity
lower bound.

The information-theoretic reduction uses known ingredients: the truncated
Gaussian prior from \cite{CHHLLS23,CLL24}, the metric form of Fano's inequality
from \cite[Lemma~1]{XuRaginsky17}, the comparison between mutual and Fisher
information from \cite[Theorem~1]{ALPC19}, and the standard Fisher information
chain rule for adaptive experiments.  The new technical contribution is the
$t^{3/2}/\lambda_{\min}(\rho)^2$ branch of the Fisher information bound for an
arbitrary $t$-copy measurement.  The change of variables used to express each
outcome score has a two-copy antecedent in
\cite[Supplemental Material, Section~VI, Eqs.~(S63)--(S66)]{ZhuHayashi18},
and the two Schur--Weyl estimates used to control the resulting unfolded
matrix come from
\cite[full version, Lemma~6.1 and the proof of Claim~3.27]{CLL24}.
The step specific to our argument is to use POVM completeness to identify the mean of the unfolded matrix as $t\rho$ and to center at this mean before
applying those estimates.  This cancels the term of order
$t^2\Tr(\rho^2)$ and leaves the required contribution of order $t^{3/2}$.
The resulting calculation avoids the perturbative expansion of
$\rho^{\otimes t}$ used in \cite{CLL24} and therefore imposes no relation
between the block size and the accuracy.  The complementary bound for larger
blocks is the standard consequence of the symmetric logarithmic derivative
quantum Fisher information bound and tensor additivity.  Translating the same
prior to a known well-conditioned state gives the local lower bound in
\cref{cor:local-lower-bound}.
\subsection{The Main Theorem from Two Information Bounds}
\label{sec:overview-information-comparison}

We instantiate this strategy with a truncated Gaussian prior, state the two
information bounds, and compare them.

Let $\operatorname{Herm}_0(\C^d)$ denote the set of traceless Hermitian
$d\times d$ matrices, viewed as a Euclidean space of dimension $d^2-1$ with
inner product $\ip HG=\Tr(HG)$; see \cref{sec:prelim-matrix} for the formal
definition and coordinate conventions.  We write
$\norm X_F=(\Tr(X^\dagger X))^{1/2}$ for the Frobenius norm and
$\norm X_{\op}$ for the operator norm.  Let $Z$ be a standard Gaussian on
$\operatorname{Herm}_0(\C^d)$.  Explicitly, for any orthonormal basis
$F_1,\ldots,F_{d^2-1}$ of this space,
\[
  Z=\sum_{b=1}^{d^2-1}g_bF_b,
  \qquad
  g_1,\ldots,g_{d^2-1}\ \text{are independent }\mathcal N(0,1)
  \text{ random variables}.
\]
Condition on
\[
  \norm Z_{\op}\leq\frac{\Cprior}{\sqrt2}\sqrt d,
\]
and set
\[
  \tau=\Cprior\eps,
  \qquad
  \Delta=\frac{\sqrt2\,\tau}{d^{3/2}}Z.
\]
Here $\Cprior$ is the absolute constant from the uniform prior mass bound
(\cref{prop:prior-properties}).  We denote the conditional law of $\Delta$ by
$\mu$.  Equivalently, with
\[
  \kappa=\frac{d^3}{2\tau^2},
  \qquad
  K=\Set{
    \Delta\in\operatorname{Herm}_0(\C^d):
    \norm\Delta_{\op}\leq\frac{\Cprior\tau}{d}},
\]
its density is
\begin{equation}
\label{eq:overview-prior}
  \dd\mu(\Delta)
  \propto
  \exp\Paren{-\frac{\kappa}{2}\norm\Delta_F^2}
  \one_K(\Delta)\dd\Delta.
\end{equation}
We draw $\Delta\sim\mu$, run the protocol on
$\rho_\Delta=I_d/d+\Delta$, and let $Y$ be its full classical transcript.
Fano's inequality for metric balls shows that successful tomography requires
$I(\Delta;Y)\gtrsim d^2$ (\cref{cor:information-lower-bound}).  Conversely,
the Fisher information argument gives
$I(\Delta;Y)\lesssim
(\eps^2n/d)\sqrt{\min\{k,d^2\}}$
(\cref{prop:information-upper-bound}).  Combining the two estimates gives
\begin{equation}
\label{eq:overview-information-sandwich}
  \underbrace{d^2}_{\text{required for estimation}}
  \lesssim
  I(\Delta;Y)
  \lesssim
  \underbrace{\frac{\eps^2n}{d}
    \sqrt{\min\{k,d^2\}}}_{\text{available from the transcript}}.
\end{equation}
Rearranging yields
\[
  n
  \gtrsim
  \frac{d^3}{\eps^2\sqrt{\min\{k,d^2\}}}.
\]

Thus the main proof is reduced to the two sides of this sandwich.  They
prove the desired lower bound for all sufficiently large $d$; the formal
proof in \cref{sec:main-lower-bound-proof} uses the standard lower bound for
unrestricted joint measurements (\cref{cor:fixed-dimensional-patch}) for the
finitely many smaller dimensions.

For the local lower bound in \cref{cor:local-lower-bound}, use
$\rho_\star+\Delta$ in place of $I_d/d+\Delta$, where the protocol knows
$\rho_\star\succeq I_d/(2d)$.  The support of $\mu$ lies in the prescribed
neighborhood of $\rho_\star$ and, for sufficiently small $\eps$, satisfies
$\rho_\star+\Delta\succeq I_d/(4d)$.  The law of $\Delta$, and hence its
log-Sobolev and small-ball bounds, is unchanged.  The same comparison proves
the result in large dimensions; \cref{sec:main-lower-bound-proof} gives the
argument using two states for the remaining dimensions.

\subsection{The Mutual Information Upper Bound from One Block}
\label{sec:overview-information-upper}

So far, we have shown that the main theorem follows from lower and upper
bounds on $I(\Delta;Y)$.  
We first prove the upper bound on
$I(\Delta;Y)$: a log-Sobolev comparison converts mutual information to Fisher
information, after which the Fisher information chain rule reduces the
adaptive transcript to its individual measurement blocks.

Let $F_1,\ldots,F_{d^2-1}$ be an orthonormal basis of the traceless Hermitian
matrices and use the local coordinates
\[
  \rho_\theta=\rho+\sum_{b=1}^{d^2-1}\theta_bF_b.
\]
If an observation $W$ has likelihood $p_\theta(w)$ under $\rho_\theta$, its
score at an outcome $w$ is
\[
  s_W(w)=\left.\nabla_\theta\log p_\theta(w)\right|_{\theta=0}.
\]
The Fisher information $J_W(\rho)$ is the second moment of the score, so
$\Tr J_W(\rho)=\E_\rho\norm{s_W(W)}_2^2$.  Normalization gives
$\E_\rho s_W(W)=0$.  Here and below, $\E_\rho$ averages over all
measurement outcomes and protocol randomness when the input state is $\rho$.

In \cite[Theorem~1]{ALPC19}, Aras, Lee, Pananjady, and Courtade prove a general
inequality that bounds mutual information through a log-Sobolev inequality and
average Fisher information.  We state the specialization needed here in
\cref{fact:lsi-to-mutual-information} and derive its application to the full
transcript in \cref{eq:mi-from-fi}.  For the prior in
\cref{eq:overview-prior}, this gives
\begin{equation}
\label{eq:overview-lsi-conversion}
  I(\Delta;Y)
  \lesssim
  \frac{\eps^2}{d^3}
  \E_{\Delta\sim\mu}\Tr J_Y(\rho_\Delta).
\end{equation}
It therefore remains to control the Fisher information of the transcript at
each $\rho_\Delta$ with $\Delta\in K$.

A direct argument using mutual information would analyze one block at a time.
After each outcome, however, the law of $\Delta$ changes, so the same bound
need not apply to the next block.  The problem is the changing prior, not the
blocks themselves.  We instead fix the state and use Fisher information.  The
transcript likelihood factors across rounds, so its score is the sum of the
conditional scores.  These scores have conditional mean zero because each
conditional outcome distribution has total probability one.  Hence the cross
terms vanish and Fisher information adds.  The Fisher information chain rule
for an adaptive transcript (\cref{fact:adaptive-fi-composition}) therefore
states that, if the preceding outcomes and protocol randomness select $M_j$
on $t_j$ copies in round $j$, then
\[
  \Tr J_Y(\rho)
  =
  \E_\rho\Brac{\sum_{j=1}^{T}\Tr J_{M_j}(\rho)}.
\]

The remaining input is the Fisher information bound for a block measurement
(\cref{thm:block-fi}).  We explain its proof in
\cref{sec:overview-block-fi}.  Its formal statement and full proof appear in
\cref{sec:block-fi}.  For every full-rank state $\rho$, the theorem states
that
\[
  \Tr J_M(\rho)
  \leq
  \min\Set{
    \frac{2t^{3/2}}{\lambda_{\min}(\rho)^2},
    \frac{t(d^2-1)}{\lambda_{\min}(\rho)}
  }.
\]
In particular, if $\rho\succeq I_d/(4d)$, every measurement $M$ on $t$
copies satisfies
\begin{equation}
\label{eq:overview-block-bound}
  \Tr J_M(\rho)
  \lesssim
  \min\Set{d^2t^{3/2},d^3t}
  =
  d^2t\sqrt{\min\{t,d^2\}}.
\end{equation}
Since $t_j\leq k$ and $\sum_{j=1}^{T}t_j\leq n$ almost surely, substituting
\cref{eq:overview-block-bound} into the Fisher information chain rule above
gives the adaptive bound
(\cref{eq:adaptive-block-fi}), proved in
\cref{sec:global-information}:
\begin{align}
\label{eq:overview-adaptive-bound}
  \Tr J_Y(\rho)
  &\lesssim
  d^2\E_\rho\Brac{
    \sum_{j=1}^{T}t_j\sqrt{\min\{t_j,d^2\}}}
  \notag\\
  &\leq
  d^2n\sqrt{\min\{k,d^2\}}.
\end{align}
Substituting the adaptive Fisher information bound into the log-Sobolev
comparison gives
\[
  I(\Delta;Y)
  \lesssim
  \frac{\eps^2n}{d}\sqrt{\min\{k,d^2\}}.
\]
This is the mutual information upper bound for adaptive protocols
(\cref{prop:information-upper-bound}).  It also allows random block sizes,
adaptive measurements, and early stopping.

It remains to prove the block bound and verify that the prior from
\cite{CHHLLS23,CLL24} has the properties required by the two information
bounds.

\subsection{The Fisher Information in One Block}
\label{sec:overview-block-fi}

So far, we have reduced the information upper bound to the Fisher information
of a single measurement block.  We now prove the required block bound.  The
main issue is to improve the standard estimate when $t\leq d^2$.

The Braunstein--Caves measurement inequality and tensor additivity
(\cref{fact:quantum-fisher-information}) give the standard bound
\[
  \Tr J_M(\rho)
  \leq
  \frac{t(d^2-1)}{\lambda_{\min}(\rho)}.
\]
When $\rho\succeq I_d/(4d)$, this is of order $d^3t$, which is the desired
block bound when $t>d^2$.  Why is the standard estimate not enough for smaller
$t$?  Summing it over the blocks gives
$\Tr J_Y(\rho)\lesssim d^3n$.  After
\cref{eq:overview-lsi-conversion}, this recovers the known lower bound
$n\gtrsim d^2/\eps^2$ for unrestricted measurements~\cite{HHJWWY16}.  It gives
no dependence on $k$.  The new estimate is
\[
  \Tr J_M(\rho)
  \leq
  \frac{2t^{3/2}}{\lambda_{\min}(\rho)^2}.
\]
For $\rho\succeq I_d/(4d)$, this is of order $d^2t^{3/2}$, the improvement
needed when $t\leq d^2$.  The complete proof is in \cref{sec:block-fi}.

The argument below assigns a unit vector to each rank-one effect.  For a
finite POVM $M=\{M_y\}_y$, decompose each effect spectrally and report both
indices, where $w_{y,j}\geq0$ and each $v_{y,j}$ is a unit vector:
\[
  M_y=\sum_jw_{y,j}\proj{v_{y,j}},
  \qquad
  \widetilde M_{y,j}=w_{y,j}\proj{v_{y,j}},
  \qquad
  \sum_{y,j}\widetilde M_{y,j}=\sum_yM_y=I_d^{\otimes t}.
\]
Thus $\widetilde M$ is a POVM.  Forgetting $j$ recovers the outcome law of
$M$, so data processing gives
$J_M(\rho)\preceq J_{\widetilde M}(\rho)$
(\cref{fact:fi-data-processing}); it therefore suffices to consider rank-one
effects $w_z\proj{v_z}$.  General outcome spaces follow by applying the
finite-outcome argument to every measurable finite-valued function of the
outcome and then using \cref{eq:general-outcome-fi-supremum}.

For two copies, the change of variables below appears in
\cite[Supplemental Material, Section~VI, Eqs.~(S63)--(S66)]{ZhuHayashi18}.
The same calculation applies to every block size; its formal statement is
\cref{lem:block-score-ensemble}.

\paragraph{Scores of the measurement outcomes and the pure-state decomposition.}
So far, we have reduced the proof of
$\Tr J_M(\rho)\leq2t^{3/2}/\lambda_{\min}(\rho)^2$ to a finite POVM with
rank-one effects $w_z\proj{v_z}$.  We now rewrite the score of each outcome
using a vector $u_z$.  The pairs $(p_z,u_z)$ will form a pure-state
decomposition of $\rho^{\otimes t}$, allowing the Schur--Weyl estimates to
control the average squared score.

Let $v_z$ be a unit vector and set $R=\rho^{\otimes t}$.  Each outcome $z$
determines
\begin{equation}
\label{eq:overview-rank-one-variables}
  q_z=\bra{v_z}R\ket{v_z},
  \qquad
  p_z=w_zq_z,
  \qquad
  \ket{u_z}=\frac{R^{1/2}\ket{v_z}}{\sqrt{q_z}}.
\end{equation}
Then $p_z$ is the probability of outcome $z$, while $u_z$ is introduced only
for the proof.

For a unit vector $u\in(\C^d)^{\otimes t}$, let
\[
  G_1(u)
  =
  \sum_{\ell=1}^t
  \Tr_{\{1,\ldots,t\}\setminus\{\ell\}}(\proj u).
\]
Here $\Tr_S$ denotes the partial trace over the tensor factors indexed by
$S$.  Thus the $\ell$th summand is the reduced state on the $\ell$th copy,
viewed as an operator on $\C^d$.  Following \cite{CLL24}, we call $G_1(u)$
the \emph{unfolded matrix}.  It is the sum of these $t$ reduced states, and hence
$\Tr G_1(u)=t$.  For each outcome, write
\[
  B_z=G_1(u_z)-t\rho.
\]
Let
$F_1,\ldots,F_{d^2-1}$ be an orthonormal basis of the traceless Hermitian
matrices.  For the local coordinates
$\rho_\theta=\rho+\sum_b\theta_bF_b$, recall that the score of outcome $z$
in direction $F_b$ is
\[
  s_b(z)
  =
  \left.\frac{\partial}{\partial\theta_b}\log p_z(\theta)\right|_{\theta=0},
  \qquad
  p_z(\theta)
  =
  w_z\bra{v_z}\rho_\theta^{\otimes t}\ket{v_z}.
\]
\begin{samepage}
Differentiating and simplifying gives
\begin{equation}
\label{eq:overview-exact-score}
  s_b(z)
  =
  \Tr\bigl(F_b\rho^{-1/2}G_1(u_z)\rho^{-1/2}\bigr)
  =
  \Tr\bigl(F_b\rho^{-1/2}B_z\rho^{-1/2}\bigr).
\end{equation}
The second equality uses $G_1(u_z)=B_z+t\rho$ and $\Tr F_b=0$.  Thus the
score is determined by the deviation of $G_1(u_z)$ from $t\rho$.
\end{samepage}

The completeness relation for the measurement gives
\begin{align}
\label{eq:overview-exact-ensemble}
  \sum_zp_z\proj{u_z}
  &=
  \sum_z w_zR^{1/2}\proj{v_z}R^{1/2}
  =R,
  \notag\\
  \sum_zp_zG_1(u_z)
  &=t\rho.
\end{align}
The first equality says that the pairs $(p_z,u_z)$ form a pure-state
ensemble decomposition of $\rho^{\otimes t}$.  Applying the definition of
$G_1$ to both sides gives the second equality.  In particular, the average
of $G_1(u_z)$ over the measurement outcomes is $t\rho$, so
$B_z$ has mean zero.

Equation~\eqref{eq:overview-exact-score} expresses $s_b(z)$ as the inner
product of $F_b$ with
$\rho^{-1/2}B_z\rho^{-1/2}$.  Parseval's identity for
traceless Hermitian matrices
(\cref{fact:traceless-parseval}) and the Schatten norm comparison in
\cref{fact:schatten-comparisons} therefore give
\nopagebreak[4]
\begin{align}
\label{eq:overview-ensemble-score}
  \norm{s(z)}_2^2
  &=
  \sum_{b=1}^{d^2-1}
  \ip{F_b}{\rho^{-1/2}B_z\rho^{-1/2}}^2
  \notag\\
  &\leq
  \norm{\rho^{-1/2}B_z\rho^{-1/2}}_F^2
  \notag\\
  &\leq
  \frac{\norm{B_z}_F^2}
       {\lambda_{\min}(\rho)^2}.
\end{align}
By definition, the Fisher information matrix is
\[
  J_M(\rho)=\sum_zp_zs(z)s(z)^\mathsf T.
\]
Taking the trace and using
$\Tr(s(z)s(z)^\mathsf T)=\norm{s(z)}_2^2$ gives
\[
  \Tr J_M(\rho)=\sum_zp_z\norm{s(z)}_2^2.
\]
It remains to bound the average squared deviation
\[
  \sum_zp_z\norm{B_z}_F^2.
\]

The simplest bound for an individual outcome is not strong enough.
Since $G_1(u)$ is positive semidefinite with trace $t$,
\[
  \norm{G_1(u)}_F^2\leq t^2.
\]
Using only this estimate would give
$\Tr J_M(\rho)\lesssim t^2/\lambda_{\min}(\rho)^2$, which is of order
$d^2t^2$ when $\rho\succeq I_d/(4d)$.  This is too weak by a factor of
$\sqrt{t}$.  To obtain the stronger bound, we need to use the average
over the measurement outcomes.  The first equality in
\cref{eq:overview-exact-ensemble} determines the average of any quantity
that is linear in $\proj{u_z}$, whereas
$\norm{G_1(u_z)}_F^2$ is not linear in $\proj{u_z}$.  The Schur--Weyl
decomposition gives an upper bound on this squared norm that is linear
in $\proj{u_z}$ and can therefore be averaged using
\cref{eq:overview-exact-ensemble}.

\paragraph{Averaging over Schur subspaces.}
The space $(\C^d)^{\otimes t}$ decomposes into Schur subspaces indexed by
partitions $\lambda=(\lambda_1,\ldots,\lambda_d)$ of $t$, with
missing rows padded by zeros.  Let $\Pi_\lambda$ be the projector onto the
Schur subspace indexed by $\lambda$.  Weak Schur sampling on $\rho^{\otimes t}$ returns
$\lambda$ with probability $\Tr(\Pi_\lambda\rho^{\otimes t})$.  This is the
Schur--Weyl distribution $\SW^t(\operatorname{spec}\rho)$.  We write
$\ell(\lambda)$ for the number of nonzero parts.  The bound on the unfolded
matrix and the moment bound for the Schur--Weyl distribution
(\cref{fact:cll-unfolded-bound,fact:sw-second-moment}) give
\begin{equation}
\label{eq:overview-cll-projection-bound}
  \norm{G_1(u)}_F^2
  \leq
  \sum_{\substack{\lambda\vdash t\\\ell(\lambda)\leq d}}
  \norm{\Pi_\lambda u}_2^2\sum_{i=1}^d\lambda_i^2
\end{equation}
and
\begin{equation}
\label{eq:overview-cll-sw-moment-bound}
  \E_{\lambda\sim\SW^t(\operatorname{spec}\rho)}
  \Brac{\sum_{i=1}^d\lambda_i^2}
  \leq
  t(t-1)\Tr(\rho^2)+2t^{3/2}.
\end{equation}
The right-hand side of \cref{eq:overview-cll-projection-bound} is linear in
the squared projection norms $\norm{\Pi_\lambda u}_2^2$.  By
\cref{eq:overview-exact-ensemble},
$\sum_zp_z\norm{\Pi_\lambda u_z}_2^2
=\Tr(\Pi_\lambda\rho^{\otimes t})$.  Thus, averaging
\cref{eq:overview-cll-projection-bound} over $z$ replaces each projection
weight by the corresponding Schur--Weyl probability:
\begin{align}
\label{eq:overview-schur-weyl-step}
  \sum_zp_z\norm{G_1(u_z)}_F^2
  &\leq
  \sum_{\substack{\lambda\vdash t\\\ell(\lambda)\leq d}}
  \Tr(\Pi_\lambda\rho^{\otimes t})
  \sum_{i=1}^d\lambda_i^2
  \notag\\
  &=
  \E_{\lambda\sim\SW^t(\operatorname{spec}\rho)}
  \Brac{\sum_{i=1}^d\lambda_i^2}
  \notag\\
  &\leq
  t(t-1)\Tr(\rho^2)+2t^{3/2}.
\end{align}
Since the mean of $G_1(u_z)$ is $t\rho$, expanding the squared deviation
subtracts $t^2\Tr(\rho^2)$.  This cancels the
$t(t-1)\Tr(\rho^2)$ term in the preceding bound up to the nonpositive
remainder $-t\Tr(\rho^2)$:
\begin{equation}
\label{eq:overview-centered-unfolded-bound}
\begin{aligned}
  \sum_zp_z\norm{B_z}_F^2
  &=
  \sum_zp_z\norm{G_1(u_z)}_F^2-t^2\Tr(\rho^2)
  \\
  &\leq
  2t^{3/2}-t\Tr(\rho^2)
  \leq 2t^{3/2}.
\end{aligned}
\end{equation}
The improvement comes from averaging.  By
\cref{eq:overview-exact-ensemble}, the weighted average of the projectors
$\proj{u_z}$ is $\rho^{\otimes t}$, which determines the average weight in
each Schur subspace.  Subtracting the mean then removes the term of order
$t^2\Tr(\rho^2)$.

Averaging \cref{eq:overview-ensemble-score} over the measurement outcomes
and using \cref{eq:overview-centered-unfolded-bound} gives
\[
  \Tr J_M(\rho)
  \leq
  \frac{2t^{3/2}}{\lambda_{\min}(\rho)^2}.
\]
Together with the standard symmetric logarithmic derivative bound, this shows
that every POVM $M$ on $t$ copies and every full-rank state $\rho$ satisfy
\[
  \Tr J_M(\rho)
  \lesssim
  \min\Set{
    \frac{t^{3/2}}{\lambda_{\min}(\rho)^2},
    \frac{td^2}{\lambda_{\min}(\rho)}
  }.
\]
This is the block Fisher information bound in \cref{thm:block-fi}.
The two Schur--Weyl estimates used above come from \cite{CLL24}.  There they
are applied to an expansion of $(I_d/d+\Delta)^{\otimes t}$, which requires a
relation between $t$ and $\eps$.  Here the score formula and the average
in \cref{eq:overview-exact-ensemble} give the cancellation in
\cref{eq:overview-centered-unfolded-bound} directly, so the resulting bound
does not depend on $\eps$.

\subsection{The Prior and the Mutual Information Lower Bound}
\label{sec:overview-prior}

So far, we have proved the Fisher information bound for one measurement block
and shown how the adaptive chain rule turns it into a bound for the full
transcript, provided every state in the prior satisfies the required spectral
bound.  We now return to the prior in \cref{eq:overview-prior} and verify the
properties that complete the information comparison.

The lower spectral bound (\cref{eq:prior-spectral-bound}) gives
$\rho_\Delta\succeq I_d/(2d)$ throughout the support of the prior.  Thus
\cref{eq:overview-block-bound} applies to every prior draw and every block of
the adaptive protocol.  The log-Sobolev bound (\cref{eq:hard-prior-lsi}) then
converts the resulting Fisher information bound for the transcript into the
mutual information upper bound.  The lower bound on the probability of $K$
(\cref{eq:prior-conditioning-probability}) transfers the Gaussian estimate
before conditioning to the law $\mu$.  The resulting uniform bound on trace
norm balls (\cref{eq:prior-small-ball}) says that every ball of radius
$2\eps$ has prior mass exponentially small in $d^2$.  On success, the
estimator centers one such ball containing $\Delta$, so Fano's inequality
forces the transcript to contain order $d^2$ mutual information.  These
properties are stated together in \cref{prop:prior-properties} and proved in
\cref{sec:hard-prior}.

Truncation is needed because the unconditioned Gaussian can make
$I_d/d+\Delta$ nonpositive.  Since conditioning need not preserve a
log-Sobolev inequality, we use the convexity of $K$: the contraction theorem
in \cite{Caffarelli00,FGP20} gives the required bound.
$\Ent_\mu$ denotes entropy with respect to $\mu$.  For every nonnegative $f$ with
$\E_\mu f=1$,
\[
  \Ent_\mu(f)
  \leq
  \frac{\tau^2}{d^3}
  \E_\mu\Brac{\frac{\norm{\nabla f}_2^2}{f}}.
\]
Applying the mutual information comparison in \cref{eq:mi-from-fi} now gives
\[
  I(\Delta;Y)
  \leq
  \frac{\tau^2}{d^3}
  \E_{\Delta\sim\mu}\Tr J_Y(\rho_\Delta).
\]
This is the conversion from transcript Fisher information to the mutual
information upper bound used in \cref{sec:overview-information-upper}.

It remains to prove the lower bound on mutual information.  The needed
Gaussian estimates are the Hermitian norm estimates and the bound on the
lower tail of the trace norm
(\cref{lem:gaussian-hermitian-norms,lem:trace-norm-lower-tail}), proved in
\cref{sec:anti-concentration}.  Before conditioning, the perturbation has the
law of $Z/\sqrt\kappa$, where $Z$ is a standard Gaussian on
$\operatorname{Herm}_0(\C^d)$.  Its
trace norm satisfies
$\E\norm Z_1\gtrsim d^{3/2}$ and is $\sqrt d$-Lipschitz in Frobenius distance.
Gaussian concentration therefore makes the probability that $\norm Z_1$ is
below a sufficiently small constant multiple of $d^{3/2}$ exponentially
small in $d^2$.  The scale in \cref{eq:overview-prior} turns a trace norm
ball of radius $2\eps$ into a ball on this scale.  Because the estimator
determines the center of this ball, the estimate must be uniform over all
centers.  Anderson's inequality (\cref{fact:anderson}) provides this
uniformity, and $K$ has probability bounded below by an absolute constant
(\cref{eq:prior-conditioning-probability}).  Hence
\begin{equation}
\label{eq:overview-prior-ball}
  -\log\sup_{\Delta_0\in\operatorname{Herm}_0(\C^d)}
  \mu\Set{\Delta:\norm{\Delta-\Delta_0}_1\leq2\eps}
  \gtrsim d^2.
\end{equation}
Equivalently, there is an absolute constant $c>0$ such that every trace norm
ball of radius $2\eps$ has $\mu$-mass at most $e^{-cd^2}$, uniformly over its
center.  This is the uniform prior mass bound in
\cref{eq:prior-small-ball}.

Now suppose that $\widehat\rho$ succeeds, and set
$\widehat\Delta=\widehat\rho-I_d/d$.  Then
$\norm{\Delta-\widehat\Delta}_1\leq2\eps$.  Applying Fano's inequality for
metric balls (\cref{lem:small-ball-fano}) together with
\cref{eq:overview-prior-ball} gives
\[
  I(\Delta;Y)\gtrsim d^2,
\]
as stated in \cref{cor:information-lower-bound}.  This is the lower side of
the information comparison in \cref{eq:overview-information-sandwich}.

The condition $\eps\leq\eps_0$ in \cref{thm:main-lower-bound} comes from the
prior used in this proof.  Every state in its support lies within trace
distance $1/4$ of $I_d/d$, so extending the lower bound to larger constant
values of $\eps$ would require a different prior.

%% file: sections/related_work.tex
\section{Related Work}
\label{sec:related-work}

\paragraph{Unrestricted joint and single-copy tomography.}
Here an unrestricted joint measurement means a single POVM acting on all
$n$ copies; it is also called a collective measurement.  Haah, Harrow, Ji,
Wu, and Yu~\cite{HHJWWY16} proved that tomography requires
$n\gtrsim d^2/\eps^2$ copies even in this unrestricted model, while O'Donnell
and Wright~\cite{OW16} gave a matching procedure with
$n\lesssim d^2/\eps^2$.  At the opposite endpoint, in \cite{GKKT20},
Gu\c{t}\u{a}, Kahn, Kueng, and Tropp gave a single-copy procedure with
$n\lesssim d^3/\eps^2$.  In \cite{CHHLLS23}, Chen,
Huang, Li, Liu, and Sellke proved a matching lower bound and studied when
classical adaptivity can help state learning.  Lower bounds for restricted
single-copy measurements were also developed in \cite{LN25} by Lowe and
Nayak.  In \cite{ADLY25}, Acharya, Dharmavarapu, Liu, and Yu obtained
adaptive lower bounds for Pauli measurements and for measurements with a
prescribed number of outcomes.

\paragraph{Tomography with limited entanglement.}
In \cite{CLL24}, Chen, Li, and Liu studied tomography when each measurement
acts on at most $k$ copies and obtained the first tradeoff between the number
of copies measured at a time and the total copy complexity.
For some absolute constants $\alpha,\gamma>0$, their upper bound covered
$k\leq\min\{d^2,(\sqrt d/\eps)^\alpha\}$ up to logarithmic factors, while
their matching lower bound required $k\leq(1/\eps)^\gamma$.  The proof of
their lower bound expands $\rho^{\otimes k}$, and they conjectured that this
restriction on $k$ was an artifact.  Their lower bound follows the strategy of
\cite{CHHLLS23}: it controls how the posterior changes after each measurement
under a Gaussian perturbation centered at the maximally mixed state and
conditioned on an operator norm bound.  We use the same prior.  Instead of
expanding likelihood ratios and analyzing the posterior, we prove the uniform
Fisher information bound for a block measurement (\cref{thm:block-fi}), use
the prior's log-Sobolev inequality to bound mutual information, and apply
Fano's inequality for metric balls.
This removes the relation between $k$ and the accuracy.  Later, in
\cite{PSW26}, Pelecanos,
Spilecki, and Wright proved
the upper bound
$n\lesssim\max\{d^3/(\sqrt{k}\eps^2),d^2/\eps^2\}$ for every $k$, without
logarithmic factors.  Together with that upper bound, our main lower bound
(\cref{thm:main-lower-bound}) determines the copy complexity, up to absolute
constants, for every $k$ and all sufficiently small $\eps$.
Subsequently, in \cite{PSTW25}, Pelecanos, Spilecki, Tang, and Wright gave
another proof of this upper bound through a reduction from mixed state
tomography to pure state tomography.

\paragraph{Fisher information in state estimation.}
The symmetric logarithmic derivative gives a standard upper bound on the
classical Fisher information of every measurement
\cite{Helstrom76,Holevo11}.  In \cite{GillMassar00}, Gill and Massar studied
constraints on the classical Fisher information available in multiparameter
state estimation.  In \cite{ZhuHayashi18}, Zhu and Hayashi proved a sharp
two-copy Fisher information inequality at every state of full rank.
In \cite{CGZ26}, Chen, Gong, and Zhou obtained bounds on the sample complexity
of shadow estimation at high accuracy with adaptive measurements that act
jointly on a limited number of copies.  Their instance dependence is
expressed through inverse Fisher information.  The $d^2/\eps^2$
nonasymptotic minimax lower bound for unrestricted measurements follows from
\cite{HHJWWY16}.  Local Cram\'{e}r--Rao theory reflects the same scale, but it
does not capture the additional cost when each measurement acts on at most
$k$ copies.  Combined with the information argument, our Fisher information
bound for one block (\cref{thm:block-fi}) yields a copy lower bound larger than
the unrestricted $d^2/\eps^2$ rate by a factor $d/\sqrt{k}$ when $k\leq d^2$.
The bound applies at every full-rank state in terms of its smallest eigenvalue.

\paragraph{Schur--Weyl estimates.}
We use two estimates from
\cite[full version, Lemma~6.1 and the proof of Claim~3.27]{CLL24}: a bound on
the sum of the one-copy reduced states in terms of projections onto Schur
subspaces, and a moment bound for weak Schur sampling.  In
\cite[Supplemental Material, Section~VI, Eqs.~(S63)--(S66)]{ZhuHayashi18},
Zhu and Hayashi use the same change of variables for two copies.  We use its
direct extension to $t$ copies and center the sum of the reduced states at its
mean.  Combining this centered expression with the two estimates from
\cite{CLL24} gives a Fisher information bound that is uniform in $t$ and
depends only on the smallest eigenvalue of the state.

\paragraph{Information-theoretic lower bounds.}
The remaining information-theoretic ingredients come from prior work.  In
\cite[Lemma~1]{XuRaginsky17}, Xu and Raginsky give the mutual information
lower bound in terms of small-ball probabilities that we use.  In
\cite[Theorem~1]{ALPC19}, Aras, Lee, Pananjady, and Courtade bound mutual
information in terms of average Fisher information when a reference measure
satisfies a log-Sobolev inequality.  We use the specialization in which the
prior is also the reference measure.  In
\cite{Caffarelli00}, Caffarelli proved the contraction theorem for uniformly
log-concave measures.
In \cite{FGP20}, Fathi, Gozlan, and Prod'homme proved a version that permits
convex potentials to take the value $+\infty$.  This version gives the needed
log-Sobolev inequality for our truncated Gaussian.

%% file: sections/preliminaries.tex
\section{Preliminaries}
\label{sec:preliminaries}

This section fixes notation and recalls the results used in the proof.  The
main new ingredient for the lower bound is the Fisher information bound for a
block measurement (\cref{thm:block-fi}).

\subsection{Matrix and State Geometry}
\label{sec:prelim-matrix}

For a matrix $H$, we write $\norm H_1$, $\norm H_F$, and $\norm H_{\op}$
for its trace norm, Frobenius norm, and operator norm.  The inner product on
Hermitian matrices is $\ip HG=\Tr(HG)$.  The identity on $\C^d$ is $I_d$.

Let
\[
  \operatorname{Herm}_0(\C^d)
  =
  \Set{H\in\C^{d\times d}:H=H^\dagger,\ \Tr H=0}.
\]
This is a real Hilbert space of dimension $d^2-1$.  We fix a basis
$F_1,\ldots,F_{d^2-1}$ that is orthonormal under the Frobenius inner product
and identify $\theta\in\R^{d^2-1}$ with
$\sum_{b=1}^{d^2-1}\theta_bF_b\in\operatorname{Herm}_0(\C^d)$.  All gradients with respect to a
state parameter are taken in these coordinates.

The trace distance between states $\rho$ and $\sigma$ is
$\Dtr(\rho,\sigma)=\norm{\rho-\sigma}_1/2$.  We use natural logarithms, so
all information quantities are measured in nats.  For a set $E$, $\one_E$
denotes its indicator.

We repeatedly use the following two elementary calculations.  The first is
Parseval's identity, which says that the squared coordinates in an
orthonormal basis sum to the squared norm.

\begin{fact}[Parseval's identity for traceless Hermitian matrices]
\label{fact:traceless-parseval}
For every Hermitian matrix $H$,
\begin{equation}
\label{eq:traceless-parseval}
  \sum_{b=1}^{d^2-1}\ip{F_b}{H}^2
  =
  \norm{H-\frac{\Tr H}{d}I_d}_F^2
  =
  \norm H_F^2-\frac{(\Tr H)^2}{d}.
\end{equation}
In particular, the sum is at most $\norm H_F^2$.
\end{fact}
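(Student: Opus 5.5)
The plan is to prove this directly from the orthonormality of $F_1,\ldots,F_{d^2-1}$ in the real inner product space of Hermitian matrices. Write $H_0=H-\frac{\Tr H}{d}I_d$, which is Hermitian and traceless, so $H_0\in\operatorname{Herm}_0(\C^d)$. The first step is to check that the coefficients of $H$ and $H_0$ against each basis element coincide: since $\Tr F_b=0$, we have $\ip{F_b}{H}=\ip{F_b}{H_0}+\frac{\Tr H}{d}\Tr F_b=\ip{F_b}{H_0}$. This reduces the sum on the left of \cref{eq:traceless-parseval} to $\sum_b\ip{F_b}{H_0}^2$.

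The second step is to use that $F_1,\ldots,F_{d^2-1}$ is an orthonormal basis of the real Hilbert space $\operatorname{Herm}_0(\C^d)$, whose dimension is $d^2-1$. Hence $H_0=\sum_b\ip{F_b}{H_0}F_b$, and the ordinary Parseval identity gives $\sum_b\ip{F_b}{H_0}^2=\ip{H_0}{H_0}=\norm{H_0}_F^2$. Here it matters that $\ip HG=\Tr(HG)$ is real on Hermitian matrices and that $\Tr(H_0^2)=\Tr(H_0^\dagger H_0)=\norm{H_0}_F^2$.

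The third step is to expand the norm. This gives $\norm{H_0}_F^2=\Tr(H^2)-\frac{2(\Tr H)^2}{d}+\frac{(\Tr H)^2}{d^2}\Tr I_d=\norm H_F^2-\frac{(\Tr H)^2}{d}$. Equivalently, $H_0$ is orthogonal to $\frac{\Tr H}{d}I_d$, whose squared norm is $(\Tr H)^2/d$, so the Pythagorean theorem gives the same identity. The final claim that the sum is at most $\norm H_F^2$ follows because $(\Tr H)^2/d\geq0$, since $\Tr H$ is real.

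No step is a real obstacle. The only point needing care is confirming that the basis spans all of $\operatorname{Herm}_0(\C^d)$, so that Parseval holds with equality; this is exactly the dimension count $d^2-1$ fixed in \cref{sec:prelim-matrix}.
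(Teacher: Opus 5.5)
Your proof is correct and takes essentially the same approach as the paper: both use Parseval's identity together with the orthogonal splitting of $H$ into its traceless part and $\frac{\Tr H}{d}I_d$. The only difference is cosmetic. The paper extends $F_1,\ldots,F_{d^2-1}$ by $I_d/\sqrt d$ to an orthonormal basis of all Hermitian matrices and applies Parseval to $H$ directly, whereas you first project onto $\operatorname{Herm}_0(\C^d)$ and then expand the norm.
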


\begin{proof}
The matrices $F_1,\ldots,F_{d^2-1},I_d/\sqrt d$ form an orthonormal basis
of the real Hilbert space of Hermitian matrices.  Applying Parseval and
separating the coordinate
$\ip{I_d/\sqrt d}{H}=(\Tr H)/\sqrt d$ gives the last equality.  The matrix
$H-(\Tr H)I_d/d$ is the orthogonal projection of $H$ onto
$\operatorname{Herm}_0(\C^d)$, which gives the first equality.
\end{proof}

\begin{fact}[Schatten norm comparisons]
\label{fact:schatten-comparisons}
For compatible matrices $L,H,R$,
\[
  \norm{LHR}_F
  \leq
  \norm L_{\op}\norm H_F\norm R_{\op}.
\]
For Hermitian $H,G\in\C^{d\times d}$,
\begin{align}
  \norm H_F^2
  &\leq
  \norm H_{\op}\norm H_1,
  \label{eq:frobenius-trace-operator}\\
  \abs{\norm H_1-\norm G_1}
  &\leq
  \norm{H-G}_1
  \leq
  \sqrt d\norm{H-G}_F.
  \label{eq:trace-lipschitz}
\end{align}
\end{fact}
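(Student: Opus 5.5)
We prove the three displayed inequalities of \cref{fact:schatten-comparisons} separately, each by a short spectral argument.

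\textbf{First inequality.} We reduce it to the one-sided bounds $\norm{LH}_F\leq\norm L_{\op}\norm H_F$ and $\norm{HR}_F\leq\norm H_F\norm R_{\op}$, and apply them in succession. For the left bound, write $\norm{LH}_F^2=\sum_j\norm{Lh_j}_2^2$, where $h_j$ are the columns of $H$. Each term satisfies $\norm{Lh_j}_2^2\leq\norm L_{\op}^2\norm{h_j}_2^2$, and summing over $j$ gives the claim. The right bound follows by taking adjoints, since $\norm X_F=\norm{X^\dagger}_F$ and $\norm R_{\op}=\norm{R^\dagger}_{\op}$.

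\textbf{Second inequality, \cref{eq:frobenius-trace-operator}.} Diagonalize the Hermitian matrix $H$ with real eigenvalues $\mu_i$. Then
\[
  \norm H_F^2=\sum_i\mu_i^2\leq\max_i\abs{\mu_i}\sum_i\abs{\mu_i}=\norm H_{\op}\norm H_1 .
\]

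\textbf{Chain \cref{eq:trace-lipschitz}.} This has two parts.

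The first part, $\abs{\norm H_1-\norm G_1}\leq\norm{H-G}_1$, is the reverse triangle inequality. It holds because the trace norm is a norm. The triangle inequality for $\norm\cdot_1$ on Hermitian matrices follows from the variational formula
\[
  \norm X_1=\max_{\norm U_{\op}\leq1}\abs{\Tr(UX)},
\]
or, more simply for Hermitian $X$, from $\norm X_1=\max_{-I\preceq S\preceq I}\Tr(SX)$. The latter is attained at $S=\operatorname{sign}(X)$. Any admissible $S$ gives at most $\norm X_1$, since $\Tr(SX)=\sum_i\mu_i\bra{e_i}S\ket{e_i}$ and each diagonal entry satisfies $\abs{\bra{e_i}S\ket{e_i}}\leq1$. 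A maximum of linear functionals is subadditive, which gives the triangle inequality.

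The second part, $\norm{H-G}_1\leq\sqrt d\norm{H-G}_F$, is Cauchy--Schwarz on the $d$ eigenvalues $\mu_i$ of $X=H-G$:
\[
  \sum_i\abs{\mu_i}\leq\sqrt d\Bigl(\sum_i\mu_i^2\Bigr)^{1/2}.
\]

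None of these steps is a real obstacle. The only point needing any care is justifying that $\norm\cdot_1$ is a norm, which the variational characterization above handles.
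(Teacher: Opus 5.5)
Your proof is correct and follows the paper's route. The paper gives only a brief justification: Hölder for the second inequality, Cauchy--Schwarz on the singular values (for Hermitian matrices, the absolute eigenvalues) and the reverse triangle inequality for the third, and the variational definition of the Frobenius norm for the first. You fill in these details, including a valid proof that $\norm{\cdot}_1$ is a norm via $\norm X_1=\max_{-I\preceq S\preceq I}\Tr(SX)$. For the first inequality you bound column by column and then take adjoints; this is an equally standard one-line argument and makes no real difference.
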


\begin{proof}
The first inequality follows from the variational definition of the
Frobenius norm.  The remaining inequalities follow from Hölder,
Cauchy--Schwarz for the singular values, and the reverse triangle inequality.
\end{proof}

\subsection{Measurements and Fisher Information}
\label{sec:prelim-fisher}

We next define the measurement model and collect the Fisher information facts
used in the block and adaptive arguments.

\paragraph{Measurements and adaptive protocols.}
A positive operator-valued measure (POVM) on a finite-dimensional Hilbert
space $\calH$ is a countably additive positive-operator-valued measure $M$
on a standard Borel space $(\calY,\mathcal F)$ such that
$M(\calY)=I_\calH$.  Measuring a state $R$ gives the probability measure
$P_R(B)=\Tr(M(B)R)$.  We write $\{M_y:y\in\calY\}$ and use sums for the
case in which the outcome set is finite.

Let $\xi$ denote the protocol's internal randomness, drawn independently of
the unknown state, and write
$Y_{<j}=(Y_1,\ldots,Y_{j-1})$.  Given $(\xi,Y_{<j})$, round $j$ selects
$t_j\in\{1,\ldots,k\}$ and a POVM
$M_j(\cdot\mid\xi,Y_{<j})$ on $(\C^d)^{\otimes t_j}$.  The stopping time $T$
satisfies $\sum_{j=1}^{T}t_j\leq n$ almost surely.  The transcript is
$Y=(\xi,Y_1,\ldots,Y_T)$, and only classical information is carried between
rounds.  We write $\E_\rho$ for expectation under the transcript law generated
by $\rho$.  We make the standard measurability assumption that the block
sizes and POVMs selected from each history form measurable kernels.

\paragraph{Classical Fisher information.}
Let $D=\dim\calH$.  Every POVM $M$ has a canonical probability measure
\[
  \nu_M(B)=\frac{\Tr M(B)}{D}.
\]
The Radon--Nikodym theorem, applied entrywise in any basis of $\calH$, gives a
measurable positive semidefinite operator $K(y)$ such that
\[
  M(\dd y)=K(y)\nu_M(\dd y).
\]
Moreover, $\Tr K(y)=D$ for $\nu_M$-almost every $y$.
For a differentiable family of states $R_\theta$, the outcome law has density
$p_\theta(y)=\Tr(K(y)R_\theta)$ with respect to $\nu_M$.  Its score and Fisher
information are
\[
  s_\theta(y)=\nabla_\theta\log p_\theta(y),
  \qquad
  J_Y(\theta)=\E_\theta[s_\theta s_\theta^\mathsf T],
\]
whenever $s_\theta\in L^2(P_\theta)$.  Thus
$\Tr J_Y(\theta)=\E_\theta\norm{s_\theta}_2^2$.  On a finite outcome space,
this is the familiar formula
\[
  J_Y(\theta)
  =
  \sum_{y:p_\theta(y)>0}
  p_\theta(y)s_\theta(y)s_\theta(y)^\mathsf T.
\]
For a full-rank state $\rho$, we restrict $\theta$ to a sufficiently small
open neighborhood of zero on which
$\rho_\theta=\rho+\sum_{b=1}^{d^2-1}\theta_bF_b$ remains a state.  For any
observation generated from this family, we abbreviate
$J_Y(\rho)=J_Y(0)$.  If $Y$ is the outcome of a POVM $M$ on $t$ copies, we
also write this matrix as $J_M(\rho)$.  For $F=\sum_bf_bF_b$ and
$f=(f_b)_b$, write $J_M(\rho)[F,F]=f^\mathsf T J_M(\rho)f$.

At a full-rank input, the score of every POVM is automatically in $L^2$.
Indeed, if $D_b$ is the derivative of the input state $R$ in coordinate $b$,
then $|s_b(y)|\leq\norm{R^{-1/2}D_bR^{-1/2}}_{\op}$ almost everywhere.  This
follows from $|\Tr(BA)|\leq\norm A_{\op}\Tr B$ for
$B=R^{1/2}K(y)R^{1/2}\succeq0$.

\begin{fact}[Fisher information for general outcomes]
\label{fact:fi-data-processing}
The score of the POVM outcome above has mean zero.  For every finite set
$\mathcal A$ and every fixed measurable map $\phi:\calY\to\mathcal A$ that
does not depend on $\theta$, the random variable $\phi(Y)$ takes one of
finitely many values determined by $Y$.  Its score satisfies
\[
  s_{\phi(Y)}(\phi(Y))
  =
  \E_\theta[s_Y(Y)\mid\phi(Y)].
\]
Consequently,
\begin{equation}
\label{eq:general-outcome-fi-supremum}
  \Tr J_Y(\theta)
  =
  \sup_{\mathcal A,\phi}\Tr J_{\phi(Y)}(\theta),
\end{equation}
where the supremum is over all such finite sets and maps.
More generally, suppose $\widetilde Y$ is the outcome of a POVM applied to a
family of states on a finite-dimensional space that is differentiable in
trace norm.  Let $Y$ be obtained from $\widetilde Y$ by a possibly randomized
rule that does not depend on $\theta$.  If the score of $\widetilde Y$ is in
$L^2$, then the score of $Y$ is in $L^2$ and
\[
  s_Y(Y)
  =
  \E_\theta[s_{\widetilde Y}(\widetilde Y)\mid Y],
\]
and hence
\[
  J_Y(\theta)
  \preceq
  J_{\widetilde Y}(\theta),
  \qquad
  \Tr J_Y(\theta)
  \leq
  \Tr J_{\widetilde Y}(\theta).
\]
Thus deleting part of an observation, or randomizing it without knowledge of
$\theta$, cannot increase its Fisher information.
\end{fact}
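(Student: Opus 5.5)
The whole argument rests on one structural observation: every outcome law in the statement is \emph{linear} in the input state. This removes any need for regularity beyond differentiability of $\theta\mapsto R_\theta$.

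\textbf{Mean-zero score.} For the POVM outcome, write $p_\theta(y)=\Tr(K(y)R_\theta)$ with $\Tr K(y)=D$ almost everywhere. Then
\[
  \abs{\Tr(K(y)A)}\leq D\norm{A}_{\op}
\]
for every Hermitian $A$. Hence the difference quotients of $p_\theta(y)$ are bounded uniformly in $y$ by $D$ times the operator norm of the corresponding difference quotient of $R_\theta$. Since $\nu_M$ is a probability measure, dominated convergence lets us differentiate $\int p_\theta\,\dd\nu_M=1$ under the integral. This gives $\int\partial_bp_\theta\,\dd\nu_M=0$, i.e.\ $\E_\theta s_Y=0$.

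\textbf{Finite coarse-grainings.} For a finite-valued measurable $\phi$, the law of $\phi(Y)$ is
\[
  q_\theta(a)=\int_{\phi=a}p_\theta\,\dd\nu_M=\Tr\bigl(M(\phi^{-1}(a))R_\theta\bigr).
\]
The same domination allows differentiation under the integral:
\[
  \partial_bq_\theta(a)=\int_{\phi=a}s_b\,p_\theta\,\dd\nu_M .
\]
Dividing by $q_\theta(a)$ gives exactly $\E_\theta[s_Y\mid\phi(Y)=a]$. Atoms with $q_\theta(a)=0$ are null and can be ignored. Conditional Jensen then yields $J_{\phi(Y)}\preceq J_Y$, so the supremum in \cref{eq:general-outcome-fi-supremum} is at most $\Tr J_Y$.

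For the reverse inequality, use that $\calY$ is standard Borel, so $\mathcal F$ is countably generated. Choose finite-valued $\phi_m$ generating an increasing filtration whose union generates $\mathcal F$. The martingale convergence theorem in $L^2$ gives $\E_\theta[s_Y\mid\phi_m(Y)]\to s_Y$ in $L^2(P_\theta)$, and hence $\Tr J_{\phi_m(Y)}\to\Tr J_Y$.

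\textbf{Randomized post-processing.} Let $Y$ be produced from $\widetilde Y$ by a $\theta$-independent Markov kernel $\kappa(\dd y\mid\tilde y)$. The key point is that $Y$ is itself the outcome of the POVM
\[
  M'(B)=\int\kappa(B\mid\tilde y)\,M(\dd\tilde y),
\]
so its density is again linear in $R_\theta$ and the domination argument above applies verbatim. Concretely, the pair $(\widetilde Y,Y)$ has density $p_\theta(\tilde y)=\Tr(K(\tilde y)R_\theta)$ with respect to the $\theta$-free reference measure $\nu_M(\dd\tilde y)\,\kappa(\dd y\mid\tilde y)$. Its score is therefore $s_{\widetilde Y}(\widetilde Y)$.

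The marginal density of $Y$ is obtained by integrating out $\tilde y$. Differentiating under the integral is justified by the uniform bound above, now using trace-norm differentiability of $R_\theta$ together with $\norm{\cdot}_{\op}\leq\norm{\cdot}_1$. This gives
\[
  s_Y(Y)=\E_\theta\bigl[s_{\widetilde Y}(\widetilde Y)\mid Y\bigr].
\]
The $L^2$ membership of $s_Y$ and the bound $J_Y\preceq J_{\widetilde Y}$ then follow from conditional Jensen.

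\textbf{Main obstacle.} The step needing the most care is the interchange of derivative and integral for general, non-full-rank families, where the score need not be bounded. I expect the linearity observation above to resolve it, since it gives a $y$-uniform bound on difference quotients that does not depend on the score at all.
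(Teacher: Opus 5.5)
Your proposal is correct. The mean-zero step and the coarse-graining identity match the paper. The paper's remark that $\Tr K=D$ almost everywhere lets differentiation pass through the integral is exactly your bound $\abs{\Tr(K(y)A)}\leq D\norm A_{\op}$ on the difference quotients, which you spell out, and both arguments then use conditional Jensen.

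You depart from the paper in two places.

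For the reverse inequality in \cref{eq:general-outcome-fi-supremum}, you use L\'evy's upward martingale convergence along finite partitions generating $\mathcal F$, which relies on countable generation of the standard Borel $\sigma$-algebra. The paper instead approximates $s_Y$ in $L^2$ by a single finite-range simple function $h$ and takes $\phi=h$. Since conditional expectation is the $L^2$ projection onto $\sigma(h(Y))$-measurable functions, $\E_\theta\norm{s_Y-\E_\theta[s_Y\mid h(Y)]}_2^2\leq\E_\theta\norm{s_Y-h}_2^2$, and the Pythagorean identity finishes. The paper's route is more elementary and needs no structure on $\mathcal F$; yours produces one explicit increasing sequence of coarse-grainings whose Fisher informations converge.

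For randomized post-processing, you realize $Y$ as the outcome of the composed POVM $M'$. Its density with respect to $\nu_{M'}$ is then again linear in $R_\theta$, so your first domination argument applies verbatim. The paper instead differentiates the law of $\widetilde Y$ in total variation and pushes the signed derivative through the kernel. Your packaging gives a $\theta$-independent version of the density of $Y$, namely $\Tr(K'(y)R_\theta)$. Here $K'(y)$ is the conditional average of $K(\widetilde Y)$ given $Y=y$ under the reference measure $\nu_M(\dd\tilde y)\kappa(\dd y\mid\tilde y)$. This settles pointwise differentiability for all $\theta$ at once. That conditional average, or a disintegration (available on standard Borel spaces), is what your phrase ``integrating out $\tilde y$'' should be taken to mean.
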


\begin{proof}
Since $\Tr K=D$ almost everywhere, differentiation may be passed through the
integral.  Differentiating $\int p_\theta\,\dd\nu_M=1$ gives
$\E_\theta s_\theta(Y)=0$.  If $A=\phi(Y)$ and
$B_a=\phi^{-1}(a)$, then
\[
  \Pr_\theta[A=a]
  =
  \int_{B_a}p_\theta(y)\nu_M(\dd y).
\]
Differentiating and dividing by $\Pr_\theta[A=a]$ proves the conditional
expectation identity.  Conditional Jensen gives
$J_A(\theta)\preceq J_Y(\theta)$ and hence the upper bound in
\cref{eq:general-outcome-fi-supremum}.

For the reverse bound, choose a simple vector-valued function $h$ with finite
range such that $\E_\theta\norm{s_Y-h}_2^2$ is arbitrarily small.  Set
$\mathcal A=\operatorname{range}(h)$ and $\phi=h$.  Conditional expectation is
the orthogonal projection in $L^2$, so
\[
  \E_\theta\norm{s_Y-\E_\theta[s_Y\mid\phi(Y)]}_2^2
  \leq
  \E_\theta\norm{s_Y-h}_2^2.
\]
The Pythagorean identity now shows that
$\Tr J_{\phi(Y)}(\theta)$ can be made arbitrarily close to
$\Tr J_Y(\theta)$.

For a parameter-independent randomized rule, let $Q$ be its postprocessing
kernel.  Differentiability of the state family in trace norm permits
differentiation of the POVM outcome law in total variation.  Applying $Q$ to
this signed derivative shows that the derivative of the law of $Y$ has density
$\E_\theta[s_{\widetilde Y}(\widetilde Y)\mid Y]$ with respect to the law of
$Y$.  This proves the displayed conditional expectation formula for the
score after postprocessing.  Conditional Jensen then gives
\[
  J_Y(\theta)
  =
  \E\Brac{\E[s_{\widetilde Y}\mid Y]
  \E[s_{\widetilde Y}\mid Y]^\mathsf T}
  \preceq
  \E[s_{\widetilde Y}s_{\widetilde Y}^\mathsf T].
\]
\end{proof}

An effect has rank one if it has the form $w\proj v$ for some $w>0$ and unit
vector $v$.  The vectors for different outcomes need not be orthogonal, so a
POVM with rank-one effects need not be a projective measurement.
\Cref{lem:block-score-ensemble} associates one such vector with each outcome.
These vectors both express the score and form a pure-state ensemble
decomposition of $\rho^{\otimes t}$.  The following standard reduction lets
us apply the lemma to an arbitrary finite POVM.

\begin{fact}[Splitting POVM effects into rank-one operators]
\label{fact:povm-effect-decomposition}
Let $M=\{M_y:y\in\mathcal Y\}$ be a finite POVM on a finite-dimensional
Hilbert space.  After discarding effects equal to zero, choose for each $y$
an orthonormal eigenbasis of the support of $M_y$ and write
\[
  M_y
  =
  \sum_{j=1}^{r_y}w_{y,j}\proj{v_{y,j}},
  \qquad
  w_{y,j} > 0,
\]
and define a POVM $\widetilde M$ with outcomes $(y,j)$ and effects
$\widetilde M_{y,j}=w_{y,j}\proj{v_{y,j}}$.  If $\widetilde Y=(Y,J)$
is its outcome, then discarding $J$ gives the outcome distribution of $M$.
Consequently, on every differentiable family of input states,
\[
  J_Y(\theta)\preceq J_{\widetilde Y}(\theta).
\]
In particular, an upper bound on Fisher information that holds for every
finite POVM with rank-one effects also holds for every finite POVM.
\end{fact}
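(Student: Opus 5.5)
The plan is to prove the conclusion $J_Y(\theta)\preceq J_{\widetilde Y}(\theta)$ by exhibiting $Y$ as a deterministic, parameter-independent function of $\widetilde Y$, and then invoking the postprocessing part of \cref{fact:fi-data-processing}.

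First, I will check that $\widetilde M$ is a finite POVM. Each $\widetilde M_{y,j}=w_{y,j}\proj{v_{y,j}}$ is positive semidefinite because $w_{y,j}>0$. The spectral decomposition gives $\sum_j\widetilde M_{y,j}=M_y$ for each $y$. Summing over $y$ gives $\sum_{y,j}\widetilde M_{y,j}=\sum_yM_y=I$. Discarding zero effects changes neither the outcome law nor the Fisher information, since those outcomes have probability zero for every state. The outcome set $\{(y,j)\}$ is finite because each support has finite dimension.

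Next, I will verify that discarding $J$ reproduces the law of $M$. For any state $R$,
\[
  \Pr[Y=y]=\sum_j\Tr(\widetilde M_{y,j}R)=\Tr(M_yR).
\]
Thus the map $(y,j)\mapsto y$ is a fixed measurable map that does not depend on $\theta$, and its pushforward of the law of $\widetilde Y$ is the law of $Y$ for every $\theta$.

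The remaining condition in \cref{fact:fi-data-processing} is that the score of $\widetilde Y$ lies in $L^2$. On a finite outcome space the Fisher information is a finite sum over outcomes with positive probability, so $s_{\widetilde Y}\in L^2$ automatically wherever the probabilities are differentiable. \cref{fact:fi-data-processing} then gives $s_Y(Y)=\E_\theta[s_{\widetilde Y}\mid Y]$. Conditional Jensen yields $J_Y\preceq J_{\widetilde Y}$, and taking traces yields the trace inequality.

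The one step needing care is the case where some outcome has zero probability at $\theta$. I would handle it either by restricting to full-rank inputs, as the paper does, or by noting that outcomes with zero probability are excluded from both sums. With that settled, the final sentence of the statement follows immediately: any bound valid for all finite rank-one POVMs applies to $\widetilde M$ and hence dominates $J_M$.
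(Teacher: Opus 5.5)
Your proposal is correct and follows the paper's proof. Like the paper, you check that $\widetilde M$ is a POVM, verify that summing over $j$ recovers $\Tr(M_y\sigma_\theta)$, and apply the data-processing part of \cref{fact:fi-data-processing} to the parameter-independent map $(y,j)\mapsto y$. Your remarks on $L^2$ integrability and zero-probability outcomes are harmless details that the paper leaves implicit.
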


\begin{proof}
The effects of $\widetilde M$ sum to the identity and have rank one.  For
every input state $\sigma_\theta$,
\[
  \sum_j\Pr_\theta[\widetilde Y=(y,j)]
  =\sum_j\Tr(\widetilde M_{y,j}\sigma_\theta)
  =\Tr(M_y\sigma_\theta).
\]
The right-hand side is the probability of outcome $y$ under $M$.  Thus the
map $(y,j)\mapsto y$ turns $\widetilde Y$ into $Y$, and data processing for
Fisher information (\cref{fact:fi-data-processing}) gives the claim.
\end{proof}

\begin{fact}[Fisher information chain rule for an adaptive transcript]
\label{fact:adaptive-fi-composition}
At every full-rank state $\rho$, the transcript of an adaptive protocol
satisfies
\begin{equation}
\label{eq:adaptive-fi-composition}
  \Tr J_Y(\rho)
  =
  \E_\rho\Brac{
    \sum_{j=1}^{T}
    \Tr J_{M_j(\cdot\mid\xi,Y_{<j})}(\rho)
  }.
\end{equation}
\end{fact}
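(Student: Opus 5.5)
The plan is to write the transcript likelihood as a product of per-round conditional likelihoods and show that the resulting score is a martingale whose increments are orthogonal in $L^2(P_\rho)$. Fix the full-rank state $\rho$ and the local coordinates $\rho_\theta=\rho+\sum_b\theta_bF_b$.

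Conditional on $(\xi,Y_{<j})$ and on the event $\{T\geq j\}$, round $j$ applies the POVM $M_j(\cdot\mid\xi,Y_{<j})$ to $\rho_\theta^{\otimes t_j}$. Its outcome density with respect to the canonical measure $\nu_{M_j}$ is
\[
  p^{(j)}_\theta(y_j\mid\xi,y_{<j})=\Tr\bigl(K_j(y_j)\rho_\theta^{\otimes t_j}\bigr).
\]
Since $\xi$ and the stopping decision do not depend on $\theta$, the transcript density with respect to a $\theta$-free dominating measure is the product of these conditional densities over $j\leq T$. It follows that
\[
  s_Y=\sum_{j=1}^T s_j,
  \qquad
  s_j=\nabla_\theta\log p^{(j)}_\theta(Y_j\mid\xi,Y_{<j})\big|_{\theta=0}.
\]
The dominating measure can be built as a mixture of the kernels $\nu_{M_j}$. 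The measurability assumption on the kernels makes this well defined.

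Each $s_j$ is bounded uniformly. By the remark before \cref{fact:fi-data-processing}, at a full-rank input $|s_{j,b}|\leq\norm{R^{-1/2}D_bR^{-1/2}}_{\op}$ with $R=\rho^{\otimes t_j}$ and $D_b$ the derivative of $\rho_\theta^{\otimes t_j}$ in coordinate $b$. Evaluating this gives
\[
  |s_{j,b}|\leq t_j\norm{\rho^{-1/2}F_b\rho^{-1/2}}_{\op}\leq k\,\lambda_{\min}(\rho)^{-1}.
\]
Since $T\leq n$, the score $s_Y$ is bounded and in particular in $L^2$.

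Let $\mathcal G_{j-1}$ be the $\sigma$-algebra generated by $(\xi,Y_{<j})$. By \cref{fact:fi-data-processing} applied conditionally to the single POVM $M_j$, we have $\E_\rho[s_j\mid\mathcal G_{j-1}]=0$ on $\{T\geq j\}$. The event $\{T\geq j\}$ is itself $\mathcal G_{j-1}$-measurable, since stopping is decided from the history. For $i<j$, the increment $s_i$ is $\mathcal G_{j-1}$-measurable, so conditioning on $\mathcal G_{j-1}$ gives
\[
  \E_\rho[\langle s_i,s_j\rangle\one_{T\geq j}]=0.
\]
Expanding $\norm{s_Y}_2^2$ therefore leaves only the diagonal terms. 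For these, the tower property gives
\[
  \E_\rho\norm{s_j}_2^2\one_{T\geq j}=\E_\rho\bigl[\one_{T\geq j}\,\Tr J_{M_j(\cdot\mid\xi,Y_{<j})}(\rho)\bigr].
\]
Summing over $j\leq n$ yields \cref{eq:adaptive-fi-composition}.

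The main obstacle is justifying the identity $s_Y=\sum_j s_j$ at $\theta=0$ together with exchanging differentiation and integration, especially for general standard Borel outcome spaces. I would handle this with the uniform bound on the scores above, which holds for $\theta$ in a small neighborhood where $\rho_\theta\succeq\lambda_{\min}(\rho)I_d/2$. Since $T\leq n$, dominated convergence then lets us differentiate the finite product of densities under the integral. This also gives differentiability of the transcript law in total variation.
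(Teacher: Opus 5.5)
Your proposal is correct and follows the paper's proof essentially step by step. You factor the transcript likelihood into per-round densities against the canonical trace kernels, and you use the uniform score bound $t_j\norm{\rho^{-1/2}F_b\rho^{-1/2}}_{\op}\le k\norm{\rho^{-1/2}F_b\rho^{-1/2}}_{\op}$, up to a factor of two on a small neighborhood, to justify differentiating the finite product; conditional mean zero then kills the cross terms. The only difference is cosmetic: you handle early stopping with the $\mathcal G_{j-1}$-measurable indicators $\one_{T\geq j}$, whereas the paper pads the transcript with deterministic zero-score dummy rounds up to $n$.
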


\begin{proof}
Since every active round uses at least one copy, $T\leq n$.  Append
deterministic dummy outcomes after the protocol stops, so the transcript has
$n$ rounds; the dummy outcomes have zero score and contain no additional
information.  Given $(\xi,Y_{<j})$, the canonical trace measure of the chosen
POVM is a reference kernel for $Y_j$ that is independent of the state
parameter.  At an active
round, the score in coordinate $b$ satisfies
\[
  |s_{j,b}|
  \leq
  t_j\norm{\rho^{-1/2}F_b\rho^{-1/2}}_{\op}
  \leq
  k\norm{\rho^{-1/2}F_b\rho^{-1/2}}_{\op}.
\]
After shrinking the parameter neighborhood, the same bound holds up to a
factor of two uniformly in $\theta$, in the chosen POVM, and in the history.
Together with trace-norm differentiability of
$\rho_\theta^{\otimes t_j}$ and the measurability assumption above, this
justifies differentiating the finite product of conditional densities.  The
transcript therefore has score
\[
  s_Y=\sum_{j=1}^Ts_j,
\]
where $\E_\rho[s_j\mid\xi,Y_{<j}]=0$.  The increments are orthogonal in
$L^2$.  Expanding $\E_\rho\|s_Y\|_2^2$ and discarding the vanishing cross
terms gives \cref{eq:adaptive-fi-composition}.
\end{proof}

\paragraph{The symmetric logarithmic derivative.}
Let $\rho$ be full rank and let $F\in\operatorname{Herm}_0(\C^d)$.  The symmetric logarithmic
derivative $L_{\rho,F}$ is the unique Hermitian solution of
\begin{equation}
\label{eq:sld-definition}
  F=\frac12(\rho L_{\rho,F}+L_{\rho,F}\rho).
\end{equation}
The associated quantum Fisher information is
$Q_\rho(F)=\Tr(\rho L_{\rho,F}^2)$.

The coordinate formula and tensor additivity below are direct calculations.
The measurement inequality is due to Braunstein and Caves
\cite{BraunsteinCaves94}.

\begin{fact}[Braunstein--Caves inequality and tensor additivity]
\label{fact:quantum-fisher-information}
In an eigenbasis of $\rho$, with eigenvalues $r_1,\ldots,r_d$,
\begin{equation}
\label{eq:qfi-coordinate}
  Q_\rho(F)
  =
  2\sum_{i,j=1}^d\frac{|F_{ij}|^2}{r_i+r_j}.
\end{equation}
Every POVM $M$ on one copy satisfies
\begin{equation}
\label{eq:braunstein-caves}
  J_M(\rho)[F,F]\leq Q_\rho(F).
\end{equation}
For the scalar path $\rho_s=\rho+sF$, the quantity $Q_\rho(F)$ is additive:
\begin{equation}
\label{eq:qfi-additivity}
  Q_{\rho^{\otimes t}}\Paren{
    \left.\frac{\partial}{\partial s}
    \rho_s^{\otimes t}\right|_{s=0}
  }
  =
  tQ_\rho(F).
\end{equation}
\end{fact}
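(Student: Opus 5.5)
The plan has three parts, one for each claim in \cref{fact:quantum-fisher-information}.

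\emph{Coordinate formula \cref{eq:qfi-coordinate}.} Work in an eigenbasis of $\rho$ with eigenvalues $r_1,\ldots,r_d>0$. In this basis, \cref{eq:sld-definition} reads
\[
  F_{ij}=\tfrac12(r_i+r_j)(L_{\rho,F})_{ij}.
\]
Because every $r_i+r_j$ is positive, this gives existence and uniqueness of the Hermitian solution, namely $(L_{\rho,F})_{ij}=2F_{ij}/(r_i+r_j)$. We then compute
\[
  \Tr(\rho L^2)=\sum_{i,j}r_i|L_{ij}|^2
\]
and symmetrize in $(i,j)$ using $|L_{ij}|=|L_{ji}|$:
\[
  \sum_{i,j}\tfrac12(r_i+r_j)|L_{ij}|^2
  =\sum_{i,j}\tfrac12(r_i+r_j)\frac{4|F_{ij}|^2}{(r_i+r_j)^2}
  =2\sum_{i,j}\frac{|F_{ij}|^2}{r_i+r_j}.
\]

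\emph{Braunstein--Caves inequality \cref{eq:braunstein-caves}.} For the scalar path $\rho_s=\rho+sF$, the outcome density is $p(y)=\Tr(K(y)\rho)$ and its derivative is
\[
  \partial_sp(y)=\Tr(K(y)F)=\operatorname{Re}\Tr\bigl(K(y)\rho L\bigr),
  \qquad L=L_{\rho,F}.
\]
This follows from \cref{eq:sld-definition}, cyclicity, and Hermiticity of $K$, $\rho$, and $L$. Write
\[
  \Tr(K\rho L)=\Tr\bigl((K^{1/2}\rho^{1/2})^\dagger(K^{1/2}L\rho^{1/2})\bigr).
\]
Cauchy--Schwarz in the Hilbert--Schmidt inner product then gives
\[
  |\partial_sp(y)|^2\leq\Tr(K\rho)\,\Tr(K L\rho L)=p(y)\Tr(K(y)L\rho L).
\]
Dividing by $p(y)$ and integrating against $\nu_M$ uses $\int K\,\dd\nu_M=I$, which yields
\[
  J_M(\rho)[F,F]=\int\frac{(\partial_sp)^2}{p}\,\dd\nu_M\leq\Tr(L\rho L)=Q_\rho(F).
\]
Outcomes with $p(y)=0$ do not contribute: since $\rho$ is full rank, $p(y)=0$ forces $K(y)=0$. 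This step works verbatim on any finite-dimensional space, so it also applies to $t$ copies. That is the form in which it is combined with \cref{eq:qfi-additivity} later.

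\emph{Tensor additivity \cref{eq:qfi-additivity}.} The derivative of $\rho_s^{\otimes t}$ at $s=0$ is
\[
  D=\sum_{\ell}\rho^{\otimes(\ell-1)}\otimes F\otimes\rho^{\otimes(t-\ell)}.
\]
The candidate SLD is $\mathbf L=\sum_\ell L^{(\ell)}$, where $L^{(\ell)}$ denotes $L$ acting on the $\ell$th factor. One checks directly that $\tfrac12(R\mathbf L+\mathbf LR)=D$ for $R=\rho^{\otimes t}$, since each term factorizes. Uniqueness of the SLD for the full-rank $R$ then identifies $\mathbf L$ as the SLD of $D$. Expanding
\[
  \Tr(R\mathbf L^2)=\sum_{\ell,m}\Tr\bigl(R\,L^{(\ell)}L^{(m)}\bigr),
\]
each diagonal term equals $\Tr(\rho L^2)=Q_\rho(F)$. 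Each cross term with $\ell\neq m$ equals $(\Tr\rho L)^2$, and
\[
  \Tr(\rho L)=\tfrac12\Tr(\rho L+L\rho)=\Tr F=0.
\]
Summing gives $tQ_\rho(F)$.

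None of these steps is hard. The only point needing care is the measure-theoretic justification in the Braunstein--Caves step for a general outcome space. That justification is already handled by the density $K(y)$ of \cref{sec:prelim-fisher}, together with the observation that null outcomes do not contribute.
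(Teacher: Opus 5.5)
Your proof is correct. For the coordinate formula and for tensor additivity you follow the paper's own route. The paper records $(L_{\rho,F})_{ij}=2F_{ij}/(r_i+r_j)$, identifies the $t$-copy symmetric logarithmic derivative as $\sum_\ell L_{\rho,F}^{(\ell)}$, and removes the cross terms using $\Tr(\rho L_{\rho,F})=\Tr F=0$. You additionally write out the symmetrization and the uniqueness check.

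The one real difference is the measurement inequality \cref{eq:braunstein-caves}. The paper does not prove it and simply cites \cite{BraunsteinCaves94}. You instead give a self-contained Cauchy--Schwarz argument on the density $K(y)$. That argument is sound, and it handles general outcome spaces directly through $\int K\,\dd\nu_M=I$. As you note, it also works in any finite dimension, which is the form \cref{cor:qfi-cap} actually uses. There it is applied to $R=\rho^{\otimes t}$, with $D$ and $\mathbf L$ in place of $F$ and $L$. The nonlinearity of $s\mapsto\rho_s^{\otimes t}$ does no harm, because only $\partial_sp|_{s=0}=\Tr(KD)$ enters.

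One cosmetic slip: $\Tr\bigl((K^{1/2}\rho^{1/2})^\dagger(K^{1/2}L\rho^{1/2})\bigr)$ equals $\Tr(KL\rho)=\overline{\Tr(K\rho L)}$, not $\Tr(K\rho L)$. Either swap the two factors, or note that the real part and the modulus are unchanged, so the bound is unaffected.
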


\begin{proof}
Equation~\eqref{eq:sld-definition} gives
$(L_{\rho,F})_{ij}=2F_{ij}/(r_i+r_j)$, which proves
\cref{eq:qfi-coordinate}.  For the path $s\mapsto\rho_s^{\otimes t}$, the
symmetric logarithmic derivative is the sum of the copies of $L_{\rho,F}$.  Its cross
terms vanish because $\Tr(\rho L_{\rho,F})=\Tr F=0$, proving
\cref{eq:qfi-additivity}.  The Braunstein--Caves measurement inequality
\cref{eq:braunstein-caves} is proved in \cite{BraunsteinCaves94}.
\end{proof}

The following direct consequence is the standard estimate used for large
measurement blocks.

\begin{corollary}[Fisher information bound from the symmetric logarithmic derivative]
\label{cor:qfi-cap}
Let $d\geq2$, let $\rho$ be a state of full rank, and let $t\geq1$.  Then
every POVM $M$ on $t$ copies satisfies
\[
  \Tr J_M(\rho)
  \leq
  \frac{t(d^2-1)}{\lambda_{\min}(\rho)}.
\]
\end{corollary}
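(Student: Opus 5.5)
The plan is to apply the Braunstein--Caves inequality to the $t$-copy state $R=\rho^{\otimes t}$, use tensor additivity to reduce to one copy, and then sum over an orthonormal basis. Fix a coordinate direction $F_b$. Along the path $s\mapsto\rho_s^{\otimes t}$ with $\rho_s=\rho+sF_b$, the outcome law of $M$ is the one-parameter family whose Fisher information is $J_M(\rho)[F_b,F_b]$. The Braunstein--Caves inequality \cref{eq:braunstein-caves} holds for any finite-dimensional full-rank state and any traceless Hermitian derivative. Applying it to the state $R$ with derivative $\partial_s\rho_s^{\otimes t}|_{s=0}$, and then using \cref{eq:qfi-additivity}, gives
\[
  J_M(\rho)[F_b,F_b]\leq Q_{R}\bigl(\partial_s\rho_s^{\otimes t}|_{s=0}\bigr)=tQ_\rho(F_b).
\]
This holds for general outcome spaces as well: the finite-outcome case combined with \cref{eq:general-outcome-fi-supremum} covers them.

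Next, bound $Q_\rho(F)$ by $\norm F_F^2/\lambda_{\min}(\rho)$. Work in an eigenbasis of $\rho$ and use the coordinate formula \cref{eq:qfi-coordinate}. Each denominator satisfies $r_i+r_j\geq2\lambda_{\min}(\rho)$, so
\[
  Q_\rho(F)=2\sum_{i,j}\frac{|F_{ij}|^2}{r_i+r_j}\leq\frac{1}{\lambda_{\min}(\rho)}\sum_{i,j}|F_{ij}|^2=\frac{\norm F_F^2}{\lambda_{\min}(\rho)}.
\]
Each basis element has $\norm{F_b}_F=1$, so $J_M(\rho)[F_b,F_b]\leq t/\lambda_{\min}(\rho)$. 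Summing over the $d^2-1$ basis elements, and using $\Tr J_M(\rho)=\sum_bJ_M(\rho)[F_b,F_b]$, yields the claimed bound $t(d^2-1)/\lambda_{\min}(\rho)$.

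The only step that needs care is checking that \cref{eq:braunstein-caves}, stated for one copy, applies to the multi-copy state. It is a general statement about any full-rank state on a finite-dimensional space and any direction of differentiation. Here $R$ is full rank on $(\C^d)^{\otimes t}$, and its derivative is traceless Hermitian, so the inequality applies directly to $R$ and the POVM $M$. The Fisher information $J_M(\rho)[F_b,F_b]$ is exactly the scalar Fisher information of this path, since the score in direction $b$ is the derivative along $\rho_s$. With that identification, the rest is the elementary calculation above.
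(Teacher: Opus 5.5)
Your proposal is correct and follows essentially the same route as the paper's proof: bound $Q_\rho(F)\leq\norm F_F^2/\lambda_{\min}(\rho)$ via the coordinate formula and $r_i+r_j\geq2\lambda_{\min}(\rho)$, then combine tensor additivity with the Braunstein--Caves inequality to get $J_M(\rho)[F_b,F_b]\leq t/\lambda_{\min}(\rho)$, and sum over the $d^2-1$ basis directions. Your extra remarks, on applying the measurement inequality to the full-rank $t$-copy state and on extending to general outcome spaces via \cref{eq:general-outcome-fi-supremum}, only make explicit what the paper's short proof leaves implicit.
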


\begin{proof}
Let $F\in\operatorname{Herm}_0(\C^d)$ satisfy $\norm F_F=1$, and let
$r_1,\ldots,r_d$ be the eigenvalues of $\rho$.  By the coordinate formula
\cref{eq:qfi-coordinate} and
$r_i+r_j\geq2\lambda_{\min}(\rho)$,
\[
  Q_\rho(F)
  =
  2\sum_{i,j}\frac{|F_{ij}|^2}{r_i+r_j}
  \leq\frac1{\lambda_{\min}(\rho)}.
\]
Tensor additivity and the measurement inequality give
$J_M(\rho)[F,F]\leq t/\lambda_{\min}(\rho)$.  Summing over the
$d^2-1$ basis directions proves the result.
\end{proof}

\subsection{Schur--Weyl Duality and the Unfolded Matrix}
\label{sec:prelim-schur-weyl}

Following the terminology in
\cite[full version, Section~2.2.2 and Definition~3.31]{CLL24}, we call
$G_1(u)$ the unfolded matrix of $u$.  In the proof of \cref{thm:block-fi}, this
matrix appears in the score of an arbitrary measurement.  Here we define it
and collect the representation-theoretic facts used later.

Let $t$ be a positive integer.  For an operator $H$ on $\C^d$ and
$1\leq\ell\leq t$, write
\[
  H^{(\ell)}
  =
  I_d^{\otimes(\ell-1)}\otimes H\otimes
  I_d^{\otimes(t-\ell)}.
\]
For a unit vector $u\in(\C^d)^{\otimes t}$, the partial trace
$\Tr_{\{1,\ldots,t\}\setminus\{\ell\}}(\proj u)$ is its reduced density
matrix on the $\ell$th copy: it is the unique operator whose expectation
against $H$ equals $\bra uH^{(\ell)}\ket u$.  After identifying each copy
with $\C^d$, define the unfolded matrix by
\begin{equation}
\label{eq:g1-definition}
  G_1(u)
  =
  \sum_{\ell=1}^t
  \Tr_{\{1,\ldots,t\}\setminus\{\ell\}}(\proj u).
\end{equation}
Equivalently, $G_1(u)$ is the sum of the one-copy reduced states.  Thus
$G_1(u)/t$ is their average.  The unnormalized matrix $G_1(u)$ is more
convenient in the formula for the score of each outcome in
\cref{eq:block-score-centered}.

\begin{fact}[Tensor derivatives and the unfolded matrix]
\label{fact:tensor-unfolded}
Let $\rho$ be full rank and $F\in\operatorname{Herm}_0(\C^d)$.  For $s$ in a neighborhood of
zero, let $\rho_s=\rho+sF$, let
$R=\rho^{\otimes t}$, and let $H=\rho^{-1/2}F\rho^{-1/2}$.  Then
\begin{equation}
\label{eq:tensor-derivative-square-root}
  R^{-1/2}
  \left.\frac{\partial}{\partial s}\rho_s^{\otimes t}\right|_{s=0}
  R^{-1/2}
  =
  \sum_{\ell=1}^tH^{(\ell)}.
\end{equation}
For every unit vector $u$,
\[
  G_1(u)\succeq0,
  \qquad
  \Tr G_1(u)=t,
  \qquad
  \bra u\sum_{\ell=1}^tH^{(\ell)}\ket u
  =
  \Tr(HG_1(u)).
\]
Moreover,
\[
  G_1(U^{\otimes t}u)=UG_1(u)U^\dagger
\]
for every unitary $U$ on $\C^d$.
\end{fact}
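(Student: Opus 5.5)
The plan is to verify each claim of \cref{fact:tensor-unfolded} by a direct calculation, starting with the derivative formula \cref{eq:tensor-derivative-square-root}. First I will expand $\rho_s^{\otimes t}=(\rho+sF)^{\otimes t}$ by multilinearity. The first-order term in $s$ is $\sum_{\ell=1}^t\rho^{\otimes(\ell-1)}\otimes F\otimes\rho^{\otimes(t-\ell)}$. Since $R^{-1/2}=(\rho^{-1/2})^{\otimes t}$ factors across copies, conjugating the $\ell$th summand by $R^{-1/2}$ gives $I_d$ on every copy except the $\ell$th. On that copy it gives $\rho^{-1/2}F\rho^{-1/2}=H$, so the summand becomes exactly $H^{(\ell)}$.

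Next come the properties of $G_1(u)$. Positivity holds because each partial trace of the positive operator $\proj u$ is positive, and a sum of positive operators is positive. For the trace, each reduced state of a unit vector has trace $\bra u u\rangle=1$, so there are $t$ summands of trace one and $\Tr G_1(u)=t$. For the expectation identity, I will use the defining property of the reduced state, namely $\bra uH^{(\ell)}\ket u=\Tr(H\,\Tr_{\{1,\ldots,t\}\setminus\{\ell\}}(\proj u))$, and then sum over $\ell$ using linearity of trace.

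For the covariance statement, I will again use the defining property. For every $H$,
\[
\Tr\bigl(H\,\Tr_{\neg\ell}(U^{\otimes t}\proj u U^{\dagger\otimes t})\bigr)=\bra u (U^{\dagger\otimes t}H^{(\ell)}U^{\otimes t})\ket u=\bra u (U^\dagger HU)^{(\ell)}\ket u,
\]
since the other factors satisfy $U^\dagger U=I_d$. The right-hand side equals $\Tr(U^\dagger HU\,\rho_\ell)=\Tr(H\,U\rho_\ell U^\dagger)$, where $\rho_\ell$ denotes the $\ell$th reduced state of $u$. Uniqueness of the reduced state then gives $\Tr_{\neg\ell}(U^{\otimes t}\proj u U^{\dagger\otimes t})=U\rho_\ell U^\dagger$, and summing over $\ell$ gives $G_1(U^{\otimes t}u)=UG_1(u)U^\dagger$.

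No step is a real obstacle here. The only point needing care is the bookkeeping in the first step: conjugating by $R^{-1/2}$ commutes with the tensor structure because $\rho^{-1/2}$ is applied factorwise, and full rank of $\rho$ is what makes $\rho^{-1/2}$ defined.
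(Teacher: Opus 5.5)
Your proposal is correct and follows the paper's argument. You use the product rule with factorwise conjugation by $R^{-1/2}=(\rho^{-1/2})^{\otimes t}$ for the derivative identity, and the defining property of the partial trace (including uniqueness, for the unitary covariance) for the remaining claims, spelling out details the paper leaves as direct calculations.
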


\begin{proof}
The tensor derivative follows from the product rule, and conjugation by
$R^{-1/2}$ gives \cref{eq:tensor-derivative-square-root}.  The remaining
claims follow directly from the definition of partial trace.
\end{proof}

We now give the form used in the proof of \cref{thm:block-fi}.  For effects
of rank one, the change of variables associates with each outcome a vector
that gives both its score and one term in a pure-state ensemble decomposition
of $\rho^{\otimes t}$.

\begin{lemma}[Scores of block measurement outcomes and a pure-state decomposition]
\label{lem:block-score-ensemble}
Let $\rho$ be a state of full rank, let $R=\rho^{\otimes t}$, and let
$M=\{w_z\proj{v_z}\}_z$ be a finite POVM whose effects have rank one, where
$\norm{v_z}_2=1$.  Define
\[
  q_z=\bra{v_z}R\ket{v_z},
  \qquad
  p_z=w_zq_z,
  \qquad
  \ket{u_z}=\frac{R^{1/2}\ket{v_z}}{\sqrt{q_z}},
  \qquad
  H_b=\rho^{-1/2}F_b\rho^{-1/2}.
\]
Then $p_z$ is the probability of outcome $z$, and its score satisfies
\begin{equation}
\label{eq:block-score-centered}
  s_b(z)
  =
  \Tr(H_bG_1(u_z))
  =
  \Tr\bigl(H_b(G_1(u_z)-t\rho)\bigr).
\end{equation}
The same vectors satisfy
\begin{equation}
\label{eq:block-ensemble-identities}
  \sum_zp_z\proj{u_z}=\rho^{\otimes t},
  \qquad
  \sum_zp_zG_1(u_z)=t\rho.
\end{equation}
\end{lemma}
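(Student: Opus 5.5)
The proof is a direct computation in three parts: the outcome probability, the score formula, and the two ensemble identities.

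\emph{Probability.} The probability of outcome $z$ under $R=\rho^{\otimes t}$ is $\Tr(w_z\proj{v_z}R)=w_z\bra{v_z}R\ket{v_z}=w_zq_z=p_z$. Since $\rho$ has full rank, $R\succ0$ and so $q_z>0$. Hence $u_z$ is well defined. It is a unit vector because $\norm{R^{1/2}v_z}_2^2=q_z$.

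\emph{Score.} Along the coordinate path $\rho_s=\rho+sF_b$, the likelihood is $p_z(s)=w_z\bra{v_z}\rho_s^{\otimes t}\ket{v_z}$. Set $D=\partial_s\rho_s^{\otimes t}|_{s=0}$. Then
\[
  s_b(z)=\frac{w_z\bra{v_z}D\ket{v_z}}{w_zq_z}
  =\frac{\bra{v_z}R^{1/2}\,(R^{-1/2}DR^{-1/2})\,R^{1/2}\ket{v_z}}{q_z}
  =\bra{u_z}R^{-1/2}DR^{-1/2}\ket{u_z}.
\]
\Cref{fact:tensor-unfolded}, applied with $F=F_b$, gives $R^{-1/2}DR^{-1/2}=\sum_\ell H_b^{(\ell)}$ and $\bra u\sum_\ell H_b^{(\ell)}\ket u=\Tr(H_bG_1(u))$. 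Together these yield the first equality in \cref{eq:block-score-centered}. For the second equality, note that $\Tr(H_b\,t\rho)=t\Tr(\rho^{-1/2}F_b\rho^{-1/2}\rho)=t\Tr F_b=0$ by cyclicity and tracelessness. Subtracting this zero term gives the centered form.

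\emph{Ensemble identities.} We have $p_z\proj{u_z}=w_zR^{1/2}\proj{v_z}R^{1/2}$. Summing over $z$ and using completeness, $\sum_zw_z\proj{v_z}=I$, gives $\sum_zp_z\proj{u_z}=R^{1/2}IR^{1/2}=\rho^{\otimes t}$. The map $X\mapsto\sum_\ell\Tr_{\{1,\ldots,t\}\setminus\{\ell\}}(X)$ is linear. It sends $\proj{u}$ to $G_1(u)$ and sends $\rho^{\otimes t}$ to $\sum_\ell\rho=t\rho$, because each partial trace of $\rho^{\otimes t}$ equals $\rho\cdot\Tr(\rho)^{t-1}=\rho$. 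Applying this map to the first identity therefore gives $\sum_zp_zG_1(u_z)=t\rho$.

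No step is a real obstacle here. The only points needing care are that $q_z>0$ (guaranteed by full rank) and that the derivative can be taken inside the finite sum. The latter holds because $p_z(s)$ is a polynomial in $s$.
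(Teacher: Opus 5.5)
Your proof is correct and follows the paper's route. You rewrite $\bra{v_z}D\ket{v_z}/q_z$ as $\bra{u_z}R^{-1/2}DR^{-1/2}\ket{u_z}$ and invoke \cref{fact:tensor-unfolded}, center using $\Tr(H_b\rho)=\Tr F_b=0$, and obtain both ensemble identities by conjugating completeness by $R^{1/2}$ and then applying the linear sum-of-partial-traces map; along the way you make explicit the points the paper leaves implicit ($q_z>0$ by full rank, $\norm{u_z}_2=1$, and polynomial dependence on the parameter).
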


\begin{proof}
The tensor product rule in \cref{fact:tensor-unfolded} gives
\[
  s_b(z)
  =
  \bra{u_z}\Paren{\sum_{\ell=1}^tH_b^{(\ell)}}\ket{u_z}
  =
  \Tr(H_bG_1(u_z)).
\]
Since $\Tr(H_b\rho)=\Tr F_b=0$, subtracting $t\rho$ does not change the
score.  Conjugating the completeness relation
$\sum_z w_z\proj{v_z}=I$ by $R^{1/2}$ gives
$\sum_zp_z\proj{u_z}=\rho^{\otimes t}$.  Taking the one-copy partial traces
and summing gives $\sum_zp_zG_1(u_z)=t\rho$.
\end{proof}

We next recall the part of Schur--Weyl duality needed here.  A partition
$\lambda\vdash t$ with at most $d$ parts is a nonincreasing tuple
\[
  \lambda=(\lambda_1,\ldots,\lambda_d),
  \qquad
  \lambda_1\geq\cdots\geq\lambda_d\geq0,
  \qquad
  \sum_{i=1}^d\lambda_i=t.
\]
Equivalently, $\lambda$ is a Young diagram with $t$ boxes and at most $d$
rows.  We write $\ell(\lambda)$ for the number of nonzero rows and always pad
the row lengths with zeros to obtain $d$ coordinates.

There are two natural actions on $(\C^d)^{\otimes t}$: the symmetric group
$S_t$ permutes the tensor factors, while $U(d)$ acts as $U^{\otimes t}$.
These actions commute.  For each $\lambda$, let $\mathsf{Sp}_\lambda$ and
$\mathsf V_\lambda^d$ be the vector spaces carrying the irreducible
representations of $S_t$ and $U(d)$, respectively, indexed by $\lambda$.
Schur--Weyl duality simultaneously decomposes the two actions as
\[
  (\C^d)^{\otimes t}
  =
  \bigoplus_{\substack{\lambda\vdash t\\\ell(\lambda)\leq d}}
  \mathsf{Sp}_\lambda\otimes\mathsf V_\lambda^d.
\]
Following \cite[full version, Definition~3.11]{CLL24}, we call
$\mathsf{Sp}_\lambda\otimes\mathsf V_\lambda^d$ the Schur subspace indexed by
$\lambda$ and write $\Pi_\lambda$ for its orthogonal projector.  The POVM
$\{\Pi_\lambda\}_\lambda$ is weak Schur sampling
\cite[full version, Definition~3.17]{CLL24}.  For a vector $u$,
$\norm{\Pi_\lambda u}_2^2$ is the squared norm of its projection onto the
Schur subspace indexed by $\lambda$.  For a state $\rho$,
$\Tr(\Pi_\lambda\rho^{\otimes t})$ is the probability of outcome $\lambda$.
For example, when $t=2$ and $d\geq2$,
the partitions
$(2,0,\ldots,0)$ and $(1,1,0,\ldots,0)$ index the symmetric and
antisymmetric subspaces.  If $\mathsf{Swap}(x\otimes y)=y\otimes x$, their
projectors are
\[
  \Pi_{(2)}=\frac{I_d^{\otimes2}+\mathsf{Swap}}2,
  \qquad
  \Pi_{(1,1)}=\frac{I_d^{\otimes2}-\mathsf{Swap}}2.
\]

For a state $\rho$, let $\operatorname{spec}\rho=(r_1,\ldots,r_d)$ denote
its eigenvalues in nonincreasing order, including zeros.  The outcome of weak
Schur sampling on $\rho^{\otimes t}$ has the Schur--Weyl distribution
\cite[full version, Definition~3.23 and Fact~3.24]{CLL24}:
\begin{equation}
\label{eq:sw-distribution}
  \Pr_{\lambda\sim\SW^t(\operatorname{spec}\rho)}[\lambda]
  =
  \Tr(\Pi_\lambda\rho^{\otimes t}).
\end{equation}
Because each $\Pi_\lambda$ commutes with $U^{\otimes t}$, these probabilities
depend only on the eigenvalues of $\rho$.  We denote the resulting
distribution for eigenvalue vector $r$ by $\SW^t(r)$.

The proof of the Fisher information bound for a block measurement
(\cref{thm:block-fi}) associates a pure state with each effect of rank one.
The next fact identifies the average of
$\norm{\Pi_\lambda u_z}_2^2$ over the resulting ensemble.

\begin{fact}[Schur projection weights for a pure-state decomposition]
\label{fact:ensemble-schur-projections}
Let $\{(p_z,u_z)\}_z$ be a finite or countable ensemble of unit vectors, so
$p_z\geq0$ and $\sum_zp_z=1$, and suppose
\[
  \sum_zp_z\proj{u_z}=\rho^{\otimes t}.
\]
Then, for every $\lambda\vdash t$ with $\ell(\lambda)\leq d$,
\begin{equation}
\label{eq:ensemble-schur-projections}
  \sum_zp_z\norm{\Pi_\lambda u_z}_2^2
  =
  \Tr(\Pi_\lambda\rho^{\otimes t})
  =
  \Pr_{\widetilde\lambda\sim\SW^t(\operatorname{spec}\rho)}
  [\widetilde\lambda=\lambda].
\end{equation}
More generally, if $\zeta$ is a probability measure on unit vectors and
$\int\proj u\,\dd\zeta(u)=\rho^{\otimes t}$, then
\[
  \int\norm{\Pi_\lambda u}_2^2\dd\zeta(u)
  =
  \Tr(\Pi_\lambda\rho^{\otimes t}).
\]
\end{fact}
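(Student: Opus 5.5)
The plan is to derive the statement from linearity of the map $X\mapsto\Tr(\Pi_\lambda X)$ together with the identity $\norm{\Pi_\lambda u}_2^2=\bra u\Pi_\lambda\ket u=\Tr(\Pi_\lambda\proj u)$. The identity holds because $\Pi_\lambda$ is an orthogonal projector, so $\Pi_\lambda^\dagger\Pi_\lambda=\Pi_\lambda$.

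\emph{Finite or countable case.}
\begin{enumerate}
\item Write each weight as a trace:
\[
  \sum_zp_z\norm{\Pi_\lambda u_z}_2^2=\sum_zp_z\Tr(\Pi_\lambda\proj{u_z}).
\]
\item Move the sum inside the trace to get $\Tr\bigl(\Pi_\lambda\sum_zp_z\proj{u_z}\bigr)=\Tr(\Pi_\lambda\rho^{\otimes t})$, using the hypothesis.
\item In the countable case, interchanging the sum and the trace needs justification. The series $\sum_zp_z\proj{u_z}$ converges in trace norm, since each term has trace norm $p_z$ and $\sum_zp_z=1$. The functional $X\mapsto\Tr(\Pi_\lambda X)$ is bounded on the finite-dimensional space of operators, because $\abs{\Tr(\Pi_\lambda X)}\leq\norm X_1$. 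So it commutes with the limit.
\item The second equality in \cref{eq:ensemble-schur-projections} is exactly the definition \cref{eq:sw-distribution} of $\SW^t(\operatorname{spec}\rho)$.
\end{enumerate}

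\emph{General measure $\zeta$.} The same argument applies, with the operator-valued integral understood entrywise (Bochner). The integrand $u\mapsto\proj u$ is continuous and bounded in trace norm, so the integral is well defined. The linear functional $\Tr(\Pi_\lambda\,\cdot\,)$ then passes through it, because any linear functional on a finite-dimensional space is a finite combination of matrix entries.

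There is no real obstacle. The only point needing care is the interchange of the trace with the infinite sum or integral, and finite dimensionality together with the uniform trace-norm bound settles it.
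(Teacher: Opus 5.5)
Your proposal is correct and matches the paper's proof. Both take the trace of the ensemble identity against $\Pi_\lambda$, use $\norm{\Pi_\lambda u}_2^2=\Tr(\Pi_\lambda\proj u)$, and then invoke the definition \cref{eq:sw-distribution}; your justification of exchanging the trace with the countable sum or integral (trace-norm convergence plus finite dimensionality) is a valid elaboration of a step the paper leaves implicit.
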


\begin{proof}
Taking the trace of $\sum_zp_z\proj{u_z}=\rho^{\otimes t}$ against
$\Pi_\lambda$ and using the definition of the Schur--Weyl distribution in
\cref{eq:sw-distribution} gives the claim.  The proof with an integral is
identical.
\end{proof}

The following two estimates are from \cite{CLL24}.  The proof of the centered
average bound (\cref{lem:centered-average-g1}) applies them to the ensemble
associated with an arbitrary POVM.

\begin{fact}[Bound on the unfolded matrix
  {\cite[full version, Lemma~6.1]{CLL24}}]
\label{fact:cll-unfolded-bound}
For every unit vector $u\in(\C^d)^{\otimes t}$,
\begin{equation}
\label{eq:cll-unfolded-bound}
  \norm{G_1(u)}_F^2
  \leq
  \sum_{\substack{\lambda\vdash t\\\ell(\lambda)\leq d}}
  \norm{\Pi_\lambda u}_2^2
  \sum_{i=1}^d\lambda_i^2.
\end{equation}
\end{fact}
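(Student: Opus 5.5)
The plan is to exploit the unitary equivariance of $G_1$ (\cref{fact:tensor-unfolded}) to reduce the Frobenius norm to diagonal entries. Then I will show that, inside each Schur subspace, the vector of diagonal entries is majorized by $\lambda$. A convexity argument then gives \cref{eq:cll-unfolded-bound}.

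\emph{Step 1 (diagonalize).} Fix $u$ and pick a unitary $U$ on $\C^d$ that diagonalizes $G_1(u)$. Since $G_1(U^{\otimes t}u)=UG_1(u)U^\dagger$, the left side of \cref{eq:cll-unfolded-bound} is unchanged when $u$ is replaced by $U^{\otimes t}u$. The right side is unchanged too, because $\Pi_\lambda$ commutes with $U^{\otimes t}$. So I may assume $G_1(u)$ is diagonal in the standard basis $e_1,\ldots,e_d$. Write $X_{aa}=\sum_{\ell=1}^t\proj{e_a}^{(\ell)}$, the operator counting how many tensor factors equal $e_a$. Its expectation gives the diagonal entries $g_a=\bra uX_{aa}\ket u$, and $\norm{G_1(u)}_F^2=\sum_a g_a^2$.

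\emph{Step 2 (split over Schur subspaces).} Each $X_{aa}$ is the generator of the diagonal torus of $U(d)$ acting as $U^{\otimes t}$, so it commutes with every $\Pi_\lambda$. Writing $u=\sum_\lambda\Pi_\lambda u$ and $w_\lambda=\norm{\Pi_\lambda u}_2^2$, the cross terms vanish and
\[
  g=\sum_\lambda w_\lambda m_\lambda,
  \qquad
  (m_\lambda)_a=\frac{\bra{\Pi_\lambda u}X_{aa}\ket{\Pi_\lambda u}}{w_\lambda}
\]
(for $w_\lambda>0$). Here $\sum_\lambda w_\lambda=1$, so $g$ is a convex combination of the vectors $m_\lambda$.

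\emph{Step 3 (majorization, main obstacle).} I must show that $m_\lambda$ lies in the permutohedron of $\lambda$, i.e.\ $m_\lambda\prec\lambda$. The restriction of $\Pi_\lambda(\C^d)^{\otimes t}$ to $U(d)$ is a multiple of $\mathsf V_\lambda^d$. The operators $X_{aa}$ act as the Cartan generators, and their joint eigenvalues are the weights of $\mathsf V_\lambda^d$. Hence a normalized vector has expected weight in the convex hull of those weights. Every weight of $\mathsf V_\lambda^d$ is dominated by $\lambda$ in dominance order, and its permutations are again weights. So this hull is exactly the permutohedron of $\lambda$, and $m_\lambda\prec\lambda$. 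Equivalently, for each $S\subseteq[d]$ with $|S|=j$, the operator $\sum_{a\in S}X_{aa}$ has spectrum on the $\lambda$-block bounded by $\lambda_1+\cdots+\lambda_j$. This spectral bound is the concrete fact I would verify, via Gelfand--Tsetlin patterns or the standard weight-multiplicity characterization.

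\emph{Step 4 (convexity).} The map $x\mapsto\norm x_2^2$ is convex and Schur-convex. Jensen over the weights $w_\lambda$ gives $\sum_a g_a^2\leq\sum_\lambda w_\lambda\norm{m_\lambda}_2^2$. Schur-convexity together with $m_\lambda\prec\lambda$ gives $\norm{m_\lambda}_2^2\leq\sum_i\lambda_i^2$. Combining these yields \cref{eq:cll-unfolded-bound}.
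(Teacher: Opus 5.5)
Your proof is correct. The paper itself gives no proof of this statement: it imports it from \cite[full version, Lemma~6.1]{CLL24}. So your argument is a self-contained replacement for that citation rather than an alternative to an in-paper route.

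Steps~1, 2 and~4 are sound as written. In particular, $[X_{aa},\Pi_\lambda]=0$ follows, as you say, by differentiating $[(e^{isE_{aa}})^{\otimes t},\Pi_\lambda]=0$ at $s=0$, where $E_{aa}=\proj{e_a}$.

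In Step~3 you only need $m_\lambda$ to lie in the permutohedron, not the equality of hulls. The fact you flag can be closed in one line. The character of $\mathsf V_\lambda^d$ is the Schur polynomial $s_\lambda(x_1,\ldots,x_d)=\sum_Tx^{\mathrm{wt}(T)}$, summed over semistandard tableaux $T$ of shape $\lambda$ with entries in $\{1,\ldots,d\}$. Hence every weight, and by symmetry of $s_\lambda$ every rearrangement of a weight, is the content of such a tableau. Since columns strictly increase, the entries $\leq j$ lie in the first $j$ rows. This gives $\sum_{a\in S}\mu_a\leq\lambda_1+\cdots+\lambda_j$ for every $j$-subset $S$, which is exactly your spectral bound.

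Your Step~1 is the load-bearing move. Suppose you skip the diagonalization and bound all $d^2$ entries at once. Write $(G_1(u))_{ab}=\bra uX_{ba}\ket u$ with $X_{ab}=\sum_\ell(\ket{e_a}\bra{e_b})^{(\ell)}$, and use $|\bra uX_{ba}\ket u|^2\leq\bra uX_{ab}X_{ba}\ket u$. The sum over $a,b$ is then the quadratic Casimir, which acts on the $\lambda$ block by $\sum_i\lambda_i^2+dt-\sum_i(2i-1)\lambda_i$. Combine this with the shifted power sum identity and the centering in \cref{lem:centered-average-g1}. The result is $dt-t\Tr(\rho^2)$ in place of $2t^{3/2}$, which for $\rho\succeq I_d/(4d)$ gives only the $d^3t$ rate already supplied by \cref{cor:qfi-cap}. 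Rotating $G_1(u)$ to diagonal form leaves only the commuting torus generators $X_{aa}$. Majorization of weights then removes this additive $dt$, so that reduction is what makes the bound strong enough.
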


\begin{fact}[Moment bound for the Schur--Weyl distribution
  {\cite[full version, proof of Claim~3.27]{CLL24}}]
\label{fact:sw-second-moment}
Let $r=(r_1,\ldots,r_d)$ be a probability vector and let
$\lambda\sim\SW^t(r)$.  Then
\begin{equation}
\label{eq:sw-second-moment-bound}
  \E\Brac{\sum_{i=1}^d\lambda_i^2}
  \leq
  t(t-1)\sum_{i=1}^dr_i^2+2t^{3/2}.
\end{equation}
\end{fact}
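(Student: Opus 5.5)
The plan is to split $\sum_i\lambda_i^2$ into a part whose Schur--Weyl expectation is computed exactly by a central character, plus a remainder controlled by the number of rows of $\lambda$.

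\textbf{Step 1: the content identity.} For a Young diagram, write box $(i,j)$ for row $i$ and column $j$, and let $\lambda'$ be the conjugate partition (the column lengths). Counting box by box gives
\[
  \sum_i\lambda_i^2=\sum_{(i,j)\in\lambda}(2j-1),
  \qquad
  \sum_j\lambda_j'^2=\sum_{(i,j)\in\lambda}(2i-1).
\]
Hence
\[
  \sum_i\lambda_i^2=2c(\lambda)+\sum_j\lambda_j'^2,
  \qquad
  c(\lambda)=\sum_{(i,j)\in\lambda}(j-i).
\]

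\textbf{Step 2: the exact expectation of the content.} The transposition class sum $X=\sum_{1\le a<b\le t}\mathsf{Swap}_{ab}$ is central in the group algebra of $S_t$. By the Jucys--Murphy/Frobenius formula, it acts on each Schur subspace $\mathsf{Sp}_\lambda\otimes\mathsf V_\lambda^d$ as the scalar $c(\lambda)$, so $X=\sum_\lambda c(\lambda)\Pi_\lambda$. Then \cref{eq:sw-distribution} gives
\[
  \E_{\lambda\sim\SW^t(r)}c(\lambda)=\Tr(X\rho^{\otimes t})=\binom t2\Tr(\rho^2)=\binom t2\sum_ir_i^2,
\]
using $\Tr(\mathsf{Swap}\,\rho\otimes\rho)=\Tr\rho^2$. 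Therefore
\[
  \E\sum_i\lambda_i^2=t(t-1)\sum_ir_i^2+\E\sum_j\lambda_j'^2 .
\]

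\textbf{Step 3: reduction to the number of rows.} Since every column length satisfies $\lambda_j'\le\lambda_1'=\ell(\lambda)$ and $\sum_j\lambda'_j=t$, we have $\sum_j\lambda_j'^2\le t\,\ell(\lambda)$. It therefore suffices to show $\E\,\ell(\lambda)\le 2\sqrt t$ under $\SW^t(r)$.

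\textbf{Step 4 (the main obstacle): bounding the expected number of rows.} This bound must hold uniformly in $r$ and $d$, and it is where I expect the real work to lie. I would use the standard RSK realization of $\SW^t(r)$. Draw a word $w_1\cdots w_t$ with i.i.d.\ letters from $r$ and apply RSK; the resulting shape has law $\SW^t(r)$. By Greene's theorem, $\ell(\lambda)$ equals the length of the longest strictly decreasing subsequence of the word.

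Next, couple the word to a uniformly random permutation. Replace each letter $w_a$ by $w_a+U_a$, where the $U_a$ are i.i.d.\ uniform on $[0,1)$ (a parameter-free refinement of the alphabet). Any strictly decreasing subsequence of the original word remains strictly decreasing after the refinement. Almost surely the refined word has distinct entries, and its relative order is a uniformly random permutation of $t$, because the refined letters are i.i.d.\ from a continuous law. Hence $\ell(\lambda)$ is stochastically dominated by the longest decreasing subsequence of a uniform random permutation, equivalently $\lambda_1$ under the Plancherel measure.

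It remains to invoke the classical bound that the longest increasing subsequence of a uniform permutation has expectation at most $2\sqrt t$ (Vershik--Kerov, with the nonasymptotic form due to Pilpel). If only a cruder first-moment argument were available, one would count decreasing $m$-subsequences: their expected number is at most $\binom tm/m!\le t^m/(m!)^2$. This yields $\E\ell\le e\sqrt t+O(1)$, which changes only the absolute constant $2$ and still suffices for \cref{thm:block-fi} up to constants. I would aim to cite the sharp bound so that the constant stated in \cref{eq:sw-second-moment-bound} is recovered exactly.
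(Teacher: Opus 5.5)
Your proposal is correct and follows the paper's route (which it takes from \cite{CLL24}). Your identity $\sum_i\lambda_i^2=2c(\lambda)+\sum_j\lambda_j'^2$, with $\sum_j\lambda_j'^2=\sum_i(2i-1)\lambda_i$ and $\E[2c(\lambda)]=t(t-1)\sum_ir_i^2$, is exactly the shifted power-sum identity the paper quotes, and your Steps~3--4 supply the bound $\E[\sum_i(2i-1)\lambda_i]\le 2t^{3/2}$, which the paper cites from \cite{CLL24} without reproducing its proof, via $\sum_j\lambda_j'^2\le t\,\ell(\lambda)$ and $\E\,\ell(\lambda)\le 2\sqrt t$ (RSK/Greene plus your coupling to a uniform permutation and the classical bound $\E L_t\le 2\sqrt t$, which also follows in a few lines from the Plancherel growth process), so the constant $2$ is recovered exactly.
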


The displayed statement of Claim~3.27 in \cite{CLL24} gives a coarser
bound.  Its proof uses the shifted power sum identity
\[
  \E\Brac{
    \sum_i\lambda_i^2-\sum_i(2i-1)\lambda_i}
  =
  t(t-1)\sum_i r_i^2
\]
and proves
$\E[\sum_i(2i-1)\lambda_i]\leq2t^{3/2}$, which gives
\cref{eq:sw-second-moment-bound}.  When $r=\operatorname{spec}\rho$, the
first term is $t(t-1)\Tr(\rho^2)$.  Centering the unfolded matrix later
subtracts $t^2\Tr(\rho^2)$, leaving $-t\Tr(\rho^2)\leq0$.

\subsection{Entropy, Mutual Information, and Testing}
\label{sec:prelim-information}

We collect the information identities used to pass from local Fisher
information to a minimax lower bound.

For probability measures $P$ and $Q$, their relative entropy is
\[
  D(P\Vert Q)
  =
  \int\log\Paren{\frac{\dd P}{\dd Q}}\dd P
\]
when $P$ is absolutely continuous with respect to $Q$, and is $+\infty$
otherwise.
The mutual information between random variables $X$ and $Y$ is
$I(X;Y)=D(P_{XY}\Vert P_X\otimes P_Y)$.

\begin{fact}[Mutual information for general outcomes {\cite{Gray11}}]
\label{fact:mutual-information-general-outcomes}
For every pair of jointly distributed random variables $X$ and $Y$,
\[
  I(X;Y)
  =
  \sup_{\mathcal A,\phi} I(X;\phi(Y)),
\]
where the supremum is over finite sets $\mathcal A$ and measurable maps $\phi$
from the outcome space of $Y$ to $\mathcal A$.  Thus a mutual information
upper bound follows if it holds for $\phi(Y)$ for every such map.
\end{fact}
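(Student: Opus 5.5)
The plan is to prove the two inequalities separately: the bound $\leq$ comes from data processing, and the bound $\geq$ comes from the partition characterization of relative entropy (the Gelfand--Yaglom--Perez/Dobrushin theorem).

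\emph{Upper bound.} For every finite set $\mathcal A$ and measurable $\phi$, the pair $(X,\phi(Y))$ is the image of $(X,Y)$ under the measurable map $(x,y)\mapsto(x,\phi(y))$. This map sends $P_X\otimes P_Y$ to $P_X\otimes P_{\phi(Y)}$. The data processing inequality for relative entropy under a common measurable map therefore gives
\[
I(X;\phi(Y))=D(P_{X\phi(Y)}\Vert P_X\otimes P_{\phi(Y)})\leq D(P_{XY}\Vert P_X\otimes P_Y)=I(X;Y).
\]
Taking the supremum over $\mathcal A$ and $\phi$ proves $\sup_{\mathcal A,\phi}I(X;\phi(Y))\leq I(X;Y)$.

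\emph{Lower bound.} I will use the partition formula
\[
D(P\Vert Q)=\sup_{\mathcal P}\sum_{A\in\mathcal P}P(A)\log\frac{P(A)}{Q(A)},
\]
where the supremum is over finite measurable partitions $\mathcal P$. The version I need restricts $\mathcal P$ to partitions drawn from a fixed algebra that generates the $\sigma$-algebra. I apply it with $P=P_{XY}$, $Q=P_X\otimes P_Y$, and the algebra of finite unions of measurable rectangles $A\times B$, which generates the product $\sigma$-algebra. Every finite partition in this algebra is refined by a product partition $\{A_i\times B_j\}$, where $\{A_i\}$ and $\{B_j\}$ are finite measurable partitions of the two outcome spaces. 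By the log-sum inequality, refining a partition does not decrease the partition sum. So the supremum may be taken over product partitions alone.

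For a product partition, let $\psi(x)=i$ on $A_i$ and $\phi(y)=j$ on $B_j$. The partition sum then equals $I(\psi(X);\phi(Y))$ exactly. Data processing applied to the first coordinate gives $I(\psi(X);\phi(Y))\leq I(X;\phi(Y))$. Combining,
\[
I(X;Y)=\sup_{\psi,\phi}I(\psi(X);\phi(Y))\leq\sup_{\mathcal A,\phi}I(X;\phi(Y)).
\]
This argument also covers $I(X;Y)=+\infty$.

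\emph{Main obstacle.} The step needing care is the partition formula restricted to a generating algebra, rather than to all measurable partitions. I would cite it from \cite{Gray11} (Dobrushin's theorem). If I proved it directly, I would argue as follows. If $P\not\ll Q$, choose a set $E$ with $Q(E)=0<P(E)$. Approximating $E$ by sets in the algebra yields partition sums that diverge. If $P\ll Q$, take an increasing sequence of finite subalgebras whose union generates the $\sigma$-algebra; this exists because the spaces are standard Borel. The density $\dd P/\dd Q$ is then the $L^1(Q)$ martingale limit of the cell-wise ratios $P(A)/Q(A)$. Lower semicontinuity of relative entropy under this convergence, or Fatou's lemma applied to $x\log x$, gives $\liminf$ of the partition sums $\geq D(P\Vert Q)$. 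The reverse inequality is the easy data processing direction.
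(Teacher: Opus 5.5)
Your proof is correct. The paper gives no proof of this fact and simply cites \cite{Gray11}, and your route matches the standard argument behind that citation: data processing gives $\leq$, while $\geq$ comes from Dobrushin's partition formula over the generating algebra of measurable rectangles, refinement to product partitions via the log-sum inequality, and data processing to remove the quantizer on $X$. Your martingale-plus-Fatou sketch of the restricted partition formula is also sound, and its standard Borel hypothesis is harmless here, since the paper applies the fact only to a Euclidean-valued $\Delta$ and a standard Borel transcript.
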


For a probability measure $\mu$ and a nonnegative function $f$, define
\[
  \Ent_\mu(f)
  =
  \E_\mu[f\log f]
  -
  \E_\mu[f]\log\E_\mu[f].
\]

The next fact is the specialization to finite outcomes of
\cite[Theorem~1]{ALPC19} in which the prior also serves as the reference
distribution.  We include the short proof so that we need no regularity
assumption on a complete transcript.

\begin{fact}[Mutual information from a log-Sobolev inequality
  {\cite[Theorem~1]{ALPC19}}]
\label{fact:lsi-to-mutual-information}
Suppose $X\sim\mu$ takes values in a Euclidean space and, for every positive
smooth function $f$ with $\E_\mu f=1$,
\[
  \Ent_\mu(f)
  \leq
  C_\mu
  \E_\mu\Brac{\frac{\norm{\nabla f}_2^2}{f}}.
\]
Let $A$ take values in a finite set, set
$p_a(x)=\Pr[A=a\mid X=x]$ and $\bar p_a=\E_\mu p_a(X)$, and suppose $p_a$ is
positive and smooth whenever $\bar p_a>0$.  Then
\begin{equation}
\label{eq:lsi-to-mutual-information}
  I(X;A)
  \leq
  C_\mu
  \E_{X\sim\mu}\Tr J_A(X).
\end{equation}
\end{fact}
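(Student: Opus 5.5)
The plan follows the standard route for comparing mutual information with entropy. Fix the finite variable $A$ and write the joint law as $\mu(\dd x)p_a(x)$. The starting point is the identity
\[
  I(X;A)=\sum_{a:\bar p_a>0}\bar p_a\,D(\mu_a\Vert\mu),
  \qquad
  \frac{\dd\mu_a}{\dd\mu}(x)=\frac{p_a(x)}{\bar p_a},
\]
where $\mu_a$ is the posterior law of $X$ given $A=a$. For each $a$ with $\bar p_a>0$, set $f_a=p_a/\bar p_a$. This function is positive and smooth, and $\E_\mu f_a=1$. Moreover $D(\mu_a\Vert\mu)=\E_\mu[f_a\log f_a]=\Ent_\mu(f_a)$. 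Outcomes with $\bar p_a=0$ satisfy $p_a=0$ $\mu$-almost surely, so they contribute nothing to either side and can be discarded.

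Next, apply the assumed log-Sobolev inequality to each $f_a$:
\[
  \bar p_a\Ent_\mu(f_a)
  \leq
  C_\mu\,\bar p_a\E_\mu\Brac{\frac{\norm{\nabla f_a}_2^2}{f_a}}
  =
  C_\mu\E_\mu\Brac{\frac{\norm{\nabla p_a}_2^2}{p_a}}.
\]
The equality holds because $\nabla f_a=\nabla p_a/\bar p_a$, so the factors of $\bar p_a$ cancel. Summing over $a$ gives
\[
  I(X;A)\leq C_\mu\E_{X\sim\mu}\Brac{\sum_a\frac{\norm{\nabla p_a(X)}_2^2}{p_a(X)}}.
\]

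Finally, identify the inner sum as the Fisher information of $A$ at $X=x$. The score is $s_A(a)=\nabla\log p_a(x)=\nabla p_a(x)/p_a(x)$, so
\[
  \Tr J_A(x)=\sum_a p_a(x)\norm{s_A(a)}_2^2=\sum_a\frac{\norm{\nabla p_a(x)}_2^2}{p_a(x)}.
\]
Here the gradient with respect to $x$ is taken in the same Euclidean coordinates used for the parameter. Substituting this into the previous bound gives \cref{eq:lsi-to-mutual-information}.

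The only delicate point is regularity. The outcome set is finite, so all sums are finite, and no interchange of limits or differentiation of integrals is needed. Positivity and smoothness of $p_a$ whenever $\bar p_a>0$ make $f_a$ an admissible test function for the inequality. For this reason I expect no real obstacle beyond checking that the log-Sobolev hypothesis applies to exactly these functions. If smoothness were only available on the support of $\mu$, one would instead approximate $f_a$ by smooth positive functions and pass to the limit using lower semicontinuity of entropy.
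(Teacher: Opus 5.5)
Your proposal is correct and follows the paper's proof essentially step for step. Both write $I(X;A)=\sum_a\bar p_a\Ent_\mu(p_a/\bar p_a)$, apply the log-Sobolev inequality to each normalized likelihood $f_a=p_a/\bar p_a$, cancel the factors of $\bar p_a$, and identify $\sum_a\norm{\nabla p_a}_2^2/p_a$ as $\Tr J_A$.
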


\begin{proof}
For each outcome with $\bar p_a>0$, set $f_a=p_a/\bar p_a$.  Outcomes with
$\bar p_a=0$ may be discarded.  Since
$\E_\mu f_a=1$,
\[
  I(X;A)
  =
  \sum_a\bar p_a\Ent_\mu(f_a).
\]
Applying the log-Sobolev inequality to each $f_a$ gives
\begin{align*}
  I(X;A)
  &\leq
  C_\mu\sum_a
  \int\frac{\norm{\nabla p_a(x)}_2^2}{p_a(x)}\,\dd\mu(x)\\
  &=
  C_\mu\E_{X\sim\mu}\Tr J_A(X),
\end{align*}
which proves \cref{eq:lsi-to-mutual-information}.
\end{proof}

The following metric form of Fano's inequality is the unconditional
specialization of \cite[Lemma~1]{XuRaginsky17}.  We use natural logarithms,
so its additive constant is $\log2$.

\begin{lemma}[Fano's inequality for metric balls]
\label{lem:small-ball-fano}
Let $(\mathsf X,\mathsf d)$ be a metric space, let $X\sim\pi$ take values in
$\mathsf X$, and let $\widehat X$ be an estimate of $X$ based on $Y$.  Let
$\eta\geq0$ and $0<\beta<1$.  If
\[
  \sup_{x_0\in\mathsf X}
  \pi\Set{x:\mathsf d(x,x_0)\leq\eta}
  \leq\beta
\]
and
$q=\Pr[\mathsf d(X,\widehat X(Y))\leq\eta]$, then
\begin{equation}
\label{eq:small-ball-fano}
  I(X;Y)
  \geq
  q\log\frac1\beta-\log2.
\end{equation}
\end{lemma}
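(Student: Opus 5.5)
The plan is the standard Fano argument, with $-\log$ of the success probability replaced by a bound coming from the small-ball hypothesis. First I will reduce to a finite-valued observation. By \cref{fact:mutual-information-general-outcomes} and data processing, $I(X;Y)\geq I(X;\widehat X(Y))$. The right-hand side is the quantity I will bound below.

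\textbf{Step 1: change of measure.} Compare the joint law $P=P_{X\widehat X}$ with the product law $Q=\pi\otimes P_{\widehat X}$. Consider the event
\[
  E=\Set{(x,\hat x):\mathsf d(x,\hat x)\leq\eta}.
\]
Under $P$ it has probability $q$. Under $Q$, condition on $\hat x$ and apply the hypothesis:
\[
  Q(E)=\int\pi\Set{x:\mathsf d(x,\hat x)\leq\eta}\,\dd P_{\widehat X}(\hat x)\leq\beta.
\]
Measurability of $E$ is fine, since $\mathsf d$ is continuous on a metric space with its Borel structure.

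\textbf{Step 2: data processing to a binary test.} Apply data processing for relative entropy to the indicator of $E$:
\[
  I(X;\widehat X)=D(P\Vert Q)\geq d_{\mathrm{KL}}(q\Vert Q(E)),
\]
where $d_{\mathrm{KL}}$ is the binary relative entropy.

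\textbf{Step 3: elementary bound.} Write $b=Q(E)\leq\beta<1$. Then
\[
  d_{\mathrm{KL}}(q\Vert b)=q\log\frac qb+(1-q)\log\frac{1-q}{1-b}\geq q\log\frac1b-h(q),
\]
because $\log\frac1{1-b}\geq0$ and $h(q)=-q\log q-(1-q)\log(1-q)$. Using $h(q)\leq\log2$ and $\log\frac1b\geq\log\frac1\beta$ gives \cref{eq:small-ball-fano}.

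The edge cases need a separate check. If $b=0$ and $q>0$, the divergence is infinite and the bound is trivial. If $q=0$, the right-hand side is $-\log 2$, so the bound holds trivially.

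\textbf{Main obstacle.} The only delicate point is Step 1: justifying $Q(E)\leq\beta$ via Fubini with the product measure, and allowing $\widehat X$ to take values outside the support of $\pi$. The supremum over all $x_0\in\mathsf X$ in the hypothesis covers exactly this. Randomized estimators are handled by absorbing the randomness into $Y$.
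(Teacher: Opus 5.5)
Your proof is correct. It reaches the result by a more self-contained route than the paper, which does not argue from first principles. The paper invokes \cite[Lemma~1]{XuRaginsky17} with a constant auxiliary variable, the loss $\one_{\{\mathsf d(x,x_0)>\eta\}}$ and threshold $1/2$. With these choices the lemma's small-ball term becomes $\sup_{x_0}\pi\{x:\mathsf d(x,x_0)\le\eta\}\le\beta$, and the paper then rearranges and uses $I(X;\widehat X(Y))\le I(X;Y)$.

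Your argument has three steps: a change of measure to $\pi\otimes P_{\widehat X}$, data processing of relative entropy onto the indicator of $E$, and the elementary bound $d_{\mathrm{KL}}(q\Vert b)\ge q\log(1/b)-h(q)$. This is the mechanism underlying the cited lemma, so you have unpacked the citation into a direct proof. What you gain is transparency: it is explicit that the hypothesis enters only through $Q(E)\le\beta$ via Fubini, and that the supremum over all centers $x_0$ is what handles estimates outside the support of $\pi$. The paper's version is shorter but leaves that to the reference.

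Two minor remarks.
\begin{itemize}
\item The appeal to \cref{fact:mutual-information-general-outcomes} and the announced reduction to a finite-valued observation are unnecessary. Ordinary data processing already gives $I(X;Y)\ge I(X;\widehat X(Y))$, and you never use finiteness of the range.
\item Continuity of $\mathsf d$ only makes $E$ Borel for the product topology. That $\sigma$-algebra coincides with the product $\sigma$-algebra needed for Fubini when $\mathsf X$ is separable. This is automatic in the paper's application to $\operatorname{Herm}_0(\C^d)$ with the trace-norm metric, but it is worth stating if the lemma is meant for a general metric space.
\end{itemize}
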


\begin{proof}
Apply the cited inequality with its auxiliary variable constant, threshold
$1/2$, and loss
$\ell(x,x_0)=\one_{\{\mathsf d(x,x_0)>\eta\}}$.  Its small-ball term is at
most $\beta$ by hypothesis.  Rearranging and using
$I(X;\widehat X(Y))\leq I(X;Y)$ gives \cref{eq:small-ball-fano}.
\end{proof}

The Gaussian prior argument proves the main lower bound
(\cref{thm:main-lower-bound}) only above a fixed dimension.  The next result
handles the remaining dimensions.

\begin{fact}[Lower bound for unrestricted joint measurements
  {\cite[Theorem~3]{HHJWWY16}}]
\label{fact:hhj-collective-lower-bound}
Let $d\geq2$ and $0<\delta,\eta<1$.  Suppose a POVM on $n$ copies outputs a
state $\widehat\rho$ such that, for every $d$-dimensional state $\rho$,
\[
  \Pr\Brac{\Dtr(\rho,\widehat\rho)\leq\frac\delta2}
  \geq
  1-\eta.
\]
Then
\begin{equation}
\label{eq:hhj-collective-lower-bound}
  n
  \geq
  C_\eta\frac{d^2}{\delta^2}(1-\delta)^2
\end{equation}
holds.  The constant $C_\eta>0$ depends only on $\eta$.
\end{fact}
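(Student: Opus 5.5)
The plan is the standard packing-plus-Holevo argument. It uses only the subadditivity of von Neumann entropy and does not touch the block structure, since here one POVM acts on all $n$ copies.

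\textbf{Step 1: the packing.} Fix the scale $\gamma\asymp\delta$. I would construct a family $\{\rho_x\}_{x\in\mathcal X}$ of states
\[
  \rho_x=\frac{I_d}{d}+\Delta_x,
  \qquad
  \Delta_x\in\operatorname{Herm}_0(\C^d),
  \qquad
  \norm{\Delta_x}_{\op}\leq\frac{\gamma}{d},
\]
such that $\norm{\rho_x-\rho_{x'}}_1>2\delta$ for $x\neq x'$ and $\log\abs{\mathcal X}\gtrsim d^2$. The natural choice is $\Delta_x=\frac{\gamma}{d}U_xDU_x^\dagger$, where $D=\operatorname{diag}(1,\ldots,1,-1,\ldots,-1)$ (with a zero entry if $d$ is odd) and the unitaries $U_x$ form a maximal separated set in the Grassmannian of $\lfloor d/2\rfloor$-dimensional subspaces. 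A volumetric argument on this manifold of real dimension $\asymp d^2$, together with the comparison $\norm{\cdot}_1\geq\norm{\cdot}_F^2/\norm{\cdot}_{\op}$ from \cref{fact:schatten-comparisons}, gives $\exp(cd^2)$ points. These points are pairwise separated by $\gtrsim\gamma$ in trace norm, and the factor $(1-\delta)^2$ comes from the constraint that $\gamma$ must stay below $1$ for the $\rho_x$ to remain states. Equivalently, one could draw $\Delta_x$ at random from the truncated Gaussian law $\mu$ of \cref{sec:overview-prior} and use its small-ball bound instead of an explicit packing.

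\textbf{Step 2: Fano.} Let $X$ be uniform on $\mathcal X$. On the success event, rounding $\widehat\rho$ to the nearest $\rho_x$ recovers $X$, because the balls of trace distance $\delta/2$ around the packing points are disjoint. Fano's inequality, or \cref{lem:small-ball-fano}, then gives
\[
  I(X;\widehat\rho)\geq(1-\eta)\log\abs{\mathcal X}-\log2\gtrsim d^2
\]
for $d$ above a threshold depending on $\eta$. Small $d$ is absorbed into $C_\eta$ by a two-point argument.

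\textbf{Step 3: Holevo and subadditivity.} By the Holevo bound,
\[
  I(X;\widehat\rho)\leq
  S\Bigl(\E_x\rho_x^{\otimes n}\Bigr)-\E_xS(\rho_x^{\otimes n}).
\]
The averaged state has every one-copy marginal equal to $\bar\rho=\E_x\rho_x$, so subadditivity gives $S(\E_x\rho_x^{\otimes n})\leq nS(\bar\rho)\leq n\log d$. On the other side, $S(\rho_x^{\otimes n})=nS(\rho_x)$. Hence
\[
  I(X;\widehat\rho)\leq n\,\E_x\bigl(\log d-S(\rho_x)\bigr)=n\,\E_xD\bigl(\rho_x\,\Vert\,I_d/d\bigr).
\]
For $\rho_x=I_d/d+\Delta_x$ with $\norm{\Delta_x}_{\op}\leq\gamma/d$, I expect the bound
\[
  D(\rho_x\Vert I_d/d)\leq d\norm{\Delta_x}_F^2\lesssim\frac{\gamma^2}{(1-\gamma)^2}.
\]
Combining this with Step 2 yields $n\gtrsim d^2(1-\delta)^2/\delta^2$.

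\textbf{Main obstacle.} The step needing the most care is the relative entropy estimate in Step 3. I would use the $\chi^2$ comparison
\[
  D(\rho\Vert\sigma)\leq\Tr\bigl((\rho-\sigma)\sigma^{-1}(\rho-\sigma)\bigr),
\]
which with $\sigma=I_d/d$ gives $d\norm{\Delta_x}_F^2=\gamma^2$ directly. Tracking constants there, and in the packing count of Step 1, determines how the threshold on $d$ and the constant $C_\eta$ depend on $\eta$.
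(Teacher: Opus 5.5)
The paper does not prove this fact. It imports it from \cite[Theorem~3]{HHJWWY16} and applies it only in finitely many dimensions, with $\delta=2\eps$ and $\eta\in\{1/3,1/2\}$. The Holevo--subadditivity chain in your Step~3 is sound; note that $\chi\leq n\,\E_xD(\rho_x\Vert I_d/d)\leq n\,d\max_x\norm{\Delta_x}_F^2\leq n\gamma^2$ has no $(1-\gamma)$ denominator. The Fano step for large $d$ is also sound. The genuine gap is that Step~1 only works for small $\delta$, so the statement is not proved on its full range $0<\delta<1$.

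Your packing separates points by $c\gamma$ for an absolute constant $c$. Requiring $c\gamma>2\delta$ together with $\gamma\leq1$ forces $\delta<c/2$. No choice of constants repairs this inside your family, for the following reasons.
\begin{itemize}
\item Every $\rho_x\preceq2I_d/d$, which gives $F(\rho_x,\rho_{x'})\geq\frac d2\Tr(\rho_x\rho_{x'})$.
\item For any three such states, $\Tr\bigl((\rho_1+\rho_2+\rho_3)^2\bigr)\geq9/d$ while each $\Tr\rho_i^2\leq2/d$.
\item Hence some pair has fidelity at least $1/4$, and so trace distance at most $\sqrt{15}/4$.
\end{itemize}
So for $\delta>\sqrt{15}/4$ you cannot even find three states with disjoint $\delta$-balls. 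Yet for $\delta\in[c/2,1)$ the statement still asserts $n\gtrsim_\eta d^2(1-\delta)^2$. This does not follow from the small-$\delta$ case, because the accuracy hypothesis gets weaker as $\delta$ grows. Your remark that the factor $(1-\delta)^2$ comes from $\gamma<1$ is therefore not an argument. Nothing in Steps~1--3 produces that factor, and what you actually prove is $n\gtrsim_\eta d^2/\delta^2$ for $\delta<c/2$ only.

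One way to close the gap is to give up disjoint balls and apply \cref{lem:small-ball-fano} to the unitary orbit $\rho_U=I_d/d+\frac\gamma dUDU^\dagger$ with $U$ Haar distributed.
\begin{itemize}
\item For every state $\sigma$, convexity and $\E_UU^\dagger\sigma U=I_d/d$ give $\E_U\norm{\rho_U-\sigma}_1\geq\gamma'=2\gamma\lfloor d/2\rfloor/d$.
\item The map $U\mapsto\norm{\rho_U-\sigma}_1$ is $2\gamma/\sqrt d$-Lipschitz in Hilbert--Schmidt norm.
\item Concentration on $SU(d)$ therefore bounds the mass of every trace-norm ball of radius $\delta<\gamma'$ by $\exp\bigl(-c\,d^2(\gamma'-\delta)^2/\gamma^2\bigr)$.
\end{itemize}
Combine this with your bound $\chi\leq n\gamma^2$ and choose $\gamma=\min\{2\delta,1\}$. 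This gives $n\gtrsim_\eta d^2/\delta^2$ for $\delta\leq1/2$, and $n\gtrsim_\eta d^2(1-\delta)^2$ for $\delta>1/2$ once $d(1-\delta)$ is large in terms of $\eta$. Together these recover the claimed rate.

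Separately, your two-point argument for small $d$ yields nothing when $\eta\geq1/2$, since a random guess between two hypotheses already succeeds with probability $1/2$ under each. This is exactly the case $\eta=1/2$ used in the proof of \cref{cor:expected-copy-budget}. In fixed dimension you need Fano over more than $2^{1/(1-\eta)}$ well-separated states instead.
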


Every adaptive protocol on $n$ copies induces a POVM on all $n$ copies.
Thus the lower bound for unrestricted joint measurements
\cref{eq:hhj-collective-lower-bound} applies to the protocols considered
here.

\begin{corollary}[Lower bound for unrestricted joint measurements]
\label{cor:fixed-dimensional-patch}
For every $d\geq2$ and $0<\eps\leq1/4$, any tomography protocol using $n$
copies and succeeding with probability at least $2/3$ on every
$d$-dimensional state satisfies
\begin{equation}
\label{eq:fixed-dimensional-patch}
  n\gtrsim\frac{d^2}{\eps^2}.
\end{equation}
This holds even if a single joint POVM may act on all $n$ copies.
\end{corollary}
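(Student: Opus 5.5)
The plan is to derive the corollary directly from \cref{fact:hhj-collective-lower-bound}. The first step is to view the tomography protocol as a single joint POVM on all $n$ copies. Every adaptive protocol in \cref{def:block-protocol-intro}, together with its internal randomness and final estimator, induces a POVM on $(\C^d)^{\otimes n}$ whose outcome is the estimate $\widehat\rho$. To see this, compose the conditional POVMs round by round. If a round stops early, tensor with the identity on the unused copies. Average over the internal randomness $\xi$. Finally push forward through the classical estimator, which does not depend on the state. For a finite outcome set this is a direct composition of effects. For general outcome spaces the composition is measurable by the kernel assumption in \cref{sec:prelim-fisher}. In every case, the induced measurement reproduces the law of $\widehat\rho$ under each input $\rho$.

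The second step is to choose the parameters of \cref{fact:hhj-collective-lower-bound}. Success with probability at least $2/3$ means $\Pr[\Dtr(\rho,\widehat\rho)\leq\eps]\geq2/3$. We therefore set $\delta=2\eps$ and $\eta=1/3$. The hypothesis $\eps\leq1/4$ gives $\delta\leq1/2<1$, so the fact applies. It yields
\[
  n\geq C_{1/3}\frac{d^2}{4\eps^2}(1-2\eps)^2\geq\frac{C_{1/3}}{16}\cdot\frac{d^2}{\eps^2},
\]
where the last inequality uses $(1-2\eps)^2\geq1/4$. The constant $C_{1/3}$ is absolute, so this is exactly \cref{eq:fixed-dimensional-patch}. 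The final sentence of the statement holds by the same argument: for an arbitrary single joint POVM the first step is unnecessary.

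Nothing here is genuinely hard. The only care needed is in the first step. The induced object must be a genuine POVM with total mass equal to the identity on all $n$ copies, including under random stopping, and the estimator may use the internal randomness. Padding with identities on unused copies and treating $\xi$ as a classical mixture handle both points.
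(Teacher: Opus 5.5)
Your proposal is correct and follows the paper's argument: the paper likewise notes that every adaptive protocol induces a single POVM on all $n$ copies, then applies the Haah et al.\ bound with $\eta=1/3$ and $\delta=2\eps$, using $1-2\eps\geq1/2$. Your added detail on padding unused copies with the identity and absorbing the internal randomness and estimator into the POVM spells out the one-line remark the paper makes just before the corollary.
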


\begin{proof}
The claim follows from \cref{eq:hhj-collective-lower-bound} with $\eta=1/3$
and $\delta=2\eps$, since $1-2\eps\geq1/2$.
\end{proof}

\subsection{Gaussian Concentration and Log-Sobolev Inequalities}
\label{sec:prelim-gaussian}

We finish with the Gaussian inequalities used to construct and analyze the
prior.

The lower tail for the squared norm follows from
\cite[Lemma~1]{LaurentMassart00} with its parameter set to $m/16$.  The
Lipschitz bound is the Gaussian concentration inequality
\cite[Equation~(2.35)]{Ledoux01} applied to $-f$.

\begin{fact}[Gaussian norm and concentration]
\label{fact:gaussian-norm-concentration}
Let $Z$ be a standard Gaussian on an $m$-dimensional real Hilbert space.
Then $\norm Z_2^2$ has the $\chi_m^2$ distribution,
\[
  \E\norm Z_2^2=m,
  \qquad
  \Pr\Brac{\norm Z_2^2\leq\frac m2}
  \leq
  e^{-m/16}.
\]
If $f$ is $L$-Lipschitz, then, for every $u\geq0$,
\begin{equation}
\label{eq:gaussian-concentration}
  \Pr[f(Z)\leq\E f(Z)-u]
  \leq
  \exp\Paren{-\frac{u^2}{2L^2}}.
\end{equation}
\end{fact}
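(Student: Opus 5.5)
The plan is to fix an orthonormal basis $e_1,\ldots,e_m$ of the Hilbert space and write $Z=\sum_{b=1}^m g_be_b$ with $g_1,\ldots,g_m$ independent $\mathcal N(0,1)$. This basis-independent description of the standard Gaussian is the one used in \cref{sec:overview-information-comparison}. Parseval then gives $\norm Z_2^2=\sum_b g_b^2$, which is by definition $\chi^2_m$. Linearity of expectation and $\E g_b^2=1$ give $\E\norm Z_2^2=m$.

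For the lower tail I will not rely on the citation but give a direct Chernoff argument, since it is short. For every $\lambda>0$, Markov's inequality applied to $e^{-\lambda\norm Z_2^2}$ and independence give
\[
  \Pr\Brac{\norm Z_2^2\leq\frac m2}
  \leq
  e^{\lambda m/2}\,\Paren{\E e^{-\lambda g_1^2}}^m
  =
  e^{\lambda m/2}(1+2\lambda)^{-m/2},
\]
using the one-dimensional Gaussian integral $\E e^{-\lambda g^2}=(1+2\lambda)^{-1/2}$. I will take $\lambda=1/2$. This gives the bound $\exp\bigl(-m(\tfrac12\log2-\tfrac14)\bigr)$. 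The only check needed is the numerical inequality $\tfrac12\log 2-\tfrac14\approx0.0966\geq\tfrac1{16}$, which yields $e^{-m/16}$.

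For \cref{eq:gaussian-concentration}, I will apply the standard Gaussian concentration inequality for Lipschitz functions in its upper-tail form, $\Pr[h(Z)\geq\E h(Z)+u]\leq e^{-u^2/(2L^2)}$ for $L$-Lipschitz $h$ \cite[Equation~(2.35)]{Ledoux01}, to $h=-f$. Since $-f$ is $L$-Lipschitz and has mean $-\E f(Z)$, this is exactly the claimed lower tail. Through the isometry $\theta\mapsto\sum_b\theta_be_b$, the Lipschitz constant is the same in $\R^m$ as in the Hilbert space, so it suffices to work in $\R^m$. The mean $\E f(Z)$ is finite because $\abs{f(z)}\leq\abs{f(0)}+L\norm z_2$. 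If a self-contained argument is wanted, I would prove the Lipschitz bound by the Herbst argument. The Gaussian log-Sobolev inequality applied to $e^{\lambda h}$ gives $\frac{\dd}{\dd\lambda}\bigl(\lambda^{-1}\log\E e^{\lambda(h-\E h)}\bigr)\leq L^2/2$. This yields $\E e^{\lambda(h-\E h)}\leq e^{\lambda^2L^2/2}$, and Markov's inequality with $\lambda=u/L^2$ finishes. The Herbst argument first needs smoothing $h$ by Gaussian convolution, which preserves the Lipschitz constant, and then a limit; this regularization is the only mildly delicate step, and everything else is routine.
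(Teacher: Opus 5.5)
Your proposal is correct. For the Lipschitz tail you do exactly what the paper does: apply the upper-tail Gaussian concentration inequality \cite[Equation~(2.35)]{Ledoux01} to $-f$.

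The difference is in the $\chi^2$ lower tail. The paper cites \cite[Lemma~1]{LaurentMassart00}, namely $\Pr[\chi^2_m\leq m-2\sqrt{mx}]\leq e^{-x}$, with $x=m/16$. This choice makes the threshold $m-2\sqrt{m\cdot m/16}$ exactly $m/2$. You instead run the Chernoff bound directly at the threshold $m/2$ with the optimal parameter $\lambda=1/2$. This gives the exponent $\frac12\log2-\frac14\approx0.0966$, which is strictly better than the required $1/16$.

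Your computation is self-contained and tailored to the single threshold the paper needs; it is essentially the mechanism underlying the Laurent--Massart lemma. The citation instead supplies a deviation inequality at every threshold with no extra work.

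Your optional Herbst sketch is a valid self-contained substitute for the Ledoux citation, though not needed. It includes the correct derivative identity, the limit $\lambda\to0^+$, and the smoothing step before applying the log-Sobolev inequality.
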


\begin{corollary}[Operator norm of a traceless Hermitian Gaussian]
\label{cor:hermitian-gaussian-op-tail}
Let $Z$ be a standard Gaussian on
$\operatorname{Herm}_0(\C^d)$ with the Frobenius inner product.  For all
sufficiently large $d$,
\begin{equation}
\label{eq:hermitian-gaussian-op-tail}
  \Pr\Brac{\norm Z_{\op}\geq8\sqrt d}
  \leq
  2\exp\Paren{-\bigl(8-2\log9\bigr)d}
  \leq\frac1{10}.
\end{equation}
\end{corollary}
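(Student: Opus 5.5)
The plan is a union bound over an $\epsilon$-net on the complex unit sphere. It rests on three observations: each quadratic form of $Z$ is a one-dimensional Gaussian of variance at most one, a $1/4$-net controls the operator norm of a Hermitian matrix up to a factor of two, and the net has at most $9^{2d}$ points.

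\textbf{Step 1: quadratic forms are Gaussian.} Fix a unit vector $x\in\C^d$. Then
\[
  \bra xZ\ket x=\Tr(Z\proj x)=\ip{Z}{P_x},
  \qquad
  P_x=\proj x-\frac{I_d}{d},
\]
where the second equality uses $\Tr Z=0$ and $P_x$ is the orthogonal projection of $\proj x$ onto $\operatorname{Herm}_0(\C^d)$. So $\bra xZ\ket x$ is a centered real Gaussian with variance
\[
  \norm{P_x}_F^2=1-\frac1d\leq1 .
\]
The map $Z\mapsto\ip Z{P_x}$ is linear and $1$-Lipschitz. Applying \cref{eq:gaussian-concentration} (with $L=1$) to it and to its negative gives
\[
  \Pr\bigl[\abs{\bra xZ\ket x}\geq u\bigr]\leq2e^{-u^2/2}.
\]

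\textbf{Step 2: net reduction.} Let $\mathcal N$ be a $1/4$-net of the unit sphere of $\C^d\cong\R^{2d}$. The standard volumetric bound gives $|\mathcal N|\leq(1+2/(1/4))^{2d}=9^{2d}$. For a Hermitian $A$, we have $\norm A_{\op}=\sup_{\norm x_2=1}\abs{\bra xA\ket x}$. For unit vectors $x,y$,
\[
  \abs{\bra xA\ket x-\bra yA\ket y}\leq2\norm{x-y}_2\norm A_{\op}.
\]
Choosing the maximizer $x$ and a net point $y$ with $\norm{x-y}_2\leq1/4$ gives $\norm A_{\op}\leq\sup_{y\in\mathcal N}\abs{\bra yA\ket y}+\tfrac12\norm A_{\op}$. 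Hence $\norm A_{\op}\leq2\sup_{y\in\mathcal N}\abs{\bra yA\ket y}$.

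\textbf{Step 3: union bound.} By Step 2, the event $\norm Z_{\op}\geq8\sqrt d$ forces $\abs{\bra yZ\ket y}\geq4\sqrt d$ for some $y\in\mathcal N$. Combining the union bound over $\mathcal N$ with Step 1 at $u=4\sqrt d$ gives
\[
  \Pr\bigl[\norm Z_{\op}\geq8\sqrt d\bigr]
  \leq 9^{2d}\cdot2e^{-8d}
  =2\exp\bigl(-(8-2\log9)d\bigr).
\]
Since $8-2\log9\approx3.61>0$, the right side is below $1/10$ already at $d=1$, and therefore for all sufficiently large $d$.

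The only step needing care is the net-comparison constant in Step 2, because it must produce exactly the threshold $4\sqrt d$ that makes the exponent $8d$ beat $2d\log9$. The rest is routine.
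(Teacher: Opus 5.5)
Your proposal is correct and follows the paper's proof step for step. Both arguments use a $1/4$-net of the unit sphere of $\C^d\cong\R^{2d}$ with at most $9^{2d}$ points, the comparison $\norm Z_{\op}\leq2\max_{u\in\mathcal N}|u^\dagger Zu|$, the identity $u^\dagger Zu=\ip Z{\proj u-I_d/d}$ for a centered Gaussian of variance $1-1/d\leq1$, and a union bound at threshold $4\sqrt d$; you additionally verify the net comparison, which the paper leaves implicit, and correctly note that the final inequality $\leq 1/10$ already holds for every $d\geq1$.
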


\begin{proof}
Let $\mathcal N$ be a $1/4$-net of the unit sphere in $\C^d$, viewed as a
$2d$-dimensional real sphere, with $|\mathcal N|\leq9^{2d}$.  For every
Hermitian $H$,
\[
  \norm H_{\op}
  \leq
  2\max_{u\in\mathcal N}|u^\dagger Hu|.
\]
For fixed $u$, the random variable
$u^\dagger Zu=\langle Z,\proj u-I_d/d\rangle$ is centered Gaussian with
variance $1-1/d\leq1$.  Hence
\[
  \Pr\Brac{|u^\dagger Zu|\geq4\sqrt d}
  \leq2e^{-8d}.
\]
A union bound over $\mathcal N$ proves
\cref{eq:hermitian-gaussian-op-tail}.
\end{proof}

\begin{fact}[Anderson's inequality {\cite{Anderson55}}]
\label{fact:anderson}
Let $Z$ be a centered Gaussian on a finite-dimensional real vector space and
let $\mathcal C$ be a symmetric convex set.  Then, for every vector
$h$,
\[
  \Pr[Z-h\in\mathcal C]\leq\Pr[Z\in\mathcal C].
\]
\end{fact}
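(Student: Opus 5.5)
The plan is to derive the inequality from the log-concavity of the Gaussian measure, with no direct computation of Gaussian integrals. Let $\gamma$ denote the law of $Z$. For every vector $h$, the event $Z-h\in\mathcal C$ is the event $Z\in\mathcal C+h$, so the claim is $\gamma(\mathcal C+h)\leq\gamma(\mathcal C)$.

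\textbf{Step 1: log-concavity of $\gamma$.} I would first show that for all measurable sets $A,B$ and $\lambda\in[0,1]$ with $\lambda A+(1-\lambda)B$ measurable,
\[
  \gamma\bigl(\lambda A+(1-\lambda)B\bigr)\geq\gamma(A)^\lambda\gamma(B)^{1-\lambda}.
\]
If the covariance of $Z$ is nondegenerate, $\gamma$ has a density $\varphi\propto\exp(-\tfrac12\langle x,\Sigma^{-1}x\rangle)$, and $\varphi$ is log-concave. The displayed inequality is then the Pr\'ekopa--Leindler inequality applied to $f=\varphi\one_A$, $g=\varphi\one_B$, and $m=\varphi\one_{\lambda A+(1-\lambda)B}$. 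The required pointwise condition $m(\lambda x+(1-\lambda)y)\geq f(x)^\lambda g(y)^{1-\lambda}$ holds because $\log\varphi$ is concave. If the covariance is degenerate, $Z$ is supported on a subspace $V$. I would apply the same argument inside $V$ to the traces $A\cap V$ and $B\cap V$, using that $(\lambda A+(1-\lambda)B)\cap V\supseteq\lambda(A\cap V)+(1-\lambda)(B\cap V)$.

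\textbf{Step 2: apply it with $A=\mathcal C+h$, $B=\mathcal C-h$, $\lambda=1/2$.} Convexity of $\mathcal C$ gives
\[
  \tfrac12(c_1+h)+\tfrac12(c_2-h)=\tfrac12(c_1+c_2)\in\mathcal C
  \quad\text{for } c_1,c_2\in\mathcal C,
\]
so $\tfrac12A+\tfrac12B\subseteq\mathcal C$. By monotonicity and Step~1, $\gamma(\mathcal C)\geq\gamma(\mathcal C+h)^{1/2}\gamma(\mathcal C-h)^{1/2}$. Since $Z$ is centered, $\gamma$ is invariant under $x\mapsto-x$. Since $\mathcal C$ is symmetric, $-(\mathcal C+h)=\mathcal C-h$, and therefore $\gamma(\mathcal C-h)=\gamma(\mathcal C+h)$. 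Substituting gives $\gamma(\mathcal C)\geq\gamma(\mathcal C+h)$, which is the claim. The case $\gamma(\mathcal C+h)=0$ needs no argument.

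\textbf{Main obstacle.} The main step is the Pr\'ekopa--Leindler inequality behind Step~1. I would quote it as standard, or else prove it by induction on dimension from the one-dimensional case, which follows from a transport (rearrangement) argument. Measurability is not an issue: a convex set is Lebesgue measurable, and translates and the midpoint set used above are again convex. The degenerate-covariance case reduces to the nondegenerate one as described in Step~1.
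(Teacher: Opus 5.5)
Your proof is correct. The paper gives no proof of this fact. It quotes \cite{Anderson55} and uses the inequality only for the closed symmetric convex set $\{H:\|H\|_1\leq\eta\}$ in the proof of \cref{prop:prior-properties}, so your measurability remarks are not needed there.

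Your route has three steps, all of them valid:
\begin{itemize}
\item Pr\'ekopa--Leindler applied to the log-concave Gaussian density gives the midpoint inequality $\gamma\bigl(\tfrac12A+\tfrac12B\bigr)\geq\gamma(A)^{1/2}\gamma(B)^{1/2}$.
\item Convexity gives $\tfrac12(\mathcal C+h)+\tfrac12(\mathcal C-h)\subseteq\mathcal C$.
\item The symmetry of both $\gamma$ and $\mathcal C$ gives $\gamma(\mathcal C-h)=\gamma(\mathcal C+h)$.
\end{itemize}
This is the standard modern proof, and your reduction for degenerate covariance is also fine.

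Anderson's original argument is organized differently. It writes $\gamma(\mathcal C+kh)$ as a layer-cake integral over the superlevel sets $\{\varphi\geq u\}$ of the density. It then applies Brunn--Minkowski to the convex slices $(\mathcal C\pm h)\cap\{\varphi\geq u\}$, whose volumes agree by symmetry. That version needs only a symmetric density with convex superlevel sets, not log-concavity. It also directly shows that $k\mapsto\gamma(\mathcal C+kh)$ is nonincreasing on $[0,1]$.

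Your version uses the stronger log-concavity that the Gaussian has, and in exchange it needs only a one-line reduction. It also recovers the same monotonicity along rays. Replace the weights $1/2$ by $(1\pm k)/2$ and use
\[
  \mathcal C+kh\supseteq\tfrac{1+k}{2}(\mathcal C+h)+\tfrac{1-k}{2}(\mathcal C-h).
\]
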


We next state the analytic inputs for the log-Sobolev estimate.  The first is
the Gaussian log-Sobolev inequality.

\begin{fact}[Gaussian log-Sobolev inequality
  {\cite{Gross75}}]
\label{fact:gaussian-lsi}
Let $\gamma_s$ be the standard Gaussian measure on $\R^s$.  Every smooth
$g$ satisfies
\[
  \Ent_{\gamma_s}(g^2)
  \leq
  2\E_{\gamma_s}\norm{\nabla g}_2^2.
\]
\end{fact}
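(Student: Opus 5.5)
Fact~\ref{fact:gaussian-lsi} is Gross's classical Gaussian log-Sobolev inequality, and it will simply be invoked from \cite{Gross75}. If a self-contained derivation is wanted, the route is the standard one: first prove the one-dimensional case, then tensorize.

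\textbf{One-dimensional case.} There are two options.
\begin{itemize}
\item Prove the two-point inequality on $\{\pm1\}$. Then apply the central limit theorem to $(x_1+\cdots+x_m)/\sqrt m$, with uniformly random signs $x_i\in\{\pm1\}$, to pass to $\gamma_1$.
\item More directly, use the Ornstein--Uhlenbeck semigroup $P_u$. Write
\[
\Ent_{\gamma_1}(f)=\int_0^\infty \E_{\gamma_1}\bigl[|\nabla P_uf|^2/P_uf\bigr]\,\dd u .
\]
Combine the commutation $\nabla P_uf=e^{-u}P_u\nabla f$ with Cauchy--Schwarz in the form $(P_u\nabla f)^2\le P_uf\cdot P_u(|\nabla f|^2/f)$. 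Integrating $e^{-2u}$ then gives the constant $1/2$ for $f=g^2$, which is the stated bound.
\end{itemize}

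\textbf{Tensorization.} Extend to $\gamma_s=\gamma_1^{\otimes s}$ by the subadditivity of entropy for product measures:
\[
\Ent_{\gamma_s}(F)\le\sum_i\E\bigl[\Ent_{\gamma_1}^{(i)}(F)\bigr].
\]
Apply the one-dimensional bound in each coordinate and sum the squared partial derivatives. This yields $2\E\|\nabla g\|_2^2$.

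\textbf{Main technical point.} The only real difficulty is regularity. The semigroup argument needs to commute derivatives with integrals, and it needs $f$ bounded away from zero. To handle this, first treat smooth $g$ with bounded derivatives, applying the argument to $g^2+\delta$ and then letting $\delta\to0$. Then pass to general smooth $g$ with $\E\|\nabla g\|_2^2<\infty$ by truncation and monotone convergence.

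For the paper's purposes, the identity-covariance version on $\operatorname{Herm}_0(\C^d)\cong\R^{d^2-1}$ is exactly this statement with $s=d^2-1$. It is later rescaled to covariance $1/\kappa$ and transferred to the truncated measure via the Caffarelli contraction \cite{Caffarelli00,FGP20}.
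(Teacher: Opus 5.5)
Your proposal is correct and matches the paper, which also just invokes Gross's inequality \cite{Gross75} without proof. Your optional semigroup sketch is sound as well: the identity $\Ent_{\gamma_1}(f)=\int_0^\infty\E\bigl[|\nabla P_uf|^2/P_uf\bigr]\,\dd u$, the commutation $\nabla P_uf=e^{-u}P_u\nabla f$, and Cauchy--Schwarz give the constant $1/2$ for $f$, hence $2$ for $f=g^2$, and the same argument runs directly on $\R^s$ with vector-valued gradients, so your tensorization step is not actually needed.
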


The prior used below is a Gaussian conditioned on a convex set.  We use the
following form of Caffarelli's contraction theorem.

\begin{fact}[Caffarelli contraction theorem
  {\cite{Caffarelli00,FGP20}}]
\label{fact:caffarelli-convex-conditioning}
Let $\gamma_s$ be the standard Gaussian measure on $\R^s$, let
$K\subseteq\R^s$ be a closed convex set with nonempty interior, and let
$\gamma_s^K$ be $\gamma_s$ conditioned on $K$.  There is a $1$-Lipschitz map
$T:\R^s\to\R^s$ such that $T(X)\sim\gamma_s^K$ whenever $X\sim\gamma_s$.
\end{fact}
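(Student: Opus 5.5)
We would not reprove Caffarelli's theorem from scratch. Instead we would reduce the convex-conditioning case to the classical smooth case by approximation, which is essentially the route of \cite{FGP20}. Write $\gamma_s^K$ as the measure with density proportional to $e^{-V}$, where $V(x)=\norm x_2^2/2+\iota_K(x)$ and $\iota_K$ is $0$ on $K$ and $+\infty$ off $K$. Formally, $V$ is $1$-uniformly convex, i.e. $\nabla^2V\succeq I$, which is exactly the hypothesis of Caffarelli's contraction theorem \cite{Caffarelli00}. That theorem says that for a target $e^{-V}\dd x$ with $V$ smooth and $\nabla^2V\succeq I$, the Brenier map $\nabla\varphi$ pushing $\gamma_s$ forward to the target satisfies $0\preceq\nabla^2\varphi\preceq I$, and is therefore $1$-Lipschitz.

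The approximation step goes as follows.
\begin{enumerate}
\item Replace $\iota_K$ by the smooth convex penalty $\psi_m(x)=m\,\mathrm{dist}(x,K)^2$, mollified if necessary. Then $V_m=\norm x_2^2/2+\psi_m$ is convex with $\nabla^2V_m\succeq I$.
\item Let $\nu_m\propto e^{-V_m}$. By monotone convergence, $\nu_m\to\gamma_s^K$ in total variation, because $K$ is closed with nonempty interior, so $\partial K$ is Lebesgue-null. The second moments of $\nu_m$ are also uniformly bounded.
\item Caffarelli's smooth theorem gives Brenier maps $T_m=\nabla\varphi_m$ that are $1$-Lipschitz and push $\gamma_s$ forward to $\nu_m$.
\item Normalize using a point: the $T_m(0)$ stay bounded. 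Otherwise the $1$-Lipschitz property would push most of the Gaussian mass far away, contradicting the uniform moment bound.
\item By Arzel\`a--Ascoli, along a subsequence $T_m\to T$ locally uniformly, and $T$ is $1$-Lipschitz.
\item Pushforwards pass to the limit. For bounded continuous $g$, dominated convergence gives $\E g(T_m(X))\to\E g(T(X))$, while the left side converges to $\int g\,\dd\gamma_s^K$. Hence $T(X)\sim\gamma_s^K$.
\end{enumerate}

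The main obstacle is the smooth Caffarelli theorem itself, which requires regularity of the Monge--Amp\`ere equation $\log\det\nabla^2\varphi=V(\nabla\varphi)-\norm x_2^2/2+c$. We would simply invoke \cite{Caffarelli00} for it. For orientation, the idea is to differentiate this equation twice in a fixed direction $e$ and apply the maximum principle to $\partial_{ee}\varphi$. At an interior maximum, concavity of $\log\det$ together with $\nabla^2V\succeq I$ forces $(\partial_{ee}\varphi)^2\leq1$. Controlling the behavior at infinity needs an additional localization or perturbation argument, which is the delicate part.

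If one wanted to avoid Monge--Amp\`ere regularity altogether, the fallback would be the Kim--Milman heat-flow map. That construction produces a $1$-Lipschitz transport from $\gamma_s$ to $e^{-V}\dd x$ whenever $V-\norm x_2^2/2$ is convex, and it can be combined with the same approximation step above.
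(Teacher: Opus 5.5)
Your argument is correct, but it places the work differently from the paper. The paper gives no proof of this fact. It cites \cite{Caffarelli00} for the classical theorem and \cite{FGP20} for the version in which the convex part of the potential may take the value $+\infty$; that version covers $\gamma_s^K$ directly and gives the Brenier map itself as the contraction. You use only the smooth theorem and obtain the conditioned case by penalization and compactness. The potentials $V_m=\norm x_2^2/2+m\,\mathrm{dist}(x,K)^2$, mollified, keep $\nabla^2V_m\succeq I$. The densities $e^{-V_m}$ are dominated by $e^{-\norm x_2^2/2}$, with normalizing constants tending to $(2\pi)^{s/2}\gamma_s(K)>0$; this gives total variation convergence and uniform second moments. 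The moment bound keeps $T_m(0)$ bounded. Arzel\`a--Ascoli together with dominated convergence then passes $(T_m)_\#\gamma_s=\nu_m$ to the limit. Each of these steps checks out. Your route avoids relying on the $+\infty$ extension, at the price of a limiting argument whose output is only some $1$-Lipschitz $T$ with $T_\#\gamma_s=\gamma_s^K$, not an identified Brenier map. That is exactly what the statement claims, and all that \cref{cor:convex-gaussian-truncation} uses. Some minor bookkeeping: the nonempty interior is what makes $\gamma_s(K)>0$, while the Lebesgue-null boundary matters only after mollifying. There you should let the mollification radius tend to zero, so the penalty eventually vanishes at interior points, and use dominated rather than monotone convergence. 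Also, \cite{FGP20} proceeds via entropic regularization, so calling your penalization essentially their route is loose, though nothing in your argument depends on it.
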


The required log-Sobolev inequality is a direct consequence.

\begin{corollary}[Log-Sobolev inequality under convex conditioning]
\label{cor:convex-gaussian-truncation}
Let $K\subseteq\R^s$ be a closed convex set with nonempty interior.  For
$\kappa>0$, let
\[
  \dd\mu_K(x)
  =
  \frac1{Z_K}
  e^{-\kappa\norm x_2^2/2}\one_K(x)\dd x.
\]
Here $Z_K$ is the normalizing constant.
Then every positive smooth function $f$ with $\E_{\mu_K}f=1$ satisfies
\begin{equation}
\label{eq:convex-truncation-lsi}
  \Ent_{\mu_K}(f)
  \leq
  \frac1{2\kappa}
  \E_{\mu_K}\Brac{\frac{\norm{\nabla f}_2^2}{f}}.
\end{equation}
\end{corollary}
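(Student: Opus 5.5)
The plan is to transport the Gaussian log-Sobolev inequality (\cref{fact:gaussian-lsi}) through a Lipschitz map obtained from \cref{fact:caffarelli-convex-conditioning}, after rescaling. First reduce to $\kappa=1$. Write $x=y/\sqrt\kappa$. Then $\mu_K$ is the image under $y\mapsto y/\sqrt\kappa$ of the standard Gaussian $\gamma_s$ conditioned on $\sqrt\kappa K$, which is again closed and convex with nonempty interior. Call this conditioned measure $\gamma_s^{K'}$ with $K'=\sqrt\kappa K$. If $h(y)=f(y/\sqrt\kappa)$, then $\norm{\nabla h}_2^2=\kappa^{-1}\norm{\nabla f}_2^2$ evaluated at the corresponding point. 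Entropy is invariant under the pushforward. So \cref{eq:convex-truncation-lsi} for $\mu_K$ follows from the same inequality with constant $1/2$ for $\gamma_s^{K'}$.

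For $\gamma_s^{K'}$, let $T$ be the $1$-Lipschitz map of \cref{fact:caffarelli-convex-conditioning}, and take a positive smooth $h$ with $\E_{\gamma_s^{K'}}h=1$. Set $g=\sqrt{h}\circ T$. Since $T_\#\gamma_s=\gamma_s^{K'}$, we have $\Ent_{\gamma_s^{K'}}(h)=\Ent_{\gamma_s}(g^2)$. \Cref{fact:gaussian-lsi} bounds this by $2\E_{\gamma_s}\norm{\nabla g}_2^2$. By the chain rule and $\norm{DT}_{\op}\leq1$, we get $\norm{\nabla g}_2\leq\norm{\nabla\sqrt h}_2\circ T$. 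Pushing forward again gives $2\E_{\gamma_s^{K'}}\norm{\nabla\sqrt h}_2^2=\tfrac12\E_{\gamma_s^{K'}}[\norm{\nabla h}_2^2/h]$, since $\nabla\sqrt h=\nabla h/(2\sqrt h)$. Combined with the rescaling, this yields the constant $1/(2\kappa)$.

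The main obstacle is regularity. $T$ is only Lipschitz, so $g$ need not be smooth, while \cref{fact:gaussian-lsi} is stated for smooth $g$. I would handle this in the standard way. The Gaussian log-Sobolev inequality extends to locally Lipschitz $g$ by mollification: convolve $g$ with a smooth approximate identity and pass to the limit using dominated convergence, after first reducing to bounded $h$ with bounded gradient by truncation and monotone convergence. For Lipschitz $g$, Rademacher's theorem gives a.e.\ differentiability. At a.e.\ point we have $|\nabla(\phi\circ T)|\leq\mathrm{Lip}(T)\,|\nabla\phi|\circ T$ for smooth $\phi$, using the local Lipschitz constant rather than the Jacobian; this avoids needing differentiability of $T$ itself. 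A second subtlety is that $h$ is only defined on $K'$ (or is smooth on $\R^s$ but only its values on $K'$ matter). Since $T$ maps into $K'$, only values and gradients of $h$ on $K'$ enter. Finally, since the argument bounds only $\E_{\gamma_s^{K'}}$, the boundary of $K'$ has measure zero and causes no issue.
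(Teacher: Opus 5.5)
Your proposal is correct and follows essentially the same route as the paper: reduce to $\kappa=1$ by rescaling, transport the Gaussian log-Sobolev inequality through Caffarelli's $1$-Lipschitz map with $g=\sqrt{f\circ T}$, use $\norm{DT}_{\op}\leq1$ almost everywhere, and extend the Gaussian inequality beyond smooth functions by truncation and smoothing. The differences are cosmetic: you rescale first rather than last, and you phrase the regularity step for locally Lipschitz functions rather than for $W^{1,2}(\gamma_s)$.
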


\begin{proof}
For $\kappa=1$, the Caffarelli contraction theorem
(\cref{fact:caffarelli-convex-conditioning}) gives a $1$-Lipschitz map $T$
that sends the standard Gaussian measure to $\mu_K$.  If the right-hand side
below is infinite, there is nothing to prove.  Otherwise, the Sobolev chain
rule and $\norm{DT}_{\op}\leq1$ almost everywhere give, for
$g=\sqrt{f\circ T}$,
\[
  \norm{\nabla g(x)}_2^2
  \leq
  \frac{\norm{\nabla f(T(x))}_2^2}{4f(T(x))}
  \qquad\text{almost everywhere}.
\]
The Gaussian log-Sobolev inequality extends from smooth functions to
$W^{1,2}(\gamma_s)$ by truncation and smooth approximation.  Applying it to
$g$ therefore gives
\[
  \Ent_{\mu_K}(f)
  \leq
  \frac12\E_{\mu_K}
  \Brac{\frac{\norm{\nabla f}_2^2}{f}}.
\]
Rescaling gives \cref{eq:convex-truncation-lsi} for every $\kappa>0$.
\end{proof}

%% file: sections/main_lower_bound.tex
\section{Proof of the Main Lower Bound}
\label{sec:main-lower-bound-proof}

\mainlowerbound*

\begin{proof}[Proof of \cref{thm:main-lower-bound}]
Let $\Cprior$ be the absolute constant fixed in
\cref{prop:prior-properties}.  For all sufficiently large $d$ and
$0<\eps\leq1/(2\Cprior^2)$, the spectral bound for the prior
(\cref{eq:prior-spectral-bound}) shows that its support consists of states
to which \cref{prop:information-upper-bound} applies with $\rho_\star=I_d/d$.
Averaging the success guarantee and applying the mutual information lower
bound (\cref{cor:information-lower-bound}) gives
\[
  d^2
  \lesssim
  I(\Delta;Y)
  \lesssim
  \frac{\eps^2}{d}
  \E_{\Delta,Y}\Brac{
    \sum_{j=1}^{T}t_j\sqrt{\min\{t_j,d^2\}}
  }.
\]
Thus every successful protocol must satisfy
\begin{equation}
\label{eq:weighted-copy-requirement}
  \E_{\Delta,Y}\Brac{
    \sum_{j=1}^{T}t_j\sqrt{\min\{t_j,d^2\}}
  }
  \gtrsim
  \frac{d^3}{\eps^2}.
\end{equation}
Since $t_j\leq k$ and $\sum_{j=1}^{T}t_j\leq n$ almost surely,
\[
  \sum_{j=1}^{T}t_j\sqrt{\min\{t_j,d^2\}}
  \leq
  n\sqrt{\min\{k,d^2\}}.
\]
Combining this inequality with \cref{eq:weighted-copy-requirement} proves
\[
  n
  \gtrsim
  \frac{d^3}{\eps^2\sqrt{\min\{k,d^2\}}}
\]
for all sufficiently large $d$.

Let
\[
  \eps_0
  =
  \min\Set{
    \frac1{2\Cprior^2},
    \frac14
  }.
\]
The remaining dimensions form a fixed finite set.  For them, the lower bound
for unrestricted joint measurements (\cref{cor:fixed-dimensional-patch}) gives
\[
  n\gtrsim\frac{d^2}{\eps^2}
  \gtrsim
  \frac{d^3}{\eps^2\sqrt{\min\{k,d^2\}}}.
\]
Since only finitely many dimensions remain, this completes the proof.
\end{proof}

For the local corollary, translating the Gaussian prior proves the result in
all sufficiently large dimensions.  The unrestricted lower bound used above
is not local, so the finitely many remaining dimensions are handled by a test
between two states near $\rho_\star$.

\begin{proof}[Proof of \cref{cor:local-lower-bound}]
Set
\[
  L=\Cprior^2,
  \qquad
  \eps_1=\min\Set{\frac1{4\Cprior^2},\frac1{16}}.
\]
First consider the sufficiently large dimensions covered by
\cref{prop:prior-properties}.
For every $\Delta\in K$,
\[
  \norm\Delta_{\op}
  \leq
  \frac{\Cprior\tau}{d}
  =
  \frac{L\eps}{d}.
\]
Moreover, if $\rho_\star\succeq I_d/(2d)$ and $\eps\leq\eps_1$, then
\[
  \rho_\star+\Delta
  \succeq
  \frac{I_d}{2d}-\frac{L\eps}{d}I_d
  \succeq
  \frac{I_d}{4d}.
\]
Thus the translated prior is supported on the states in the local estimation
problem and satisfies the hypothesis of the mutual information upper bound
(\cref{prop:information-upper-bound}).  The mutual information lower bound
(\cref{cor:information-lower-bound}) applies unchanged to this translated
experiment centered at $\rho_\star$.  Combining these two bounds as in the
proof of \cref{thm:main-lower-bound} gives
\[
  n
  \gtrsim
  \frac{d^3}{\eps^2\sqrt{\min\{k,d^2\}}}
\]
for all sufficiently large $d$.

It remains to treat a fixed finite set of dimensions.  Work in an eigenbasis
of $\rho_\star$, let $m=\lfloor d/2\rfloor$, and set
\[
  H
  =
  \frac1d\operatorname{diag}(
    \underbrace{1,\ldots,1}_{m},
    \underbrace{-1,\ldots,-1}_{m},
    0,\ldots,0),
  \qquad
  \rho_\pm=\rho_\star\pm4\eps H.
\]
The final zeros are absent when $d$ is even.  Since $\eps\leq1/16$, both
states have smallest eigenvalue at least $1/(4d)$, and
$\norm{\rho_\pm-\rho_\star}_{\op}=4\eps/d\leq L\eps/d$.  They also satisfy
\[
  \Dtr(\rho_+,\rho_-)
  =
  \frac12\norm{8\eps H}_1
  =
  \frac{8m}{d}\eps
  \geq
  \frac83\eps
  >2\eps.
\]

Let $p_+$ and $p_-$ be their eigenvalue distributions in this basis.  Every
entry of $p_-$ is at least $1/(4d)$, while the two distributions differ by
$8\eps/d$ in at most $2m$ coordinates.  The elementary bound of relative
entropy by chi-squared divergence therefore gives
\begin{equation}
\label{eq:local-two-state-relative-entropy}
  D(p_+\Vert p_-)
  \leq
  \sum_i\frac{(p_{+,i}-p_{-,i})^2}{p_{-,i}}
  \leq
  256\eps^2.
\end{equation}
Because the states commute, the transcript of any protocol on at most $n$
copies is a postprocessing of $n$ independent samples from their common
eigenbasis.  Let $P_+$ and $P_-$ be the two transcript laws.  Data processing
and additivity of relative entropy give
\[
  D(P_+\Vert P_-)
  \leq
  nD(p_+\Vert p_-)
  \leq
  256n\eps^2.
\]
The two trace distance balls of radius $\eps$ are disjoint.  Hence an estimator that succeeds with
probability at least $2/3$ under each state distinguishes $P_+$ from $P_-$
with success probability at least $2/3$.  It follows that
$\operatorname{TV}(P_+,P_-)\geq1/3$.  Pinsker's inequality now gives
\[
  \frac29
  \leq
  D(P_+\Vert P_-)
  \leq
  256n\eps^2.
\]
Thus $n\geq1/(1152\eps^2)$.  Since the remaining dimensions form a fixed
finite set and $\sqrt{\min\{k,d^2\}}\geq1$, this implies
\[
  n
  \gtrsim
  \frac{d^3}{\eps^2\sqrt{\min\{k,d^2\}}},
\]
uniformly over $k$.
\end{proof}

We finally remove the almost sure bound on the number of copies.

\begin{proof}[Proof of \cref{cor:expected-copy-budget}]
Let $N$ be the number of copies used by the original protocol.  Define a new
protocol that follows it but stops before starting the first block that would
bring the total number of copies above $6n$.  If it stops early, it outputs
any fixed state.  For every input state, Markov's inequality gives
\[
  \Pr[N>6n]\leq\frac16.
\]
The new protocol therefore uses at most $6n$ copies and succeeds with
probability at least $1/2$ on every state.

For all sufficiently large dimensions, apply
\cref{cor:information-lower-bound,prop:information-upper-bound} to the new
protocol.  The first result allows success probability $1/2$, and the second
applies with copy budget $6n$.  Their comparison gives
\[
  n
  \gtrsim
  \frac{d^3}{\eps^2\sqrt{\min\{k,d^2\}}}
\]
for both estimation problems.  For the fixed dimensions in the main theorem,
the lower bound for unrestricted joint measurements in
\cref{fact:hhj-collective-lower-bound}, applied with $\eta=1/2$ and
$\delta=2\eps$ to the same truncated protocol, gives
$n\gtrsim d^2/\eps^2$, which implies the displayed rate because only finitely
many dimensions remain.

For the fixed dimensions in the local statement, use the two states
$\rho_+$ and $\rho_-$ constructed in the proof of
\cref{cor:local-lower-bound}, and truncate instead at $12n$.  The resulting
protocol uses at most $12n$ copies and succeeds under
each state with probability at least $7/12$.  It therefore distinguishes the
two transcript laws with success probability at least $7/12$, so their total
variation distance is at least $1/6$.  Repeating the relative entropy
calculation in \cref{eq:local-two-state-relative-entropy} and applying
Pinsker's inequality gives
\[
  \frac1{18}
  \leq
  D(P_+\Vert P_-)
  \leq
  12n\cdot256\eps^2.
\]
Hence $n\gtrsim1/\eps^2$.  Because these dimensions form a fixed finite set,
this also implies the local rate
\[
  n
  \gtrsim
  \frac{d^3}{\eps^2\sqrt{\min\{k,d^2\}}}.
\]
\end{proof}

%% file: sections/anti_concentration.tex
\section{The Prior and the Mutual Information Lower Bound}
\label{sec:anti-concentration}

\subsection{The Truncated Gaussian Prior}
\label{sec:hard-prior}

Let $\Cprior\geq2$ be an absolute constant, to be fixed below, and let
$\tau=\Cprior\eps$.  Set
\begin{equation}
\label{eq:prior-parameters}
  \kappa=\frac{d^3}{2\tau^2}
\end{equation}
and
\begin{equation}
\label{eq:prior-support}
  K
  =
  \Set{
    \Delta\in\operatorname{Herm}_0(\C^d):
    \norm\Delta_{\op}\leq\frac{\Cprior\tau}{d}
  }.
\end{equation}
Let $\bar\mu$ be the centered Gaussian measure on
$\operatorname{Herm}_0(\C^d)$ with covariance $\kappa^{-1}I$, and let $\mu$
be its conditional law given $K$:
\begin{equation}
\label{eq:hard-prior}
  \dd\mu(\Delta)
  \propto
  \exp\Paren{-\frac\kappa2\norm\Delta_F^2}
  \one_K(\Delta)\dd\Delta.
\end{equation}
We use this distribution of perturbations around a fixed state $\rho_\star$;
the corresponding input is $\rho_\star+\Delta$.  For the main theorem,
$\rho_\star=I_d/d$.

The next proposition collects the properties of this prior used in the
information bounds.

\begin{proposition}[Properties of the truncated Gaussian prior]
\label{prop:prior-properties}
There is an absolute choice of $\Cprior\geq2$ such that, for all sufficiently
large $d$, the following statements hold.
\begin{enumerate}
\item The conditioning event has probability
  \begin{equation}
  \label{eq:prior-conditioning-probability}
    \bar\mu(K)\geq\frac12.
  \end{equation}
\item If $\eps\leq1/(2\Cprior^2)$, then every $\Delta\in K$ defines a state
  and
  \begin{equation}
  \label{eq:prior-spectral-bound}
    \frac{I_d}{d}+\Delta\succeq\frac{I_d}{2d}.
  \end{equation}
\item For every positive smooth function $f$ with $\E_\mu f=1$,
  \begin{equation}
  \label{eq:hard-prior-lsi}
    \Ent_\mu(f)
    \leq
    \frac{1}{2\kappa}
    \E_\mu\Brac{\frac{\norm{\nabla f}_2^2}{f}}
    =
    \frac{\tau^2}{d^3}
    \E_\mu\Brac{\frac{\norm{\nabla f}_2^2}{f}}.
  \end{equation}
\item Uniformly over $\Delta_0\in\operatorname{Herm}_0(\C^d)$,
  \begin{equation}
  \label{eq:prior-small-ball}
    -\log
    \mu\Set{\Delta:\norm{\Delta-\Delta_0}_1\leq2\eps}
    \gtrsim d^2.
  \end{equation}
\end{enumerate}
\end{proposition}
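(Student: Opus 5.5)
The plan is to prove the four items in order, fixing $\Cprior$ at the end so that the small-ball estimate in item 4 closes. Write $\Delta=Z/\sqrt\kappa=\sqrt2\,\tau Z/d^{3/2}$ under $\bar\mu$, with $Z$ standard Gaussian on $\operatorname{Herm}_0(\C^d)$. Then
\[
  \Delta\in K
  \iff
  \norm Z_{\op}\leq\frac{\Cprior}{\sqrt2}\sqrt d .
\]
\emph{Item 1.} Choose $\Cprior\geq8\sqrt2$. By \cref{cor:hermitian-gaussian-op-tail}, for all large $d$,
\[
  \bar\mu(K^c)\leq\Pr\Brac{\norm Z_{\op}\geq8\sqrt d}\leq\frac1{10},
\]
which gives \cref{eq:prior-conditioning-probability}.

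\emph{Item 2.} For $\Delta\in K$ we have $\norm\Delta_{\op}\leq\Cprior^2\eps/d\leq 1/(2d)$ when $\eps\leq1/(2\Cprior^2)$. Hence $I_d/d+\Delta\succeq I_d/(2d)$, and this matrix has trace one, so it is a state.

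\emph{Item 3.} Identify $\operatorname{Herm}_0(\C^d)$ with $\R^{d^2-1}$ through the orthonormal basis $F_b$. The set $K$ is a closed convex set (an operator-norm ball intersected with a subspace), and it has nonempty interior in that subspace. Applying \cref{cor:convex-gaussian-truncation} with $\kappa=d^3/(2\tau^2)$ gives \cref{eq:hard-prior-lsi} directly.

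\emph{Item 4.} This is the main work. The plan has three steps.

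First, by Anderson's inequality (\cref{fact:anderson}) applied to the symmetric convex set $\{\norm\cdot_1\leq2\eps\}$, followed by conditioning,
\[
  \mu\{\norm{\Delta-\Delta_0}_1\leq2\eps\}
  \leq
  \frac{\bar\mu\{\norm{\Delta-\Delta_0}_1\leq2\eps\}}{\bar\mu(K)}
  \leq
  2\,\bar\mu\{\norm\Delta_1\leq2\eps\}.
\]
In terms of $Z$, the event on the right is $\norm Z_1\leq \sqrt2\, d^{3/2}/\Cprior$.

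Second, lower-bound $\E\norm Z_1$ by $c_0d^{3/2}$. I would use \cref{eq:frobenius-trace-operator}, which gives
\[
  \norm Z_1\geq\frac{\norm Z_F^2}{\norm Z_{\op}}.
\]
On the event $\norm Z_F^2\geq(d^2-1)/2$ (which fails with probability at most $e^{-(d^2-1)/16}$ by \cref{fact:gaussian-norm-concentration}) intersected with $\norm Z_{\op}\leq8\sqrt d$ (probability at least $9/10$), this yields $\norm Z_1\gtrsim d^{3/2}$. Hence $\E\norm Z_1\geq c_0d^{3/2}$ for an absolute constant $c_0$.

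Third, apply concentration. By \cref{eq:trace-lipschitz}, $\norm Z_1$ is $\sqrt d$-Lipschitz in Frobenius distance. Choose $\Cprior$ large enough that $\sqrt2/\Cprior\leq c_0/2$. Then \cref{eq:gaussian-concentration} with $u=c_0d^{3/2}/2$ and $L=\sqrt d$ gives
\[
  \Pr\Brac{\norm Z_1\leq \frac{\sqrt2\,d^{3/2}}{\Cprior}}
  \leq
  \exp\Paren{-\frac{c_0^2d^2}{8}}.
\]
Combining with the factor $2$ from the first step proves \cref{eq:prior-small-ball} for large $d$.

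The main obstacle is that the constants must be consistent. $\Cprior$ must be large both for item 1 and for the small-ball step, and the latter requires $c_0$ to be absolute and independent of $\Cprior$. This holds because the bound on $\E\norm Z_1$ uses only the fixed operator-norm threshold $8\sqrt d$. So the order is: fix $c_0$ first, then take $\Cprior=\max\{8\sqrt2,\,2\sqrt2/c_0\}$.
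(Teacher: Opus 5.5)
Your proposal is correct and follows the paper's proof essentially step for step: the same net-based operator-norm tail for item 1, the same spectral argument for item 2, the Caffarelli-based log-Sobolev corollary for item 3, and for item 4 the same combination of Anderson's inequality, the factor $1/\bar\mu(K)\leq2$, the estimate $\E\norm Z_1\gtrsim d^{3/2}$ via $\norm Z_1\geq\norm Z_F^2/\norm Z_{\op}$, and $\sqrt d$-Lipschitz Gaussian concentration. The only difference is presentational: you inline the two Gaussian estimates that the paper isolates as \cref{lem:gaussian-hermitian-norms,lem:trace-norm-lower-tail}, and your choice $\Cprior=\max\{8\sqrt2,\,2\sqrt2/c_0\}$ is exactly the paper's pair of constraints $\Cprior/\sqrt2\geq C_0$ and $\sqrt2/\Cprior\leq c_0/2$ with $C_0=8$.
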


We fix this value of $\Cprior$ for the rest of the paper.  The last property,
together with Fano's inequality for metric balls, gives the information lower
bound used in the main proof.

\begin{corollary}[Mutual information required for tomography]
\label{cor:information-lower-bound}
For all sufficiently large $d$, let $\rho_\star$ be a fixed state such that
$\rho_\star+\Delta$ is a state for every $\Delta\in K$.  Let $\Delta\sim\mu$,
and let $Y$ be the transcript generated by the input $\rho_\star+\Delta$.  If,
for some $1/2\leq q\leq1$, an estimator $\widehat\rho(Y)$ satisfies
\[
  \Pr\Brac{
    \Dtr(\rho_\star+\Delta,\widehat\rho(Y))\leq\eps
  }
  \geq q,
\]
then
\begin{equation}
\label{eq:information-lower-bound}
  I(\Delta;Y)\gtrsim d^2.
\end{equation}
\end{corollary}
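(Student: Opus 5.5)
We will derive the bound from Fano's inequality for metric balls (\cref{lem:small-ball-fano}) combined with the uniform small-ball estimate \cref{eq:prior-small-ball} of \cref{prop:prior-properties}.

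Take the metric space $\mathsf X=\operatorname{Herm}_0(\C^d)$ with $\mathsf d(\Delta,\Delta')=\norm{\Delta-\Delta'}_1$, the random element $X=\Delta\sim\pi=\mu$, and the observation $Y$. Define the estimate
\[
  \widehat\Delta(Y)=\widehat\rho(Y)-\rho_\star-\frac{\Tr(\widehat\rho(Y)-\rho_\star)}{d}I_d .
\]
If $\widehat\rho(Y)$ is a state, the trace correction vanishes because both $\widehat\rho$ and $\rho_\star$ have unit trace. In general the correction only ensures that $\widehat\Delta\in\mathsf X$. The harmless route is to allow any Hermitian center: the proof of \cref{eq:prior-small-ball} via Anderson's inequality is uniform over translates, and one can first project $\widehat\rho$ to the closest state, at the cost of a factor $2$. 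So we may assume $\widehat\rho$ is a state.

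On the success event $\Dtr(\rho_\star+\Delta,\widehat\rho)\leq\eps$ we have
\[
  \norm{\Delta-\widehat\Delta}_1=\norm{(\rho_\star+\Delta)-\widehat\rho}_1\leq2\eps .
\]
Hence, with $\eta=2\eps$, the success probability in \cref{lem:small-ball-fano} is at least $q\geq1/2$. Item~4 of \cref{prop:prior-properties} gives an absolute $c>0$ such that
\[
  \sup_{\Delta_0}\mu\{\Delta:\norm{\Delta-\Delta_0}_1\leq2\eps\}\leq\beta:=e^{-cd^2},
\]
uniformly over $\Delta_0\in\operatorname{Herm}_0(\C^d)$. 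This holds for all sufficiently large $d$, and that is where the dimension restriction enters. The bound does not depend on $\rho_\star$, because $\mu$ is the law of the perturbation alone; the translation by $\rho_\star$ only matters for the hypothesis that each $\rho_\star+\Delta$ is a valid input state, which is assumed.

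Applying \cref{eq:small-ball-fano} gives
\[
  I(\Delta;Y)\geq q\,cd^2-\log2\geq \frac{c}{2}d^2-\log 2\gtrsim d^2
\]
once $d$ is large enough that $cd^2/4\geq\log2$. If the transcript has a general outcome space, Fano applies directly, since it only uses $I(\Delta;\widehat\Delta(Y))\leq I(\Delta;Y)$.

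The main obstacle is not this deduction but item~4 itself, which is assumed here from \cref{prop:prior-properties}. The only delicate point in the plan is to make sure the estimator's center lies in the metric space where the small-ball bound is stated, which is handled by the traceless projection (or closest-state projection) above.
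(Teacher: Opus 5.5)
Your proof is correct and is essentially the paper's argument: set $\widehat\Delta=\widehat\rho-\rho_\star$, observe that success means $\norm{\Delta-\widehat\Delta}_1\leq2\eps$, and combine \cref{lem:small-ball-fano} with the uniform small-ball bound \cref{eq:prior-small-ball} and $q\geq1/2$. The detour about non-state outputs is unnecessary, since \cref{prob:tomography-intro} requires $\widehat\rho$ to be a state; as written it is also slightly loose, because projecting to the closest state doubles the trace-norm radius to $4\eps$, which item~4 as stated does not cover without re-tuning $\Cprior$.
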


\begin{proof}
Let $\widehat\Delta=\widehat\rho-\rho_\star$.  Then
\[
  \Dtr(\rho_\star+\Delta,\widehat\rho)
  =\frac12\norm{\Delta-\widehat\Delta}_1.
\]
Let $\beta$ be the supremum of the prior masses of trace norm balls of radius
$2\eps$.  By \cref{eq:prior-small-ball},
$\log(1/\beta)\gtrsim d^2$.  Fano's inequality for metric balls
(\cref{lem:small-ball-fano}) therefore gives
\[
  I(\Delta;Y)
  \geq q\log\frac1\beta-\log2
  \gtrsim d^2.
\]
\end{proof}

It remains to prove \cref{prop:prior-properties}.  Write
$\Delta=Z/\sqrt\kappa$ before conditioning, where $Z$ is a standard Gaussian
on $\operatorname{Herm}_0(\C^d)$.  We use the following two consequences of
Gaussian concentration.

\begin{lemma}[Gaussian Hermitian norm estimates]
\label{lem:gaussian-hermitian-norms}
There are absolute constants $C_0,c_0>0$ and $d_0\in\N$ such that, for every
$d\geq d_0$ and every standard Gaussian $Z$ on
$\operatorname{Herm}_0(\C^d)$,
\begin{align}
  \Pr\Brac{\norm Z_{\op}\leq C_0\sqrt d}&\geq\frac{9}{10},
  \label{eq:gaussian-op-event}\\
  \Pr\Brac{\norm Z_F^2\geq\frac{d^2-1}{2}}&\geq\frac{9}{10},
  \label{eq:gaussian-frob-event}\\
  \E\norm Z_1&\geq c_0d^{3/2}.
  \label{eq:gaussian-trace-mean}
\end{align}
\end{lemma}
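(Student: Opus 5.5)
The plan is to prove the three estimates separately, using only the facts collected in \cref{sec:prelim-gaussian}.

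\emph{Operator norm event.} For \cref{eq:gaussian-op-event}, take $C_0=8$ and invoke \cref{cor:hermitian-gaussian-op-tail} directly. For all sufficiently large $d$ it gives $\Pr[\norm Z_{\op}\geq8\sqrt d]\leq1/10$, which is the claim.

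\emph{Frobenius event.} For \cref{eq:gaussian-frob-event}, note that $\norm Z_F^2$ is the squared Euclidean norm of a standard Gaussian on the $(d^2-1)$-dimensional space $\operatorname{Herm}_0(\C^d)$. So it is $\chi^2_{d^2-1}$, and the lower-tail bound in \cref{fact:gaussian-norm-concentration} with $m=d^2-1$ gives
\[
  \Pr\Brac{\norm Z_F^2\leq\tfrac{d^2-1}{2}}\leq e^{-(d^2-1)/16}.
\]
This is at most $1/10$ once $d^2-1\geq16\log10$. We then enlarge $d_0$ so that both this condition and the one from the first step hold.

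\emph{Trace-norm mean.} For \cref{eq:gaussian-trace-mean}, we combine the two events already established with the Schatten comparison \cref{eq:frobenius-trace-operator}. That comparison gives
\[
  \norm Z_1\geq\frac{\norm Z_F^2}{\norm Z_{\op}}.
\]
By a union bound, both events hold simultaneously with probability at least $8/10$. On that intersection,
\[
  \norm Z_1\geq\frac{(d^2-1)/2}{C_0\sqrt d}\geq\frac{d^{3/2}}{4C_0},
\]
where the last step uses $d^2-1\geq d^2/2$ for $d\geq2$. Since $\norm Z_1\geq0$, we get
\[
  \E\norm Z_1\geq\frac8{10}\cdot\frac{d^{3/2}}{4C_0},
\]
so $c_0=1/(5C_0)$ works.

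\emph{Main obstacle.} I expect no real difficulty. The only care needed is to choose $d_0$ large enough for both tail bounds at once, and to remember that the space has dimension $d^2-1$ rather than $d^2$, which costs only an absolute constant. The nontrivial content, the operator-norm tail, is already provided by \cref{cor:hermitian-gaussian-op-tail}.
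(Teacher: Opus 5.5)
Your proposal is correct and follows the paper's proof essentially step for step. It takes $C_0=8$ from the net-based operator-norm tail, uses the $\chi^2$ lower tail for the Frobenius event, and applies $\norm Z_1\geq\norm Z_F^2/\norm Z_{\op}$ on the intersection of the two events, which has probability at least $4/5$, before taking expectations; your explicit $c_0=1/(5C_0)$ simply makes the paper's $\gtrsim d^{3/2}$ concrete.
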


\begin{lemma}[Lower tail of the Gaussian trace norm]
\label{lem:trace-norm-lower-tail}
Let $c_0,d_0$ be as in \cref{lem:gaussian-hermitian-norms}.  There is an
absolute constant $a_0>0$ such that, for every $d\geq d_0$ and every standard
Gaussian $Z$ on $\operatorname{Herm}_0(\C^d)$,
\begin{equation}
\label{eq:trace-norm-lower-tail}
  \Pr\Brac{
    \norm Z_1\leq\frac{c_0}{2}d^{3/2}
  }
  \leq e^{-a_0d^2}.
\end{equation}
\end{lemma}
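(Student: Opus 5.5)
The statement to prove is \cref{lem:trace-norm-lower-tail}. The plan is to apply the Gaussian concentration inequality (\cref{fact:gaussian-norm-concentration}, \cref{eq:gaussian-concentration}) to the function $f(Z)=\norm Z_1$ on the real Hilbert space $\operatorname{Herm}_0(\C^d)$. The space has dimension $m=d^2-1$ and carries the Frobenius inner product. In the coordinates of an orthonormal basis $F_1,\ldots,F_{d^2-1}$, the standard Gaussian $Z$ is exactly a standard Gaussian vector in $\R^{d^2-1}$, and Frobenius distance is Euclidean distance. So the concentration inequality applies directly once $f$ is shown to be Lipschitz.

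\textbf{Step 1: Lipschitz constant.} By \cref{eq:trace-lipschitz} in \cref{fact:schatten-comparisons}, for Hermitian $H,G$,
\[
  \abs{\norm H_1-\norm G_1}\leq\norm{H-G}_1\leq\sqrt d\,\norm{H-G}_F .
\]
Hence $f$ is $\sqrt d$-Lipschitz with respect to the Euclidean structure, so we take $L=\sqrt d$. The function $f$ is also integrable, since it is at most $\sqrt d\norm Z_F$.

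\textbf{Step 2: Mean.} For $d\geq d_0$, \cref{eq:gaussian-trace-mean} in \cref{lem:gaussian-hermitian-norms} gives $\E f(Z)\geq c_0d^{3/2}$.

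\textbf{Step 3: Concentration.} The event $\norm Z_1\leq\frac{c_0}{2}d^{3/2}$ is contained in the event $f(Z)\leq\E f(Z)-u$ with $u=\frac{c_0}{2}d^{3/2}\geq0$. Applying \cref{eq:gaussian-concentration} gives
\[
  \Pr\Brac{\norm Z_1\leq\tfrac{c_0}{2}d^{3/2}}
  \leq\exp\Paren{-\frac{c_0^2d^3/4}{2d}}
  =\exp\Paren{-\frac{c_0^2}{8}d^2}.
\]
This is the claimed bound with $a_0=c_0^2/8$.

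All three steps are direct applications of earlier results. The one point to check is the identification of the Frobenius geometry on $\operatorname{Herm}_0(\C^d)$ with the standard Euclidean space on which the concentration fact is stated, and this is immediate from the orthonormal basis. The substantive input, the mean lower bound $\E\norm Z_1\gtrsim d^{3/2}$, is already supplied by \cref{lem:gaussian-hermitian-norms}, so there is no real obstacle.
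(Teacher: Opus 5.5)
Your proof is correct and follows the paper's argument: the same $\sqrt d$-Lipschitz bound from \cref{fact:schatten-comparisons}, the mean bound \cref{eq:gaussian-trace-mean}, and Gaussian concentration with deviation $(c_0/2)d^{3/2}$. The explicit constant $a_0=c_0^2/8$ you obtain is exactly what that computation gives.
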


\begin{proof}[Proof of \cref{prop:prior-properties}]
Choose $\Cprior\geq2$ once so that
\[
  \frac{\Cprior}{\sqrt2}\geq C_0,
  \qquad
  \frac{\sqrt2}{\Cprior}\leq\frac{c_0}{2}.
\]
The event $\Delta\in K$ is
\[
  \norm Z_{\op}
  \leq
  \sqrt\kappa\frac{\Cprior\tau}{d}
  =
  \frac{\Cprior}{\sqrt2}\sqrt d.
\]
Thus \cref{eq:gaussian-op-event} gives
$\bar\mu(K)\geq9/10$, which proves \cref{eq:prior-conditioning-probability}.

If $\eps\leq1/(2\Cprior^2)$, then
$\|\Delta\|_{\op}\leq\Cprior^2\eps/d\leq1/(2d)$ on $K$.
Since $\Tr\Delta=0$, the matrix $I_d/d+\Delta$ has trace one, while
\[
  \frac{I_d}{d}+\Delta
  \succeq
  \Paren{\frac1d-\norm\Delta_{\op}}I_d
  \succeq
  \frac{I_d}{2d}.
\]
Thus it is a state and satisfies \cref{eq:prior-spectral-bound}.

The set $K$ is closed and convex and has nonempty interior.  The log-Sobolev
inequality under convex conditioning
(\cref{cor:convex-gaussian-truncation}), with
$\kappa=d^3/(2\tau^2)$, gives \cref{eq:hard-prior-lsi}.

It remains to prove the uniform bound on trace norm balls.  Since
$\tau=\Cprior\eps$,
\begin{equation}
\label{eq:scaled-small-ball-radius}
  2\eps\sqrt\kappa
  =
  \frac{\sqrt2}{\Cprior}d^{3/2}
  \leq
  \frac{c_0}{2}d^{3/2}.
\end{equation}
The set $\{H:\norm H_1\leq\eta\}$ is symmetric and convex.  Anderson's
inequality (\cref{fact:anderson}) and
\cref{lem:trace-norm-lower-tail} therefore give, for every $\Delta_0$,
\begin{align*}
  \bar\mu\Set{\norm{\Delta-\Delta_0}_1\leq2\eps}
  &=
  \Pr\Brac{\norm{Z-\sqrt\kappa\Delta_0}_1
    \leq2\eps\sqrt\kappa}\\
  &\leq
  \Pr\Brac{\norm Z_1\leq2\eps\sqrt\kappa}\\
  &\leq e^{-a_0d^2}.
\end{align*}
Conditioning on $K$ and using $\bar\mu(K)\geq1/2$ now yields
\[
  \mu\Set{\norm{\Delta-\Delta_0}_1\leq2\eps}
  \leq2e^{-a_0d^2}.
\]
Taking the negative logarithm proves \cref{eq:prior-small-ball}.
\end{proof}

\begin{proof}[Proof of \cref{lem:gaussian-hermitian-norms}]
The elementary net bound in \cref{cor:hermitian-gaussian-op-tail} gives
\cref{eq:gaussian-op-event} with $C_0=8$ after increasing $d_0$ if needed.

For the Frobenius norm estimate, the chi-square lower tail bound in
\cref{fact:gaussian-norm-concentration}, applied to the
$(d^2-1)$-dimensional Gaussian space $\operatorname{Herm}_0(\C^d)$, gives
\[
  \Pr\Brac{
    \norm Z_F^2<\frac{d^2-1}{2}
  }
  \leq e^{-(d^2-1)/16},
\]
which is at most $1/10$ for all sufficiently large $d$.

The events in \cref{eq:gaussian-op-event,eq:gaussian-frob-event} intersect
with probability at least $4/5$.  On their intersection, the Schatten norm
comparison in \cref{fact:schatten-comparisons} gives
\[
  \norm Z_1
  \geq
  \frac{\norm Z_F^2}{\norm Z_{\op}}
  \geq
  \frac{d^2-1}{2C_0\sqrt d}
  \gtrsim d^{3/2}.
\]
Taking expectations proves \cref{eq:gaussian-trace-mean} for an absolute
$c_0>0$ after increasing $d_0$ if needed.
\end{proof}

\begin{proof}[Proof of \cref{lem:trace-norm-lower-tail}]
By \cref{fact:schatten-comparisons}, the function $f(Z)=\norm Z_1$ is
$\sqrt d$-Lipschitz with respect to Frobenius distance.  Apply the Gaussian
concentration bound in \cref{fact:gaussian-norm-concentration} with deviation
$(c_0/2)d^{3/2}$ and use \cref{eq:gaussian-trace-mean}.  For an absolute
constant $a_0>0$, this gives
\[
  \Pr\Brac{
    \norm Z_1\leq\frac{c_0}{2}d^{3/2}
  }
  \leq e^{-a_0d^2}.
\]
\end{proof}

%% file: sections/global_information.tex
\section{Adaptive Protocols and Mutual Information}
\label{sec:global-information}

This section derives the information upper bound for an adaptive protocol
from the Fisher information bound for one block.  Fisher information is useful
here because it has a chain rule at each fixed state.  By contrast, the
conditional law of $\Delta$ changes after every outcome, so a mutual
information bound for one block under the original prior cannot simply be
reused after conditioning on the transcript.

\begin{proposition}[Information bound for an adaptive protocol]
\label{prop:information-upper-bound}
Let $\rho\succeq I_d/(4d)$, and let $Y$ be the transcript of a protocol in
\cref{def:block-protocol-intro} that uses at most $n$ copies almost surely.
Then
\begin{align}
\label{eq:adaptive-block-fi}
  \Tr J_Y(\rho)
  \lesssim
  d^2\E_\rho\Brac{
    \sum_{j=1}^{T}t_j\sqrt{\min\{t_j,d^2\}}
  }
  \leq
  d^2n\sqrt{\min\{k,d^2\}}.
\end{align}

Now let $\rho_\star$ be a fixed state such that
\[
  \rho_\star+\Delta\succeq\frac{I_d}{4d}
  \qquad\text{for every $\Delta\in K$},
\]
let $\Delta\sim\mu$, and run the protocol on $\rho_\star+\Delta$.  Then
\begin{equation}
\label{eq:mi-from-fi}
  I(\Delta;Y)
  \leq
  \frac{\tau^2}{d^3}
  \E_{\Delta\sim\mu}\Tr J_Y(\rho_\star+\Delta),
\end{equation}
and consequently
\begin{align}
\label{eq:information-upper-bound}
  I(\Delta;Y)
  \lesssim
  \frac{\tau^2}{d}
  \E_{\Delta,Y}\Brac{
    \sum_{j=1}^{T}t_j\sqrt{\min\{t_j,d^2\}}
  }
  &\leq
  \frac{\tau^2n}{d}\sqrt{\min\{k,d^2\}}.
\end{align}
\end{proposition}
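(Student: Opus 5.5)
The plan is to prove the three displays in order, using the block Fisher information bound (\cref{thm:block-fi}) as the single quantum input, even though its proof appears later.

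\textbf{Step 1: the adaptive bound \cref{eq:adaptive-block-fi}.}
Fix $\rho\succeq I_d/(4d)$. The Fisher information chain rule (\cref{fact:adaptive-fi-composition}) writes $\Tr J_Y(\rho)$ as $\E_\rho\sum_{j\le T}\Tr J_{M_j}(\rho)$, where $M_j$ is the POVM chosen from $(\xi,Y_{<j})$ on $t_j$ copies. For each round I apply \cref{thm:block-fi} with $\lambda_{\min}(\rho)\ge 1/(4d)$. This gives
\[
  \Tr J_{M_j}(\rho)\le\min\Set{32d^2t_j^{3/2},\,4d(d^2-1)t_j}\lesssim d^2t_j\sqrt{\min\{t_j,d^2\}}.
\]
For a general outcome space I first pass through \cref{eq:general-outcome-fi-supremum} and the rank-one splitting in \cref{fact:povm-effect-decomposition}, since the theorem is stated for finite POVMs. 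The second inequality in \cref{eq:adaptive-block-fi} is pointwise: $\sqrt{\min\{t_j,d^2\}}\le\sqrt{\min\{k,d^2\}}$ and $\sum_j t_j\le n$ almost surely.

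\textbf{Step 2: the mutual information bound \cref{eq:mi-from-fi}.}
I apply \cref{fact:lsi-to-mutual-information} with $X=\Delta$, $\mu$ the prior, and $C_\mu=\tau^2/d^3$ from \cref{eq:hard-prior-lsi}. That fact needs a finite-valued observation, so I take $A=\phi(Y)$ for an arbitrary measurable $\phi$ with finite range and then take the supremum using \cref{fact:mutual-information-general-outcomes}. Data processing (\cref{fact:fi-data-processing}) gives $\Tr J_{\phi(Y)}\le\Tr J_Y$ at every state, so the bound passes to $Y$.

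The point needing care is the regularity hypothesis: $p_a(\Delta)=\Pr[\phi(Y)=a\mid\Delta]$ must be positive and smooth on the prior's support, with gradients in the coordinates $\theta$. Smoothness holds because the transcript has at most $n$ rounds, each with likelihood polynomial in the entries of $\rho_\star+\Delta$. After appending dummy rounds as in the chain rule proof, $p_a$ is an integral over histories of products of polynomials of bounded degree, with scores uniformly bounded because $\rho_\star+\Delta\succeq I_d/(4d)$. This justifies differentiating under the integral.

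Positivity can fail. If $p_a$ vanishes somewhere with $\bar p_a>0$, I perturb the protocol: with probability $\delta$, independent of everything else, replace the output by a uniform element of $\mathcal A$. This makes every $p_a$ positive. Its Fisher information is at most that of $(Y,\text{coin})$, which equals that of $Y$. Letting $\delta\to0$, lower semicontinuity of mutual information (or convexity) recovers $I(\Delta;\phi(Y))$. Alternatively, $\mu$ is supported on a compact set where the likelihoods are continuous, and the log-Sobolev inequality extends to functions vanishing on null sets by approximation. I expect this step to be the main technical obstacle, but it is routine.

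\textbf{Step 3: combining.}
Every $\rho_\star+\Delta$ with $\Delta\in K$ satisfies $\rho_\star+\Delta\succeq I_d/(4d)$, so Step 1 applies pointwise in $\Delta$. Averaging over $\Delta\sim\mu$ turns $\E_\rho$ into $\E_{\Delta,Y}$. Substituting into \cref{eq:mi-from-fi} gives
\[
  I(\Delta;Y)\lesssim\frac{\tau^2}{d^3}\cdot d^2\,\E_{\Delta,Y}\Brac{\sum_j t_j\sqrt{\min\{t_j,d^2\}}}.
\]
Bounding the sum by $n\sqrt{\min\{k,d^2\}}$ as in Step 1 yields \cref{eq:information-upper-bound}.
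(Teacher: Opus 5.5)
Your proposal is correct and follows the paper's proof step for step. Both arguments use the chain rule with \cref{thm:block-fi} at each fixed state, then \cref{fact:lsi-to-mutual-information} applied to a finite-range $\phi(Y)$ with data processing and a supremum over $\phi$, then averaging over $\mu$.

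One point is mistaken: your claim that positivity can fail. The protocol together with $\phi$ induces a finite POVM $\{E_a\}$ on $n$ copies. Thus $p_a(\Delta)=\Tr(E_a(\rho_\star+\Delta)^{\otimes n})$ is a polynomial in $\Delta$. It is strictly positive for every nonzero $E_a$, because $\rho_\star+\Delta$ has full rank. So once zero effects are discarded, as the paper does, your $\delta$-mixing detour and the lower-semicontinuity limit are valid but unnecessary.
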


\begin{proof}
At a fixed state $\rho$, the Fisher information chain rule for an adaptive
transcript (\cref{fact:adaptive-fi-composition}) and the block bound
(\cref{thm:block-fi}) give
\begin{align*}
  \Tr J_Y(\rho)
  &\lesssim
  d^2\E_\rho\Brac{
    \sum_{j=1}^{T}t_j\sqrt{\min\{t_j,d^2\}}
  }\\
  &\leq
  d^2\sqrt{\min\{k,d^2\}}
  \E_\rho\Brac{\sum_{j=1}^{T}t_j}\\
  &\leq d^2n\sqrt{\min\{k,d^2\}},
\end{align*}
which proves \cref{eq:adaptive-block-fi}.

For the mutual information comparison, let $\phi$ be a measurable function of
the transcript with finite range, and set $A=\phi(Y)$.  The protocol and
$\phi$ induce a finite POVM $\{E_a\}$ on $n$ copies; a branch that stops early
acts as the identity on its unused copies.  The outcome probabilities are
\[
  p_a(\Delta)
  =
  \Pr[A=a\mid\Delta]
  =
  \Tr\Paren{E_a(\rho_\star+\Delta)^{\otimes n}}.
\]
These probabilities are smooth in $\Delta$.  After zero effects are discarded,
they are positive because every
$\rho_\star+\Delta$ has full rank.  The mutual information bound from a
log-Sobolev inequality (\cref{fact:lsi-to-mutual-information}) and
\cref{eq:hard-prior-lsi} therefore give
\[
  I(\Delta;A)
  \leq
  \frac{\tau^2}{d^3}
  \E_{\Delta\sim\mu}\Tr J_A(\rho_\star+\Delta)
  \leq
  \frac{\tau^2}{d^3}
  \E_{\Delta\sim\mu}\Tr J_Y(\rho_\star+\Delta).
\]
The second inequality is Fisher information data processing
(\cref{fact:fi-data-processing}).  Taking the supremum over $\phi$ as in
\cref{fact:mutual-information-general-outcomes} proves
\cref{eq:mi-from-fi}.  Finally, average \cref{eq:adaptive-block-fi} over
$\Delta$ and substitute it into \cref{eq:mi-from-fi}; this gives the first
bound in \cref{eq:information-upper-bound}.  The second follows from
$t_j\leq k$ and $\sum_{j=1}^{T}t_j\leq n$.
\end{proof}

%% file: sections/block_fisher_information.tex
\section{Fisher Information of a Block Measurement}
\label{sec:block-fi}

\begin{theorem}[Fisher information bound for a block measurement]
\label{thm:block-fi}
Let $d\geq2$, let $t\geq1$, and let $\rho$ be a state of full rank.  Every
POVM $M$ on $t$ copies satisfies
\begin{equation}
\label{eq:block-fi-eigenvalue}
  \Tr J_M(\rho)
  \leq
  \min\Set{
    \frac{2t^{3/2}}{\lambda_{\min}(\rho)^2},
    \frac{t(d^2-1)}{\lambda_{\min}(\rho)}
  }.
\end{equation}
In particular, if
\begin{equation}
\label{eq:block-conditioning}
  \rho\succeq\frac{I_d}{4d},
\end{equation}
then
\begin{equation}
\label{eq:block-fi-main}
  \Tr J_M(\rho)
  \lesssim
  \min\Set{d^2t^{3/2},d^3t}
  =
  d^2t\sqrt{\min\{t,d^2\}}.
\end{equation}
\end{theorem}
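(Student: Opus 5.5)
The bound in \cref{eq:block-fi-eigenvalue} is a minimum of two terms, and I will prove each separately. The second term, $t(d^2-1)/\lambda_{\min}(\rho)$, is exactly \cref{cor:qfi-cap}, so nothing further is needed there. For the first term I will reduce to finite POVMs with rank-one effects. A general standard Borel outcome space reduces to finite ones through \cref{eq:general-outcome-fi-supremum}. A finite POVM is then refined into rank-one effects $w_z\proj{v_z}$ by \cref{fact:povm-effect-decomposition}, and this refinement can only increase $\Tr J_M(\rho)$. It therefore suffices to bound $\Tr J_M(\rho)$ for a finite rank-one POVM.

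For such a POVM I will apply \cref{lem:block-score-ensemble} with $u_z$ and $p_z$ as defined there, and set $B_z=G_1(u_z)-t\rho$, so that $s_b(z)=\Tr(H_bB_z)=\ip{F_b}{\rho^{-1/2}B_z\rho^{-1/2}}$. Parseval (\cref{fact:traceless-parseval}) and the Schatten comparison (\cref{fact:schatten-comparisons}) with $\norm{\rho^{-1/2}}_{\op}^2=1/\lambda_{\min}(\rho)$ give
\[
  \norm{s(z)}_2^2\leq\frac{\norm{B_z}_F^2}{\lambda_{\min}(\rho)^2}.
\]
Hence $\Tr J_M(\rho)=\sum_zp_z\norm{s(z)}_2^2\leq\lambda_{\min}(\rho)^{-2}\sum_zp_z\norm{B_z}_F^2$. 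Outcomes with $p_z=0$ do not contribute, and since $\rho$ is full rank every $q_z>0$, so $u_z$ is well defined.

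The key step is to bound the centered second moment $\sum_zp_z\norm{B_z}_F^2$. By the second identity in \cref{eq:block-ensemble-identities}, the mean of $G_1(u_z)$ under $p$ is $t\rho$. Expanding the square therefore gives
\[
  \sum_zp_z\norm{B_z}_F^2=\sum_zp_z\norm{G_1(u_z)}_F^2-t^2\Tr(\rho^2).
\]
I will bound each $\norm{G_1(u_z)}_F^2$ by \cref{fact:cll-unfolded-bound}. That bound is linear in the weights $\norm{\Pi_\lambda u_z}_2^2$, so averaging over $z$ and applying \cref{fact:ensemble-schur-projections} (using the first identity in \cref{eq:block-ensemble-identities}) replaces these weights by Schur--Weyl probabilities. \Cref{fact:sw-second-moment} then bounds the resulting average by $t(t-1)\Tr(\rho^2)+2t^{3/2}$. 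Subtracting $t^2\Tr(\rho^2)$ leaves at most $2t^{3/2}-t\Tr(\rho^2)\leq2t^{3/2}$, which gives the first term of \cref{eq:block-fi-eigenvalue}.

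For \cref{eq:block-fi-main}, substitute $\lambda_{\min}(\rho)\geq1/(4d)$. This gives $\Tr J_M(\rho)\leq\min\{32d^2t^{3/2},\,4td(d^2-1)\}\lesssim\min\{d^2t^{3/2},d^3t\}=d^2t\sqrt{\min\{t,d^2\}}$.

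The main obstacle is the centering step, and it is resolved by the identity $\sum_zp_zG_1(u_z)=t\rho$ from POVM completeness. Without centering, the cruder bound $\norm{G_1}_F^2\leq t^2$ would lose a factor of $\sqrt t$. The remaining points are routine: $p$ must be a probability vector, which follows from completeness, and the finite-outcome reduction must be justified, which the cited facts already do.
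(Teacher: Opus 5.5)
Your proposal is correct and follows the paper's proof essentially step for step. The common route is: \cref{cor:qfi-cap} for the second term; reduction to finite rank-one POVMs; the score formula with Parseval and the Schatten comparison; and the centered bound $\sum_zp_z\norm{G_1(u_z)-t\rho}_F^2\leq2t^{3/2}$ obtained from the CLL unfolded-matrix bound, the Schur projection weights, and the Schur--Weyl moment estimate, with the constants $32d^2t^{3/2}$ and $4td(d^2-1)$ under $\rho\succeq I_d/(4d)$ correctly computed.
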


The second bound in \cref{eq:block-fi-eigenvalue} is the standard consequence
of the symmetric logarithmic derivative in \cref{cor:qfi-cap}.  The first
bound for arbitrary $t$ and arbitrary $\rho$ is the new ingredient.  When
$\lambda_{\min}(\rho)$ is of order $1/d$, the first bound improves the
standard estimate for $t\leq d^2$ and yields the dependence on the block size
needed for the main lower bound.

The proof uses the following estimate.  The Schur--Weyl moment bound contains
the term $t(t-1)\Tr(\rho^2)$.  Centering $G_1(u_z)$ at its mean $t\rho$
subtracts $t^2\Tr(\rho^2)$, so the two contributions involving
$\Tr(\rho^2)$ sum to $-t\Tr(\rho^2)\leq0$.

\begin{lemma}[Average squared deviation of the unfolded matrix]
\label{lem:centered-average-g1}
Let $p_z$ and $u_z$ be the outcome probabilities and vectors from
\cref{lem:block-score-ensemble}.  Then
\begin{equation}
\label{eq:centered-average-g1}
  \sum_zp_z\norm{G_1(u_z)-t\rho}_F^2
  \leq
  2t^{3/2}.
\end{equation}
\end{lemma}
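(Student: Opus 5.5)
The plan is to expand the square, average each term using the ensemble identities of \cref{lem:block-score-ensemble}, and then control the only nonlinear term with the two Schur--Weyl estimates from \cite{CLL24}. Concretely, expanding the Frobenius norm gives
\[
  \sum_zp_z\norm{G_1(u_z)-t\rho}_F^2
  =
  \sum_zp_z\norm{G_1(u_z)}_F^2
  -2t\Tr\Bigl(\rho\sum_zp_zG_1(u_z)\Bigr)
  +t^2\Tr(\rho^2)\sum_zp_z .
\]
By the second identity in \cref{eq:block-ensemble-identities}, $\sum_zp_zG_1(u_z)=t\rho$. Since the $p_z$ are outcome probabilities, $\sum_zp_z=1$. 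The cross term is therefore $-2t^2\Tr(\rho^2)$, and the last two terms combine to $-t^2\Tr(\rho^2)$. So it suffices to show
\[
  \sum_zp_z\norm{G_1(u_z)}_F^2 \leq t(t-1)\Tr(\rho^2)+2t^{3/2}.
\]
Once this holds, the total is at most $2t^{3/2}-t\Tr(\rho^2)\leq 2t^{3/2}$.

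For the uncentered second moment, I would apply \cref{fact:cll-unfolded-bound} to each unit vector $u_z$. This bounds $\norm{G_1(u_z)}_F^2$ by a quantity that is linear in the weights $\norm{\Pi_\lambda u_z}_2^2$. Averaging with weights $p_z$ and exchanging the finite sums, the first identity in \cref{eq:block-ensemble-identities} together with \cref{fact:ensemble-schur-projections} replaces each averaged weight by $\Tr(\Pi_\lambda\rho^{\otimes t})$, the Schur--Weyl probability of $\lambda$. The resulting quantity is exactly $\E_{\lambda\sim\SW^t(\operatorname{spec}\rho)}[\sum_i\lambda_i^2]$. By \cref{fact:sw-second-moment} with $r=\operatorname{spec}\rho$, this expectation is at most $t(t-1)\Tr(\rho^2)+2t^{3/2}$.

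The main point to check is that every step is an exact identity or a direct application of a stated fact, so I expect no serious obstacle. A few small points need care. Outcomes with $p_z=0$ (if any, after discarding zero effects) contribute nothing, so $u_z$ need only be defined when $q_z>0$, which holds automatically because $\rho$ is full rank. The sum over partitions should be restricted to $\ell(\lambda)\leq d$ consistently in both facts. Finally, $\Tr(\rho^2)=\sum_ir_i^2$ should be identified with the term in \cref{fact:sw-second-moment}.
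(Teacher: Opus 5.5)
Your proposal is correct and follows the paper's proof: you bound $\sum_z p_z\norm{G_1(u_z)}_F^2$ by averaging \cref{fact:cll-unfolded-bound} against the Schur weights from \cref{fact:ensemble-schur-projections} and applying \cref{fact:sw-second-moment}, and you expand the square around the mean $t\rho$, which leaves only the nonpositive remainder $-t\Tr(\rho^2)$. The only difference is that the paper bounds the uncentered moment first and then centers, while you center first.
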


We first show how this lemma proves the theorem.

\begin{proof}[Proof of \cref{thm:block-fi}]
The symmetric logarithmic derivative bound in \cref{cor:qfi-cap} gives
\[
  \Tr J_M(\rho)
  \leq
  \frac{t(d^2-1)}{\lambda_{\min}(\rho)}.
\]
For the first term in the upper bound, first suppose that $M$ is finite.  Splitting its effects
into rank-one operators as in \cref{fact:povm-effect-decomposition} can only
increase Fisher information, so it is enough to consider effects of the form
$w_z\proj{v_z}$.  By the formula for the score of each outcome in
\cref{eq:block-score-centered}, Parseval's identity for traceless Hermitian
matrices (\cref{fact:traceless-parseval}), and the Schatten norm comparison
in \cref{fact:schatten-comparisons},
\begin{align}
  \sum_{b=1}^{d^2-1}s_b(z)^2
  &\leq
  \norm{
    \rho^{-1/2}\bigl(G_1(u_z)-t\rho\bigr)\rho^{-1/2}
  }_F^2
  \notag\\
  &\leq
  \frac1{\lambda_{\min}(\rho)^2}
  \norm{G_1(u_z)-t\rho}_F^2.
\label{eq:centered-score-parseval}
\end{align}
Averaging over the outcomes and applying
\cref{lem:centered-average-g1} gives
\[
  \Tr J_M(\rho)
  \leq
  \frac{2t^{3/2}}{\lambda_{\min}(\rho)^2}.
\]
Now let $M$ have an arbitrary measurable outcome space, and let $Y$ be its
outcome.  For every measurable map $\phi$ with finite range, $\phi(Y)$ is the
outcome of the finite POVM with effects $M(\phi^{-1}(z))$.  The bound just
proved therefore applies to $J_{\phi(Y)}(\rho)$.  Taking the supremum over
$\phi$ in \cref{eq:general-outcome-fi-supremum} proves the same bound for
$J_M(\rho)$.
This establishes \cref{eq:block-fi-eigenvalue}.  Under
\cref{eq:block-conditioning}, its two terms are at most constant multiples
of $d^2t^{3/2}$ and $d^3t$, respectively, which proves
\cref{eq:block-fi-main}.
\end{proof}

It remains to prove the centered average bound.  A direct pointwise estimate
would give $\norm{G_1(u)}_F^2\leq t^2$, which is too large by a factor of
$\sqrt t$.  Instead, the bound on the unfolded matrix from \cite{CLL24} is
linear in the squared projections onto the Schur subspaces.  The relation
$\sum_zp_z\proj{u_z}=\rho^{\otimes t}$ determines the average squared
projection onto each Schur subspace.

\begin{proof}[Proof of \cref{lem:centered-average-g1}]
Averaging the bound on the unfolded matrix
(\cref{fact:cll-unfolded-bound}) and the Schur projection weights for the
pure-state decomposition (\cref{fact:ensemble-schur-projections}) gives
\begin{align}
  \sum_zp_z\norm{G_1(u_z)}_F^2
  &\leq
  \sum_{\substack{\lambda\vdash t\\\ell(\lambda)\leq d}}
  \Paren{\sum_zp_z\norm{\Pi_\lambda u_z}_2^2}
  \sum_i\lambda_i^2
  \notag\\
  &=
  \E_{\lambda\sim\SW^t(\operatorname{spec}\rho)}
  \Brac{\sum_i\lambda_i^2}
  \notag\\
  &\leq
  t(t-1)\Tr(\rho^2)+2t^{3/2}.
\label{eq:uncentered-g1-sharp}
\end{align}
The last line is the sharper form of the Schur--Weyl moment estimate from
\cite[full version, proof of Claim~3.27]{CLL24}, stated in
\cref{fact:sw-second-moment}.

The relation $\sum_zp_zG_1(u_z)=t\rho$ in
\cref{eq:block-ensemble-identities} shows that $G_1(u_z)$ has mean $t\rho$.
Expanding the square around this mean gives
\begin{align*}
  \sum_zp_z\norm{G_1(u_z)-t\rho}_F^2
  &=
  \sum_zp_z\norm{G_1(u_z)}_F^2-t^2\Tr(\rho^2)\\
  &\leq
  2t^{3/2}-t\Tr(\rho^2)\\
  &\leq
  2t^{3/2}.
\end{align*}
This proves \cref{eq:centered-average-g1}.
\end{proof}